\documentclass{amsart}
\usepackage{fullpage}
\makeatletter

\RequirePackage{amsmath,amssymb,amsthm}
\RequirePackage{etoolbox}
\RequirePackage{aliascnt}%
\RequirePackage{tikz}%
\AtEndPreamble{%
  \RequirePackage[capitalize,noabbrev]{cleveref}%
  \crefname{theorem}{Theorem}{Theorems}%
  \crefname{lemma}{Lemma}{Lemmas}%
  \crefname{proposition}{Proposition}{Propositions}%
  \crefname{corollary}{Corollary}{Corollaries}%
  \crefname{definition}{Definition}{Definitions}%
  \crefname{fact}{Fact}{Facts}%
  \crefname{remark}{Remark}{Remarks}%
  \crefname{example}{Example}{Examples}%
  \crefname{figure}{Figure}{Figures}%
  \newtheorem{theorem}{Theorem}%
  \crefname{introthm}{\textquotedblleft Theorem\textquotedblright}{\textquotedblleft Theorems\textquotedblright}%
  \newtheorem{introthm}{\textquotedblleft Theorem\textquotedblright}%
  \newaliascnt{lemma}{theorem}\newtheorem{lemma}[lemma]{Lemma}\aliascntresetthe{lemma}%
  \newaliascnt{proposition}{theorem}\newtheorem{proposition}[proposition]{Proposition}\aliascntresetthe{proposition}%
  \newaliascnt{corollary}{theorem}\newtheorem{corollary}[corollary]{Corollary}\aliascntresetthe{corollary}%
  \newaliascnt{fact}{theorem}\aliascntresetthe{fact}%
  \theoremstyle{definition}%
  \newaliascnt{definition}{theorem}\newtheorem{definition}[definition]{Definition}\aliascntresetthe{definition}%
  \newaliascnt{example}{theorem}\aliascntresetthe{example}%
  \theoremstyle{remark}%
  \newaliascnt{remark}{theorem}\newtheorem{remark}[remark]{Remark}\aliascntresetthe{remark}%
}

\newcommand{\sched}{H}            %
\newcommand{\comp}{G}             %
\newcommand{\evol}{U}             %
\newcommand{\devprod}{W}          %
\newcommand{\adiab}{V}            %

\newcommand{\bits}{k}             %
\newcommand{\sparsity}{\sigma}    %
\newcommand{\width}{w}            %
\newcommand{\pathlen}{\ell}       %
\newcommand{\gauge}{\varphi}      %

\newcommand{\err}{\eta}           %
\newcommand{\errend}{\eta'}       %
\newcommand{\adiaberr}{\varepsilon}%
\newcommand{\gap}{\Gamma}         %
\newcommand{\lip}{L}              %
\newcommand{\taylor}{\nu}         %

\newcommand{\avg}{\Phi}           %
\newcommand{\frameop}{\Theta}     %
\newcommand{\majB}{\mathcal B}    %
\newcommand{\majN}{\mathcal N}    %

\usepackage[hidelinks]{hyperref}%
\ifdefined\pdfvariable
  \pdfvariable suppressoptionalinfo 1023 \pdfvariable trailerid {}
\else\ifdefined\pdfsuppressptexinfo
  \pdfsuppressptexinfo=-1 \pdfinfoomitdate=1 \pdftrailerid{}
\fi\fi
\hypersetup{pdfcreator={},pdfproducer={}}
\hypersetup{pdftitle={Products of unitaries as continuous evolutions at finite precision},
  pdfauthor={Michael Jarret}}
\AtEndPreamble{\counterwithin{theorem}{section}\counterwithin{equation}{section}%
  \aliascntresetthe{lemma}\aliascntresetthe{proposition}\aliascntresetthe{corollary}%
  \aliascntresetthe{fact}\aliascntresetthe{definition}\aliascntresetthe{remark}\aliascntresetthe{example}%
  }
\makeatother

\makeatletter%
\begin{document}

\title{Products of unitaries as continuous evolutions at finite precision}%
\author{Michael Jarret}
\address{Department of Mathematical Sciences, Department of Computer Science,
Quantum Science and Engineering Center, and Center for Social Complexity,
George Mason University, Fairfax, VA 22030, USA}
\email{mjarretb@gmu.edu}
\subjclass[2020]{81Q15, 81P68, 47A55}
\keywords{adiabatic theorem, discretization of unitary evolution, discrete time, binary expansion, Davis--Kahan theorem, Duhamel formula}
\dedicatory{To Farzaneh Sabbagh, who had known me for three months when I was badly injured in a car accident, saw me through a long recovery with more patience than I had any right to ask, and then married me. This paper would not exist without her.}%

\begin{abstract}
I present a short approach to discretizing continuous unitary evolution. It gives a correspondence in both directions between continuous unitary evolutions and products of unitaries in finite dimensions, and I use it to reduce discrete adiabatic theorems to continuous ones. At fine steps, holding truncated samples of a regulated Hamiltonian on a suitable finite partition gives an approximating product. Conversely, rounding the clock identifies a product of $T$ unit-step samples with a continuous evolution, up to two half-step phases. I bound its operator-norm distance from the original evolution by $O(1/T)$ through cancellation within each time step. This assumes bounded first and second derivatives, bounded spectral width, and eigenvalue differences bounded away from nonzero multiples of $2\pi$, uniformly in $T$. For a unitary walk with these derivative bounds, a fixed common free spectral arc supplies a Hermitian logarithm path satisfying the comparison hypotheses. Continuous simulation guarantees then transfer to the phase-corrected product, and continuous adiabatic bounds to the product itself, with an explicit additive $O(1/T)$ error and each adiabatic theorem's hypotheses retained. For positive definite quantum linear systems of condition number $\kappa$, this comparison and continuous adiabatic theory give a rescaled qubitized walk with $O(\kappa/\delta)$ steps for leakage $\delta$, matching the scaling previously obtained through a discrete adiabatic theorem. These products need no discrete adiabatic theorem.
\end{abstract}
 
\maketitle
\enlargethispage{3pt}

\section{Discretizing unitary evolution}\label{sec:intro}
\enlargethispage{4pt}

Simulating continuous-time evolution within the quantum circuit model typically requires applying a finite product of unitaries or, equivalently, generating unitaries from a piecewise-constant Hamiltonian. Thus, a continuous evolution generated by $H(t)$ is approximated by a finite product whenever a piecewise-constant Hamiltonian family $G(t)$ generates approximately the same evolution. Holding matrix entries at finite precision can already define $G$.

I use this observation to carry theorems between continuous evolutions and products of unitaries. In one direction, truncating a Hamiltonian's entries builds a product, and conversely, under the hypotheses below, rounding time compares unit-step samples with the path's own continuous evolution. Adding the comparison error to a continuous adiabatic bound then gives a discrete result that retains the continuous theorem's hypotheses.

We begin with bit truncation because it makes the connection concrete. Fix an energy unit and truncate each entry's real and imaginary parts toward zero at the $\bits$th binary place, obtaining $\sched^{(\bits)}$. When this is constant on finitely many time intervals, its induced product $\devprod$ has one exponential per interval, with generator and duration determined by the truncation (\Cref{def:piecewise}), and we compare it with the evolution $\evol(T)$ generated by $\sched$.

This gives a finite product when the real and imaginary parts of the entries are piecewise monotone, but even a smooth entry can cross one truncation level infinitely often. Holding truncated midpoint samples on a chosen partition instead approximates every regulated family, one with one-sided limits at every time (\Cref{def:regulated}). Finite total variation $\pathlen$, measured in operator norm, lets us control variation within intervals by placing checkpoints at equally spaced variation levels (\Cref{def:variation-checkpoints,lem:inverse-levels}). In these fine-step constructions, we choose precision and refine the partition for accuracy.

For the fine-step comparison and its adiabatic consequences, we use two standard facts. First, if $\evol$ and $\adiab$ are the evolutions of $\sched$ and $\comp$ starting at the identity, differentiation gives
\[
 \frac{d}{dt}(\evol^\dagger\adiab)=i\evol^\dagger(\sched-\comp)\adiab.
\]
Integrating gives, in the operator norm used throughout, the Duhamel estimate (\Cref{lem:duhamel}, \cite{berry2020})
\[
 \|\evol(T)-\adiab(T)\|\le\int_0^T\|\sched(t)-\comp(t)\|\,dt
 \le T\sup_t\|\sched(t)-\comp(t)\|.
\]
Our second tool is the Davis--Kahan $\sin\Theta$ theorem \cite{daviskahan1970}. A continuous adiabatic theorem bounds the leakage between spectral subspaces of $\sched(0)$ and $\sched(T)$, the largest norm, over unit vectors in the initial subspace, of the evolved vector's component outside the final subspace (\Cref{def:spectral-window}). If these endpoint subspaces are instead taken from the truncated matrices, Davis--Kahan bounds their change by the Hamiltonian error divided by the spectral gap (\Cref{thm:davis-kahan}), and this change adds to the leakage bound.

\subsection{Main results}\label{sec:intro-results}
We summarize the comparisons informally, calling a matrix $\sparsity$-sparse when each row has at most $\sparsity$ nonzero entries (\Cref{def:sparsity}).
\begin{introthm}\label{introthm:discretization}
Let $\sched$ be a regulated, $\sparsity$-sparse Hermitian family on $[0,T]$, with evolution $\evol$ starting at the identity. If its entrywise $\bits$-bit truncation $\sched^{(\bits)}$ is piecewise constant, the induced product $\devprod$, one exponential per constant piece, satisfies
\[
 \|\devprod-\evol(T)\|\le\sqrt2\,\sparsity 2^{-\bits}T.
\]
\end{introthm}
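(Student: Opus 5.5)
The plan is to view the product $\devprod$ as the time-$T$ evolution $\adiab(T)$ generated by the piecewise-constant family $\comp=\sched^{(\bits)}$, and then to apply the Duhamel estimate. The whole argument has three steps.

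\emph{Step 1: the product is an evolution.} Let $0=t_0<t_1<\dots<t_m=T$ be the breakpoints on which $\sched^{(\bits)}$ is constant. Call its value on the $j$th piece $G_j$. The evolution $\adiab$ of $\comp$ starting at the identity satisfies $\adiab(t)=e^{-iG_j(t-t_{j-1})}\adiab(t_{j-1})$ on $[t_{j-1},t_j]$, with the sign convention fixed by $\frac{d}{dt}(\evol^\dagger\adiab)=i\evol^\dagger(\sched-\comp)\adiab$. Hence
\[
\adiab(T)=e^{-iG_m(t_m-t_{m-1})}\cdots e^{-iG_1(t_1-t_0)}.
\]
This is exactly the induced product $\devprod$ of \Cref{def:piecewise}: one exponential per piece, with generator the truncated value and duration the piece length. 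Each $G_j$ is Hermitian because truncating toward zero acts separately on the real and imaginary parts of each entry, so conjugate entries stay conjugate. On a piecewise-constant family, evolution is defined piece by piece, so no regularity beyond regulatedness is needed.

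\emph{Step 2: the pointwise bound.} Applying \Cref{lem:duhamel} gives
\[
\|\devprod-\evol(T)\|\le\int_0^T\|\sched(t)-\sched^{(\bits)}(t)\|\,dt.
\]
So it suffices to show $\|\sched(t)-\sched^{(\bits)}(t)\|\le\sqrt2\,\sparsity2^{-\bits}$ for every $t$.

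Fix $t$ and set $E=\sched(t)-\sched^{(\bits)}(t)$. Truncation toward zero at the $\bits$th binary place changes each real and imaginary part by less than $2^{-\bits}$, so every entry of $E$ has modulus less than $\sqrt2\,2^{-\bits}$. Truncation also fixes zero, so $E$ is supported within the sparsity pattern of $\sched(t)$: each row and each column of $E$ has at most $\sparsity$ nonzero entries. Columns inherit the bound because $\sched(t)$ is Hermitian, so its pattern is symmetric. By the Schur test (equivalently, $\|E\|\le\sqrt{\|E\|_1\|E\|_\infty}$), the largest absolute row sum and the largest absolute column sum are each at most $\sparsity\sqrt2\,2^{-\bits}$. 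This gives the pointwise bound, and integrating over $[0,T]$ yields the claimed estimate $\sqrt2\,\sparsity2^{-\bits}T$.

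\emph{Main obstacle.} The only delicate point is justifying Duhamel for a regulated $\sched$ that is not continuous. I would handle it with the version in \Cref{lem:duhamel}, which is stated for evolutions starting at the identity; regulated families are piecewise limits of step functions, so their evolutions are well defined. If that lemma requires more regularity, a backup argument is available. Approximate $\sched$ uniformly by step functions, apply the step-function case, and pass to the limit. This works because both sides of the inequality are continuous under uniform perturbation of the generator, by Duhamel applied to step families themselves.
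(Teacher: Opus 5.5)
Your proposal is correct and follows the paper's proof. The paper identifies $\devprod$ with $\adiab(T)$ for $\comp=\sched^{(\bits)}$ and bounds $\|\sched(t)-\sched^{(\bits)}(t)\|\le\sqrt2\,\sparsity\,2^{-\bits}$ from entrywise errors below $\sqrt2\,2^{-\bits}$ and at most $\sparsity$ nonzero entries per row (\Cref{lem:truncation-error}); it then integrates via \Cref{lem:duhamel}, which already allows countably many exceptional times (so your backup argument is unnecessary). Your Schur-test bound is the same estimate as the paper's row-sum bound for the Hermitian difference.
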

Otherwise, we use truncated midpoint samples (\Cref{lem:main}).
\begin{introthm}\label{introthm:fine-count}
For a regulated, $\sparsity$-sparse Hermitian family $\sched$ on $[0,T]$ with total variation $0<\pathlen<\infty$, holding truncated midpoint samples between equal-variation checkpoints gives a product within $\adiaberr$ of $\evol(T)$, with $O(T\pathlen/\adiaberr)$ factors and $O(\log_2(\sparsity T/\adiaberr))$ bits as $\adiaberr\to0$.
\end{introthm}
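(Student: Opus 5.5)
We build the product by freezing the Hamiltonian on short intervals and bound the error with the Duhamel estimate. The error splits into two pieces: within-interval variation, and bit truncation. We take equal-variation checkpoints $0=t_0<t_1<\dots<t_N=T$ in the sense of \Cref{def:variation-checkpoints}. The variation function $v(t)=\mathrm{Var}_{[0,t]}\sched$ is nondecreasing, with $v(0)=0$ and $v(T)=\pathlen$. Using the generalized inverse from \Cref{lem:inverse-levels}, we place $t_j$ so that the variation of $\sched$ on each open interval $(t_{j-1},t_j)$ is at most $\pathlen/N$. Regulatedness makes the one-sided limits exist, so jumps sit at checkpoints or are absorbed into the level construction. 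On each interval we set $\comp(t)=\sched(m_j)^{(\bits)}$ with $m_j$ the midpoint. Then $\comp$ is piecewise constant with $N$ pieces, and its evolution is a product $\devprod$ of $N$ exponentials $e^{-i(t_j-t_{j-1})\sched(m_j)^{(\bits)}}$.

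For the error, the triangle inequality gives, for $t$ in the $j$th interval,
\[
\|\sched(t)-\comp(t)\|\le \|\sched(t)-\sched(m_j)\|+\|\sched(m_j)-\sched(m_j)^{(\bits)}\|.
\]
The second term is entrywise truncation of a $\sparsity$-sparse matrix. Each entry error has real and imaginary parts below $2^{-\bits}$, so its modulus is below $\sqrt2\,2^{-\bits}$. The resulting sparse error matrix, with at most $\sparsity$ nonzeros per row and column, has norm at most $\sqrt2\,\sparsity2^{-\bits}$ by the Schur test (the same bound behind \Cref{introthm:discretization}). The first term is at most $\pathlen/N$ by the checkpoint choice. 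Feeding both into \Cref{lem:duhamel} gives
\[
\|\devprod-\evol(T)\|\le \int_0^T\|\sched-\comp\|\,dt \le T\pathlen/N+\sqrt2\,\sparsity2^{-\bits}T.
\]
To make each term at most $\adiaberr/2$, we choose $N=\lceil 2T\pathlen/\adiaberr\rceil=O(T\pathlen/\adiaberr)$ and $\bits=\lceil\log_2(2\sqrt2\,\sparsity T/\adiaberr)\rceil=O(\log_2(\sparsity T/\adiaberr))$.

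The main obstacle is making the checkpoint construction rigorous for a regulated family with jumps. The variation within an interval must be controlled by $\pathlen/N$ without an extra count of jumps. A jump larger than $\pathlen/N$ cannot lie inside an interval, so the checkpoint must land exactly at it. The difficulty is at the endpoints: there the value of $\sched$ may differ from both one-sided limits, but endpoints form a measure-zero set and do not affect the Duhamel integral. I would first try to read this directly off \Cref{lem:inverse-levels}: take $t_j=\inf\{t: v(t)\ge j\pathlen/N\}$ and check that on the open interval $v$ increases by at most $\pathlen/N$. If several levels collapse onto one jump point, we discard the empty intervals, which only lowers the factor count. Finally, $\pathlen>0$ ensures $N\ge1$ is meaningful, so the asymptotic count holds as $\adiaberr\to0$.
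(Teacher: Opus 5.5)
Your proposal is correct and follows the paper's proof, which is \Cref{lem:main}(2) combined with \Cref{lem:inverse-levels}: equal-variation checkpoints, truncated midpoint samples, the triangle inequality through the midpoint sample, \Cref{lem:truncation-error}, and \Cref{lem:duhamel}. The only difference is bookkeeping: the paper ties the checkpoint count to the precision, $n=\lceil2^{\bits+1}\pathlen/(\sqrt2\,\sparsity)\rceil$ with $(\bits+1)$-bit generators, which gives $n\le1+4T\pathlen/\adiaberr$, whereas choosing $N$ and $\bits$ independently gives the slightly smaller $N\le1+2T\pathlen/\adiaberr$ (for $\adiaberr$ small enough that your $\bits\ge0$).
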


To apply this comparison, traverse a fixed path over time $T$, preserving its variation $\pathlen$. If a continuous theorem bounds leakage between its endpoint subspaces by $C/T$, with $C>0$ depending on the path, choosing $T$ proportional to $C/\delta$ and evolution error proportional to $\delta$ gives leakage $O(\delta)$. The standard fine-step count is $O(\pathlen C/\delta^2)$, with about $2\log_2(1/\delta)$ bits for fixed $\sparsity$, $C$, and $\pathlen$ (\Cref{cor:jrs}).

Bit precision is one choice of levels for holding values fixed, and the layer-cake construction, which allows other levels and also applies to time, starts from the identity \cite{lieb2001}
\[
 x=\int_{(0,\infty)}\mathbf1_{\{\lambda\le x\}}\,d\lambda\qquad(x\ge0).
\]
Replacing $d\lambda$ by positive point masses, finitely many in each bounded range, gives a step function $F(x)$. Applied to magnitudes of entries' real and imaginary parts, with signs restored, its discrepancy $|F(x)-x|$ controls Hamiltonian error. Equal masses $2^{-\bits}$ at positive multiples of $2^{-\bits}$ recover bit truncation (\Cref{sec:levels}).

Holding the clock, rather than the entries, at finite precision gives the converse comparison, in which a product of unit-step samples approximates the path's own continuous evolution, up to two half-step phases and $O(1/T)$ under the hypotheses below. Take a Hermitian family $\sched$ on $[0,1]$, with $\evol$ evolving under $\sched(t/T)$ for $0\le t\le T$, and for integer $T\ge1$ put
\[
 \devprod_T=e^{-i\sched((T-1)/T)}\cdots e^{-i\sched(1/T)}e^{-i\sched(0)}.
\]
Unit masses at the positive half-integers give the rounded clock $q(t)=\lfloor t+1/2\rfloor$, so $\sched(q(t)/T)$ holds each sample on a centered unit interval, with half intervals at $0$ and $T$. Its evolution is exactly $e^{-i\sched(1)/2}\devprod_Te^{i\sched(0)/2}$ at time $T$.

Direct Duhamel comparison integrates an $O(1/T)$ Hamiltonian error over time $T$ and gives only an $O(1)$ bound, so instead a Hermitian correction, the frame, cancels first-order changes within each centered interval and takes equal values at its ends. The second-order remainder and neighboring-frame differences contribute $O(1/T^2)$ per interval, and the ends contribute $O(1/T)$ (\Cref{lem:cell,thm:level-evolution}). Separation of the samples' eigenvalue differences from nonzero multiples of $2\pi$ bounds the frame (\Cref{lem:bernoulli-inverse,rem:alias}).
\begin{introthm}\label{introthm:unit}
For integer $T\ge1$, let $\sched$ be a twice continuously differentiable Hermitian family on $[0,1]$ with $\|\sched'\|\le\lip$ and $\|\sched''\|\le2\taylor$, $\evol$ its evolution under $\sched(t/T)$ starting at the identity, and $\devprod_T=e^{-i\sched((T-1)/T)}\cdots e^{-i\sched(1/T)}e^{-i\sched(0)}$. Suppose each interior sample $\sched(j/T)$, $0<j<T$, has spectrum in an interval of length at most $\bar\width$, and its nonzero eigenvalue differences lie at distance at least $\gamma>0$ from every nonzero multiple of $2\pi$, with $\lip$, $\taylor$, $\gamma$, and $\bar\width$ independent of $T$. Then
\[
 \bigl\|e^{-i\sched(1)/2}\devprod_Te^{i\sched(0)/2}-\evol(T)\bigr\|=O_{\gamma,\bar\width}\bigl((\lip+\lip^2+\taylor)/T\bigr).
\]
\end{introthm}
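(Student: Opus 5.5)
The plan follows the outline already sketched in the introduction: compare the rounded-clock evolution with the true one through a frame correction that removes the first-order error inside each centered interval. Write $\sched_j=\sched(j/T)$. The rounded-clock Hamiltonian $\comp(t)=\sched(q(t)/T)$ generates $\adiab(T)=e^{-i\sched(1)/2}\devprod_Te^{i\sched(0)/2}$ exactly, so it suffices to bound $\|\adiab(T)-\evol(T)\|$. On the cell $I_j=[j-1/2,j+1/2]$ (halved at the ends) Taylor's theorem gives
\[
\sched(t/T)-\sched_j=\tfrac{t-j}{T}\sched'(j/T)+R_j(t),\qquad \|R_j(t)\|\le\taylor\,\tfrac{(t-j)^2}{T^2},
\]
so $R_j$ contributes $O(\taylor/T^2)$ per cell and $O(\taylor/T)$ in total by Duhamel (\Cref{lem:duhamel}). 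The first-order term $\frac{t-j}{T}\sched'(j/T)$ is only $O(1/T)$ pointwise, so it has to be cancelled, not bounded.

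For each interior cell I would choose a Hermitian frame $F_j(s)$ with $s=t-j\in[-1/2,1/2]$, of size $O(\lip/T)$, so that conjugating by $e^{iF_j}$ kills the linear term to first order. In the interaction picture with respect to $e^{-is\sched_j}$, the requirement is
\[
F_j'(s)+i[\sched_j,F_j(s)]=\tfrac{s}{T}\sched'(j/T),\qquad F_j(-1/2)=F_j(1/2).
\]
I would solve this in the eigenbasis of $\sched_j$. For eigenvalues $\lambda_a,\lambda_b$ with $\omega=\lambda_a-\lambda_b$, the matrix entry solves $f'+i\omega f=\frac sT h_{ab}$ with equal end values. This is a periodic boundary problem on a unit interval, and it is solvable exactly when $e^{i\omega}\neq1$. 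For $\omega\neq0$ this holds because $\omega$ lies at distance at least $\gamma$ from the nonzero multiples of $2\pi$. For $\omega=0$, where $\omega$ is an actual multiple of $2\pi$ but $s$ integrates to zero, we can take $f=\frac{s^2}{2T}h_{ab}$ plus a constant.

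The explicit solution involves Bernoulli-type kernels, presumably \Cref{lem:bernoulli-inverse}. Its entries are bounded by $C(\gamma)|h_{ab}|/T$. To pass from entries to operator norm, write the solution as a Schur multiplier in the eigenbasis whose symbol is a smooth function of $\omega$ on $[-\bar\width,\bar\width]$ avoiding the bad set. Such a symbol has a bounded Fourier representation, so the operator norm is bounded by $C(\gamma,\bar\width)\lip/T$. Choosing the frame this way, uniformly in $\gamma$ and $\bar\width$, is the step I expect to be the main obstacle. The rest of the argument (\Cref{lem:cell}) is routine given it.

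With the frames fixed, define the corrected evolution $\tilde\adiab$, equal to $e^{iF_j}\adiab$ on $I_j$. Its generator differs from $\sched(t/T)$ by $R_j$, plus terms quadratic in $F_j$ and $\sched'$ that are $O(\lip^2/T^2)$, plus the derivative mismatch, which is also $O(1/T^2)$. Duhamel on each cell therefore gives a per-cell error $O((\lip^2+\taylor)/T^2)$.

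At a cell boundary $j+1/2$ the frames $F_j(1/2)$ and $F_{j+1}(-1/2)$ differ. Both solve the same equation, with data shifted by $O(\lip/T)$ in $\sched_j$ and in $\sched'$ (using the $\taylor$ bound). Stability of the inverse map, again via \Cref{lem:bernoulli-inverse}, bounds the jump by $O((\lip+\lip^2+\taylor)/T^2)$. The two half-cells at $0$ and $T$ carry no frame cancellation and contribute $O(\lip/T)$ directly. Telescoping over the $T$ cells gives
\[
\|\adiab(T)-\evol(T)\|\le T\cdot O\bigl((\lip^2+\taylor)/T^2\bigr)+O(\lip/T),
\]
which is the claimed bound $O_{\gamma,\bar\width}\bigl((\lip+\lip^2+\taylor)/T\bigr)$.
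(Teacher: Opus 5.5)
Your proposal is essentially the paper's proof of \Cref{thm:unit-steps}. It uses the rounded clock and a cell correction solving $\dot Y+i[h,Y]=\tau D$ with equal end values $\frameop_h(D)$, which is bounded and Lipschitz in $h$ through a smooth cutoff and Fourier transform (\Cref{lem:bernoulli-inverse,lem:cell}). It then applies Duhamel on the half cells and telescopes with neighboring-frame jumps (\Cref{thm:level-evolution}). The only real difference is the form of the correction. The paper inserts the linear correction $1-iY$, for which $\evol^\dagger(1-iY)\adiab$ has the exact derivative $\evol^\dagger[i(\Delta-\tau D)+\Delta Y]\adiab$. Your unitary $e^{\pm iF}$ needs extra second-order commutator estimates, which depend on the width but are harmless.

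Two details need fixing. First, with $\tilde\adiab=e^{iF_j}\adiab$ your frame equation doubles the drift instead of cancelling it, so use $e^{-iF_j}\adiab$ (equivalently $e^{iF_j}\evol$). Second, the free constant at zero eigenvalue differences must be $-h_{ab}/(24T)$, matching $\majN(0)=\tfrac1{12}$. Only then is the endpoint map the smooth-symbol multiplier whose Lipschitz bound controls the jumps.
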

Thus, operator-norm bounds comparing $\evol(T)$ with a target also hold for the phase-corrected product, adding $O(1/T)$ error (\Cref{thm:unit-steps}). Since the half-step phases preserve the endpoint spectral subspaces, any continuous adiabatic bound on endpoint leakage then transfers to $\devprod_T$ under these comparison hypotheses, retaining the continuous theorem's hypotheses (\Cref{thm:unit-steps-dat}). A continuous $C/T$ bound therefore gives $O(1/\delta)$ unit steps for target leakage $\delta$ (\Cref{cor:unit-steps-jrs}).

For quantum linear systems, which ask for the state $A^{-1}b/\|A^{-1}b\|$, take $A$ positive definite of condition number $\kappa$, with $\|A\|=\|b\|=1$. An and Lin's path \cite{anlin2022} (\Cref{def:aqc-schedule}), with speed proportional to the $3/2$ power of its gap lower bound, satisfies the comparison hypotheses with error $O(\kappa/T)$. Their gap bound and the continuous Jansen--Ruskai--Seiler theorem \cite{jansen2007} give leakage $O(\kappa/T)$, so $O(\kappa/\delta)$ factors suffice (\Cref{cor:qlsp-unit-steps}).

A walk is an ordered sequence of unitaries (\Cref{def:walk}). For samples of a $C^2$ unitary family whose spectra avoid a common open arc, a Hermitian logarithm cut inside that arc gives an $O(1/T)$ evolution comparison, up to half-step phases (\Cref{thm:unit-steps-walk}), and adiabatic transfer retains the continuous theorem's hypotheses on this logarithm path. \Cref{app:qlsp-walk} constructs a rescaled qubitized walk, a product of two reflections encoding the Hamiltonian \cite{low2019}, for quantum linear systems with such an arc. Given preparation of $b$, it recovers Costa et al.'s $O(\kappa/\delta)$ query count \cite{costa2022}, with constant success probability signaled by a measurement and state-vector error at most $\delta$, up to global phase.

\subsection{Prior work and outline}
Earlier discrete adiabatic theorems take unit steps as given and impose conditions on their variation and eigenphase (eigenvalue-argument) gaps \cite{tanaka2011,costa2022}. Yi and Crosson apply continuous adiabatic theory to logarithms of Trotter steps assuming negligible discretization error \cite{yicrosson2021}, while \Cref{thm:unit-steps-walk} bounds the product's operator-norm distance from the logarithm's evolution by $O(1/T)$, up to the half-step phases, under the stated regularity and free-arc hypotheses. For the sampled exponentials $e^{-i\sched(j/T)}$, $O(1/T)$ leakage is known for nondegenerate spectra \cite{yi2021}, under multistep-gap hypotheses \cite{an2025}, and asymptotically for a ground state \cite{lu2026}, whereas \Cref{thm:unit-steps} compares evolutions in operator norm without requiring gaps between spectral bands, allowing levels to cross.

\Cref{sec:prelim} prepares the comparison tools, and \Cref{sec:results} develops bit truncation, fine-step comparison, and adiabatic transfer. \Cref{sec:pieces} counts truncation-level crossings and compares with time sampling, while \Cref{sec:levels} extends truncation to other levels and proves the bound assembling piecewise comparisons for unit steps. \Cref{sec:coarse} applies the construction to time, proves unit-step transfer, and treats quantum linear systems and walks. \Cref{sec:outlook} discusses applications, open questions, and implementation, and \Cref{app:qlsp-walk} gives the walk circuit and query count.
\section{Preliminaries}\label{sec:prelim}

In this section, we define the evolutions and induced products used throughout the paper. Duhamel and Davis--Kahan prepare the fine-step comparison and its adiabatic transfer, while the frame operator prepares the unit-step comparison.
Throughout, $d\ge1$, $T>0$, and every Hamiltonian family takes Hermitian values.
For $\sched,\comp\colon[0,T]\to\mathbb C^{d\times d}$, the unitary evolutions $\evol,\adiab$ solve
\begin{equation}\label{eq:evolution-equations}
  \begin{cases}
    i\dot\evol(t)=\sched(t)\evol(t),\\
    \evol(0)=1,
  \end{cases}
  \qquad\qquad
  \begin{cases}
    i\dot\adiab(t)=\comp(t)\adiab(t),\\
    \adiab(0)=1.
  \end{cases}
\end{equation}
\begin{definition}\label{def:piecewise}
A family $\comp$ is \emph{piecewise constant} if there are
times $0=t_0<t_1<\dots<t_n=T$ and matrices $\comp_0,\dots,\comp_{n-1}$ with
$\comp(t)=\comp_j$ on each \emph{piece} $(t_j,t_{j+1})$. The $t_j$ are its \emph{piece endpoints}.
Its \emph{induced product} is
\begin{equation}\label{eq:piecewise-product}
  \devprod=e^{-i(t_n-t_{n-1})\comp_{n-1}}\cdots e^{-i(t_2-t_1)\comp_1}e^{-i(t_1-t_0)\comp_0},
\end{equation}
with one unitary per piece, applied in the order of time.
\end{definition}
\begin{definition}\label{def:sparsity}
A matrix is \emph{$\sparsity$-sparse}, for a positive integer $\sparsity$, if each row has
at most $\sparsity$ nonzero entries.
\end{definition}
Truncation changes at most $\sparsity$ entries per row, so entrywise precision $2^{-\bits}$ gives operator-norm error $O(\sparsity 2^{-\bits})$ (\Cref{lem:truncation-error}). The regularity we
ask of $\sched$ at fine steps is the following.

\begin{definition}\label{def:regulated}
A map $\sched$ from $[0,T]$ to the $d\times d$ matrices is \emph{regulated} if it
has a limit from the right at every $t\in[0,T)$ and a limit from the left at
every $t\in(0,T]$.
\end{definition}

Lipschitz families are regulated, but regulated families may also jump, and their one-sided limits permit the uniform piecewise-constant approximations \cite[(7.6.1)]{dieudonne1960} needed for the fine-step comparison.

For a regulated family, we read \eqref{eq:evolution-equations} in the following sense: $\evol$ is continuous and unitary-valued, $\evol(0)=1$, and
$i\dot\evol(t)=\sched(t)\evol(t)$ for all but countably many $t$. The same applies to $\adiab$, and
\Cref{prop:regulated-evolution} proves existence and uniqueness in this sense. For piecewise-constant $\comp$, the evolution is
$\adiab(t)=e^{-i(t-t_j)\comp_j}\adiab(t_j)$ on $[t_j,t_{j+1}]$, so $\adiab(T)=\devprod$ in \eqref{eq:piecewise-product}, whatever the values of $\comp$ at the piece endpoints $t_j$.

\subsection{Comparing two evolutions}
The Duhamel estimate \cite{berry2020} allows countably many exceptional times at which the evolutions need not be differentiable.

\begin{lemma}\label{lem:duhamel}
Let $\sched,\comp\colon[a,b]\to\mathbb C^{d\times d}$, let $S\subset[a,b]$ be
countable, and let $\evol,\adiab\colon[a,b]\to\mathbb C^{d\times d}$ be continuous and
unitary-valued, with $\evol(a)=\adiab(a)$, such that $i\dot\evol(t)=\sched(t)\evol(t)$ and
$i\dot\adiab(t)=\comp(t)\adiab(t)$ for every $t\in[a,b]\setminus S$. If
$t\mapsto\|\sched(t)-\comp(t)\|$ is integrable, then
\begin{equation}\label{eq:duhamel-integral}
  \|\evol(b)-\adiab(b)\|\le\int_a^b\|\sched(t)-\comp(t)\|\,dt.
\end{equation}
\end{lemma}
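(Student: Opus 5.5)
The plan is to study $X(t)=\evol(t)^\dagger\adiab(t)$, which is unitary, continuous, and satisfies $X(a)=1$. At every $t\in[a,b]\setminus S$, the product rule and unitarity ($\evol^\dagger\evol=1$, so $\dot\evol^\dagger=i\evol^\dagger\sched$) give
\[
 \dot X(t)=i\evol(t)^\dagger\bigl(\sched(t)-\comp(t)\bigr)\adiab(t),
 \qquad\text{so}\qquad \|\dot X(t)\|\le\|\sched(t)-\comp(t)\|.
\]
Here I first check that the adjoint of the equation for $\evol$ holds off $S$, and that the identity $\evol^\dagger\evol=1$ is what gives $\dot\evol^\dagger$. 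Since $\|\evol(b)-\adiab(b)\|=\|1-X(b)\|=\|X(b)-X(a)\|$ by unitary invariance of the operator norm, it suffices to bound the increment of $X$ by $\int_a^b g$, where $g(t)=\|\sched(t)-\comp(t)\|$.

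The main obstacle is that $X$ is only known to be differentiable off a countable set. Moreover, $g$ is merely integrable, not continuous, so the fundamental theorem of calculus cannot be quoted directly. I would use the mean value inequality in its countable-exception form (as in Dieudonn\'e (8.5.1)). If $f\colon[a,b]\to E$ is continuous into a Banach space, $\phi$ is real, continuous, and nondecreasing, and $\|f'(t)\|\le\phi'(t)$ off a countable set, then $\|f(b)-f(a)\|\le\phi(b)-\phi(a)$.

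To apply this, I take $\phi(t)=\int_a^t g+\epsilon(t-a)$ for $\epsilon>0$. The difficulty is that $\phi'(t)=g(t)+\epsilon$ only holds almost everywhere, whereas the inequality requires it off a countable set. I would therefore replace $g$ by a lower semicontinuous majorant $h\ge g$ with $\int_a^b h\le\int_a^b g+\epsilon$, which exists by Vitali--Carath\'eodory. For lower semicontinuous $h$, the function $\phi(t)=\int_a^t h+\epsilon(t-a)$ has lower right Dini derivative at least $h(t)+\epsilon>\|\dot X(t)\|$ at every $t$.

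With that majorant in hand, I run the standard continuity argument. Let $t^*$ be the supremum of the $t$ for which $\|X(s)-X(a)\|\le\phi(s)-\phi(a)+\epsilon\,\mu_s$ holds on $[a,t]$, where $\mu$ weights the countable exceptional points by $2^{-n}$. At a nonexceptional point, differentiability of $X$ together with the strict Dini inequality pushes $t^*$ forward. At an exceptional point $s_n$, continuity of $X$ is absorbed by the jump $\epsilon 2^{-n}$. Hence $t^*=b$, and letting $\epsilon\to0$ yields \eqref{eq:duhamel-integral}.

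The second inequality, $\int_a^b g\le (b-a)\sup g$, is immediate.
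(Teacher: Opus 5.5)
Your proof is correct, but its key analytic step differs from the paper's. After the same product-rule computation, the paper invokes a fundamental theorem of calculus for $\evol^\dagger\adiab$. This gives the exact Duhamel identity $\evol(b)^\dagger\adiab(b)-1=i\int_a^b\evol^\dagger(\sched-\comp)\adiab\,dt$, and the paper then takes norms. You never form that matrix-valued integral. Instead you prove the inequality directly, by a countable-exception mean value inequality in which a lower semicontinuous Vitali--Carath\'eodory majorant $h\ge\|\sched-\comp\|$ supplies the comparison function.

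The paper's route is shorter and yields the identity, not just the bound. However, its cited source (Dieudonn\'e's (8.7.1)--(8.7.2)) treats regulated derivatives. At the lemma's stated generality, where only $t\mapsto\|\sched(t)-\comp(t)\|$ is assumed integrable, the paper tacitly needs two further facts: the Lebesgue form of the theorem for continuous functions differentiable off a countable set, and measurability of $\evol^\dagger(\sched-\comp)\adiab$. Your argument needs neither, since it uses only integrability of the scalar function. In effect it is the standard proof of that Lebesgue-type theorem (Rudin's lower-semicontinuous-majorant argument), specialized to the one inequality needed.

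Your observation that $\int_a^t g$ has derivative $g$ only almost everywhere, not off a countable set, is exactly why the majorant is required. Adding $\epsilon(t-a)$ to $\phi$ correctly supplies the strict inequality that the continuity argument uses at nonexceptional points. Finally, deriving $\dot\evol^\dagger=i\evol^\dagger\sched$ from unitarity, rather than from Hermiticity of $\sched$, makes your argument independent of the paper's standing assumption that generators are Hermitian.
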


\begin{proof}
For $t\in[a,b]\setminus S$, the product rule and the two evolution equations give
$\frac{d}{dt}(\evol^{\dagger}\adiab)=i\,\evol^{\dagger}(\sched-\comp)\adiab$,
whose norm is $\|\sched-\comp\|$ by unitarity. The fundamental theorem of calculus for a
continuous function with integrable derivative outside a countable set
(see \cite[(8.7.1)--(8.7.2)]{dieudonne1960} for regulated derivatives) gives
\[
  \evol(b)^\dagger\adiab(b)-1
  =i\int_a^b\evol(t)^\dagger\bigl(\sched(t)-\comp(t)\bigr)\adiab(t)\,dt,
\]
since $\evol(a)^\dagger\adiab(a)=1$. Taking norms, using the integral triangle inequality
and unitary invariance of the norm, gives \eqref{eq:duhamel-integral}.
\end{proof}

For a uniform bound $\|\sched(t)-\comp(t)\|\le\err$, integration gives
$\|\evol(b)-\adiab(b)\|\le\err(b-a)$, so $\|\evol(T)-\devprod\|\le\err T$ for piecewise-constant $\comp$.
Duhamel also constructs the evolution of a regulated family.

\begin{proposition}\label{prop:regulated-evolution}
Every regulated family $\sched\colon[0,T]\to\mathbb C^{d\times d}$ is bounded
and has at most countably many discontinuities, hence a set of discontinuities of measure zero.
It has a unique continuous unitary evolution $\evol$, with $\evol(0)=1$, satisfying
\[
  \evol(t)=1-i\int_0^t\sched(s)\evol(s)\,ds.
\]
At every continuity point of $\sched$, $i\dot\evol(t)=\sched(t)\evol(t)$,
with one-sided derivatives at the endpoints. In particular, $\evol$ satisfies the equation at all but
countably many times.
\end{proposition}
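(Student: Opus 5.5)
The plan is to treat the three claims in order: boundedness together with countability of discontinuities, existence via Picard iteration on the integral equation, and then uniqueness, unitarity and the derivative statement.

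\emph{Boundedness.} For each $t\in[0,T]$, the one-sided limits give a relatively open neighborhood $N_t$ of $t$ on which $\sched$ stays within distance $1$ of $\sched(t)^{\pm}$, the one-sided limits at $t$. Compactness of $[0,T]$ lets finitely many $N_t$ cover it, so $\|\sched\|\le M<\infty$.

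\emph{Countability of discontinuities.} For each $n$, let $D_n$ be the set of $t$ where the oscillation of $\sched$ exceeds $1/n$. Each $t$ has a punctured one-sided neighborhood on either side on which the oscillation stays below $1/n$, since $\sched$ is uniformly close to its one-sided limit there. So every point of $D_n$ is isolated in $D_n$. By compactness, $D_n$ is then finite, and the full discontinuity set $\bigcup_n D_n$ is countable, hence of measure zero. This also shows that $\sched$ is Riemann (and Lebesgue) integrable, bounded and measurable. Alternatively, one may cite the uniform approximation of regulated maps by step functions \cite[(7.6.1)]{dieudonne1960}, which gives measurability and integrability directly.

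\emph{Existence.} Define Picard iterates by $\evol_0=1$ and $\evol_{k+1}(t)=1-i\int_0^t\sched\evol_k\,ds$. The standard estimate $\|\evol_{k+1}-\evol_k\|\le (Mt)^{k+1}/(k+1)!$ gives uniform convergence to a continuous solution $\evol$ of the integral equation. At each continuity point $t$ of $\sched$, the integrand $\sched\evol$ is continuous at $t$, so the fundamental theorem of calculus gives $i\dot\evol(t)=\sched(t)\evol(t)$, with one-sided derivatives at the endpoints.

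\emph{Uniqueness.} If $\evol,\tilde\evol$ both satisfy the integral equation, their difference $\Delta$ satisfies $\|\Delta(t)\|\le M\int_0^t\|\Delta\|\,ds$, and Gr\"onwall gives $\Delta=0$.

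\emph{Unitarity.} This is the only step requiring some care. Off the countable exceptional set, Hermiticity gives
\[
 \frac{d}{dt}(\evol^\dagger\evol)=i\evol^\dagger\sched\evol-i\evol^\dagger\sched\evol=0.
\]
Since $\evol^\dagger\evol$ is continuous and its derivative vanishes outside a countable set, the same countable-exception fundamental theorem of calculus cited in \Cref{lem:duhamel} \cite[(8.7.1)--(8.7.2)]{dieudonne1960} shows it is constant, equal to $1$. In finite dimensions, $\evol^\dagger\evol=1$ already implies that $\evol$ is unitary.
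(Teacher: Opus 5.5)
Your proof is correct, but it takes a different route from the paper's. The paper approximates $\sched$ uniformly by Hermitian step functions $\sched_n$ that equal $\sched$ at their piece endpoints \cite[(7.6.1)]{dieudonne1960}. Boundedness and continuity off the countable union of partition points follow at once. Each evolution $\evol_n$ is an explicit product of exponentials and hence unitary, and \Cref{lem:duhamel} makes the $\evol_n$ uniformly Cauchy. So the limit is automatically unitary and satisfies the integral equation in the limit, and uniqueness is \Cref{lem:duhamel} again with identical generators. You instead derive boundedness and countability directly from the one-sided limits by compactness, construct the solution by Picard iteration, and prove uniqueness by Gr\"onwall. You then obtain unitarity separately from $\frac{d}{dt}(\evol^\dagger\evol)=0$ off a countable set. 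Your argument is self-contained and shows that existence and uniqueness need no Hermiticity, which enters only at the unitarity step. The paper's argument reuses its single comparison tool and stays within the step-function framework it needs later.

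Two small remarks. First, finiteness of $D_n$ needs that no point of $[0,T]$ is an accumulation point of $D_n$. Your ``each $t$'' sentence does give this, but the phrase ``every point of $D_n$ is isolated in $D_n$'' alone would not. Second, Gr\"onwall gives uniqueness among solutions of the integral equation. The paper, however, reads \eqref{eq:evolution-equations} just before \Cref{lem:duhamel} as asking for a continuous unitary map satisfying $i\dot\evol=\sched\evol$ off a countable set. To cover that sense, first integrate such a map with the countable-exception fundamental theorem of calculus you already use, then apply Gr\"onwall; the paper's Duhamel argument handles this case directly.
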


\begin{proof}
Choose piecewise-constant Hermitian $\sched_n\to\sched$ uniformly, each equal to $\sched$ at its piece endpoints \cite[(7.6.1)]{dieudonne1960}.
Uniform convergence makes $\sched$ bounded and continuous outside the countable union of partition points.
Their unitary evolutions $U_n$ satisfy
$\sup_t\|U_n(t)-U_m(t)\|\le T\sup_t\|\sched_n(t)-\sched_m(t)\|$ by \Cref{lem:duhamel}, so they converge uniformly to a continuous unitary $\evol$ with $\evol(0)=1$.
Passing to the limit in
$U_n(t)=1-i\int_0^t\sched_n(s)U_n(s)\,ds$ gives the integral equation, since the integrands converge uniformly.
At continuity points of $\sched$, the integrand $\sched(s)\evol(s)$ is continuous, so the fundamental theorem of calculus gives the stated derivatives.
Finally, \Cref{lem:duhamel} with identical generators gives uniqueness.
\end{proof}

\subsection{Spectral projectors}
For adiabatic transfer, Davis--Kahan lets us replace endpoint subspaces by those of truncated matrices, with error proportional to truncation error over the gap (\Cref{lem:transfer}).
\begin{definition}\label{def:spectral-window}
For a Hermitian matrix $\sched$ and a compact interval $I$, the \emph{spectral
projector} $P_I(\sched)$ projects orthogonally onto the span of the eigenvectors
of $\sched$ with eigenvalues in $I$. For $\gap>0$, $\sched$ has a \emph{gap $\gap$
around $I$} if every eigenvalue of $\sched$ outside $I$ lies at distance at least
$\gap$ from $I$, and $I^{+}$ is $I$ extended by $\gap/2$ at each end. The
\emph{leakage} of a matrix $X$ from a projector $P$ to a projector $Q$ is
$\|(1-Q)XP\|$.
\end{definition}

For a unitary $X$, the squared leakage bounds the probability of finding $X\psi$ outside
the range of $Q$ for every unit vector $\psi$ in the range of $P$.

\begin{theorem}[Davis--Kahan, $\sin\Theta$ \cite{daviskahan1970}]\label{thm:davis-kahan}
Let $\sched,\comp\in\mathbb C^{d\times d}$ be Hermitian, let $I,J$ be compact intervals, and
suppose every eigenvalue of $\comp$ outside $J$ lies at distance at least $\gap>0$
from $I$. Then
\[
  \bigl\|(1-P_J(\comp))\,P_I(\sched)\bigr\|
  \le\frac{\bigl\|(1-P_J(\comp))(\sched-\comp)P_I(\sched)\bigr\|}{\gap}
  \le\frac{\|\sched-\comp\|}{\gap} .
\]
\end{theorem}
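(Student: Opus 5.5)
We prove the Davis--Kahan bound through a Sylvester-equation argument. Write $P=P_I(\sched)$ and $Q=1-P_J(\comp)$. Both are spectral projectors of Hermitian matrices, so they commute with the respective matrices. The operator to estimate is $X=QP$. We will derive an identity expressing $X$ in terms of $\sched-\comp$, and then use the spectral separation to invert a commutator-type map with norm control.

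First, compute
\[
Q(\sched-\comp)P = Q\sched P - Q\comp P = QP\sched - \comp QP = X\sched_I - \comp_Q X,
\]
where $\sched_I=P\sched P$ is viewed on $\operatorname{ran}P$ and $\comp_Q=Q\comp Q$ is viewed on $\operatorname{ran}Q$. The spectrum of $\sched_I$ lies in $I$. The spectrum of $\comp_Q$ consists of the eigenvalues of $\comp$ outside $J$, which by hypothesis lie at distance at least $\gap$ from $I$. So $X\colon\operatorname{ran}P\to\operatorname{ran}Q$ solves the Sylvester equation $X\sched_I-\comp_QX=Y$ with $Y=Q(\sched-\comp)P$.

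The main step is to show $\|X\|\le\|Y\|/\gap$. We expand in eigenbases: if $\sched_I u_a=\lambda_a u_a$ and $\comp_Q v_b=\mu_b v_b$, then $\langle v_b,Yu_a\rangle=(\lambda_a-\mu_b)\langle v_b,Xu_a\rangle$ with $|\lambda_a-\mu_b|\ge\gap$. Entrywise division, however, only bounds the Frobenius norm directly. For the operator norm we need one of the following two arguments, and this is where the main obstacle lies.

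In the general configuration the eigenvalues of $\comp_Q$ may lie on both sides of $I$. The clean route is then the Bhatia--Davis--McIntosh theorem. When the two spectra are separated by an annulus, or one lies in an interval and the other lies outside its $\gap$-neighborhood, it gives $\|X\|\le (c/\gap)\|Y\|$. With $c=1$ this holds precisely when one spectrum lies in an interval $[\alpha,\beta]$ and the other lies outside $(\alpha-\gap,\beta+\gap)$. That is our situation: the spectrum of $\sched_I$ lies in $I$, and the spectrum of $\comp_Q$ avoids the $\gap$-neighborhood of $I$.

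The elementary proof I would give follows. Let $c$ be the midpoint of $I$ and $r$ its half-length, and shift both operators by $c$. Then $\|\sched_I-c\|\le r$, while $\comp_Q-c$ is invertible with $\|(\comp_Q-c)^{-1}\|\le 1/(r+\gap)$. Rewrite the equation as
\[
X=(\comp_Q-c)^{-1}\bigl(X(\sched_I-c)-Y\bigr).
\]
Taking norms gives $\|X\|\le\bigl(r\|X\|+\|Y\|\bigr)/(r+\gap)$, hence $\gap\|X\|\le\|Y\|$. This yields the first inequality.

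The second inequality is immediate, since $\|Q(\sched-\comp)P\|\le\|Q\|\,\|\sched-\comp\|\,\|P\|\le\|\sched-\comp\|$ because projectors have norm at most $1$. Degenerate cases, where $P=0$ or $Q=0$, are trivial.
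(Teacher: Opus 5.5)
Your proof is correct and is essentially the paper's. The Sylvester-equation framing and the Bhatia--Davis--McIntosh remark are only packaging: your elementary step (shift by the midpoint of $I$, bound the shifted $\sched$ on $\operatorname{ran}P$ by the half-width $r$, bound the shifted $\comp$ below by $r+\gap$ on $\operatorname{ran}Q$ via your resolvent estimate, absorb the $r\|QP\|$ term, and use projector contraction for the second inequality) is exactly the paper's inequality $(r+\gap)\|QPv\|\le r\|QP\|+\|Q(\comp-\sched)P\|$ for unit vectors $v$.
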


\begin{proof}
Let $m,c$ be the midpoint and half-width of $I$, and put $P=P_I(\sched)$ and $Q=1-P_J(\comp)$.
The spectral hypotheses give $\|(\sched-m1)P\|\le c$ and $\|(\comp-m1)Qv\|\ge(c+\gap)\|Qv\|$.
Since $P$ commutes with $\sched$ and $Q$ with $\comp$, for every unit vector $v$,
\[
 (c+\gap)\|QPv\|\le\|(\comp-m1)QPv\|
 =\|QP(\sched-m1)Pv+Q(\comp-\sched)Pv\|
 \le c\|QP\|+\|Q(\comp-\sched)P\|.
\]
Taking the supremum over $v$ and subtracting $c\|QP\|$ gives the first inequality,
and projector contraction gives the second.
\end{proof}

See \cite[Theorem~VII.3.1, (VII.30)]{bhatia1997}.
Kostrykin, Makarov, and Motovilov \cite{kostrykin2003} already compare a spectral projector with
that of the perturbed operator for the half-gap neighborhood of an isolated spectral part, as we do in
\Cref{lem:transfer}, treating general spectral parts under a smallness condition on the
perturbation. Yu, Wang, and Samworth \cite{yu2015} give a variant for eigenvectors indexed by eigenvalue order, with an eigengap assumed for one matrix only.

\subsection{The derivative of the exponential}
Returning from unit-step products to continuous evolution uses the frame operator for comparison under separation (\Cref{sec:coarse}), and the inverse averaging operator for a walk's logarithm (\Cref{sec:walks}).
Both come from differentiating a step, which averages conjugated changes in its generator over the whole step \cite[Lemma~2]{blanes2009},
\begin{equation}\label{eq:exp-derivative}
  \frac{d}{ds}e^{-i(\sched+sE)}\Big|_{s=0}
  =-i e^{-i\sched}\int_0^1 e^{i\tau\sched}Ee^{-i\tau\sched}\,d\tau.
\end{equation}
\begin{definition}\label{def:exponential-operators}
For a matrix $\sched$, the \emph{averaging operator} is
\[
  \avg_\sched(E)=\int_0^1 e^{i\tau \sched}Ee^{-i\tau \sched}\,d\tau ,
\]
and the commutator map is $a_\sched(E)=i(\sched E-E\sched)$.
The \emph{spectral width} of $\sched$ is the length of the smallest interval
containing its eigenvalues. Write $B_n$ for the Bernoulli numbers, with $B_1=-\frac12$.
For $0\le\width<2\pi$ and $x\notin2\pi\mathbb Z\setminus\{0\}$, put
\[
  \majB(\width)=\sum_{n\ge0}\frac{|B_n|}{n!}\width^n ,\qquad
  \majN(x)=\frac{1-\frac x2\cot\frac x2}{x^2},\qquad \majN(0)=\tfrac1{12}.
\]
For $\sched$ with no eigenvalue difference in $2\pi\mathbb Z\setminus\{0\}$, the \emph{frame operator} $\frameop_\sched$
multiplies the $(p,q)$ entry of a matrix, written in an eigenbasis of $\sched$ with
eigenvalues $\lambda_p$, by $\majN(\lambda_p-\lambda_q)$.
For $\gamma>0$, a Hermitian matrix is \emph{$\gamma$-separated} if each of its eigenvalue differences $x$ satisfies
$x=0$ or $|x-2\pi n|\ge\gamma$ for every nonzero integer $n$.
\end{definition}
Integrating
$e^{i\tau \sched}Ee^{-i\tau \sched}=e^{\tau a_\sched}(E)$ over $[0,1]$ gives
$\avg_\sched=(e^{a_\sched}-1)/a_\sched$, interpreted as a power series.
The spectral width is at most $\width$ exactly when $\sched$ is
within $\width/2$ of a real multiple of the identity, and then $\|a_\sched\|\le\width$.
Euler's formula $|B_{2m}|/(2m)!=2\zeta(2m)/(2\pi)^{2m}$ for $m\ge1$, with $\zeta(2m)<2$, gives $|B_n|/n!\le4(2\pi)^{-n}$, so $\majB$ converges for $\width<2\pi$ and bounds the inverse averaging operator (\Cref{lem:bernoulli-inverse}).
The even function $\majN$ is regular at $0$ and has poles at nonzero multiples of $2\pi$ (\Cref{rem:alias}).

In \Cref{sec:coarse}, a Hermitian frame absorbs the generator's first-order change around the midpoint sample $h$ of each interior unit interval and takes the same value at both ends, so neighboring frames nearly cancel. That common value is $\frameop_h(D)$, for the generator's slope $D$, and it solves \eqref{eq:frame}, which fixes its entries at nonzero eigenvalue differences through the multiplier $\majN$. The entries at zero differences are unconstrained, and choosing $\majN(0)=\lim_{x\to0}\majN(x)=1/12$ keeps $\frameop_h$ continuous in $h$ across eigenvalue coincidences.

\begin{lemma}\label{lem:bernoulli-inverse}
Let $h,\tilde h$ be Hermitian.
\begin{enumerate}
\item Suppose that $\gamma>0$ and $h,\tilde h$ are $\gamma$-separated of spectral width at most $\bar\width$. The frame operator $\frameop_h$ preserves Hermitian matrices and, for every matrix $D$, solves
\begin{equation}\label{eq:frame}
  e^{ih}\,\frameop_h(D)\,e^{-ih}-\frameop_h(D)
  =\int_0^1\bigl(\tau-\tfrac12\bigr)\,e^{i\tau h}De^{-i\tau h}\,d\tau .
\end{equation}
Moreover,
\[
 \|\frameop_h\|\le N_\gamma,\qquad
 \|\frameop_h-\frameop_{\tilde h}\|\le N'_\gamma\|h-\tilde h\|,
\]
with dimension-independent constants depending only on $\gamma$ and $\bar\width$.
\item If $h,\tilde h$ have spectral width at most $\width<2\pi$, they are $\gamma'$-separated for every $0<\gamma'\le2\pi-\width$.
In this case, the frame-operator bounds hold with $\majN(\width)$ and $2\majN'(\width)$ in place of $N_\gamma$ and $N'_\gamma$.
\item If $h,\tilde h$ have spectral width at most $\width<2\pi$, the inverse averaging operator used for walks is
\[
 \avg_h^{-1}=\sum_{n\ge0}\frac{B_n}{n!}\,a_h^{\,n}.
\]
It preserves Hermitian matrices, with
\[
 \|\avg_h^{-1}\|\le\majB(\width),\qquad
 \|\avg_h^{-1}-\avg_{\tilde h}^{-1}\|\le2\majB'(\width)\|h-\tilde h\|.
\]
\end{enumerate}
\end{lemma}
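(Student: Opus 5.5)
The plan is to work entrywise in a common eigenbasis for part (1), and with power series in the commutator map $a_h$ for parts (2) and (3).

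\emph{Identity \eqref{eq:frame}.} Write $h=\sum_p\lambda_pP_p$ and pass to matrix entries. Conjugation $E\mapsto e^{i\tau h}Ee^{-i\tau h}$ multiplies the $(p,q)$ entry by $e^{i\tau x}$ with $x=\lambda_p-\lambda_q$. The left side of \eqref{eq:frame} therefore multiplies $\frameop_h(D)_{pq}$ by $e^{ix}-1$. The right side multiplies $D_{pq}$ by
\[
 \int_0^1(\tau-\tfrac12)e^{i\tau x}\,d\tau .
\]
A direct integration gives
\[
 \int_0^1(\tau-\tfrac12)e^{i\tau x}\,d\tau=(e^{ix}-1)\,\majN(x)
\]
for $x\notin2\pi\mathbb Z$. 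I will verify this by a short computation: both sides reduce to $\frac{(e^{ix}+1)}{2ix}+\frac{e^{ix}-1}{x^2}$, using $\frac x2\cot\frac x2=\frac{ix}{2}\frac{e^{ix}+1}{e^{ix}-1}$. At $x=0$ both sides vanish, so the equation holds entrywise.

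\emph{Hermiticity.} $\majN$ is real and even, so $\frameop_h(D)^\dagger=\frameop_h(D^\dagger)$. Hence $\frameop_h$ preserves Hermitian matrices.

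\emph{Uniform bound, and the main obstacle.} The main obstacle is a dimension-independent operator-norm bound for a Schur multiplier. Entrywise bounds on $\majN$ do not suffice. I would first try the following.

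Write $x=\lambda_p-\lambda_q$. Since the spectral width is at most $\bar\width$, we have $x\in[-\bar\width,\bar\width]$. $\gamma$-separation says $x$ stays at distance $\ge\gamma$ from the finitely many poles $2\pi n$, $0<|2\pi n|\le\bar\width+\gamma$. Hence $\majN(x)=f(x)$, where $f$ is defined on the compact set $K=[-\bar\width,\bar\width]\setminus\bigcup_n(2\pi n-\gamma,2\pi n+\gamma)$.

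Next, split $\majN=R+\sum_n c_n/(x-2\pi n)+c'_n/(x-2\pi n)^2$, with $R$ real-analytic on $[-\bar\width-1,\bar\width+1]$. The poles of $\majN$ are at most double: $\cot$ has a simple pole and $x^{-2}$ is regular there.

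\begin{enumerate}
\item The analytic part is handled by a Fourier representation $R(x)=\int\hat R(\xi)e^{i\xi x}\,d\xi$ with $\hat R\in L^1$ after smooth cutoff. This gives the Schur multiplier $E\mapsto\int\hat R(\xi)e^{i\xi h}Ee^{-i\xi h}\,d\xi$, of norm at most $\|\hat R\|_{L^1}$.
\item Each pole term uses the same idea, since on $K$ it agrees with a smooth compactly supported function. For example, $1/(x-a)$ agrees with $\chi(x)/(x-a)$, where $\chi$ vanishes near $a$.
\end{enumerate}

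So in fact I would simply take one smooth $\tilde f$ equal to $\majN$ on $K$, compactly supported, with $\|\widehat{\tilde f}\|_1\le N_\gamma(\gamma,\bar\width)$. Then $\frameop_h(D)=\int\widehat{\tilde f}(\xi)e^{i\xi h}De^{-i\xi h}\,d\xi$, which bounds $\|\frameop_h\|$ independently of $d$.

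\emph{Lipschitz bound.} Use the same representation and Duhamel. From
\[
 \|e^{i\xi h}De^{-i\xi h}-e^{i\xi\tilde h}De^{-i\xi\tilde h}\|\le2|\xi|\|h-\tilde h\|\|D\|
\]
we get $N'_\gamma=2\int|\xi||\widehat{\tilde f}|$. The same $\tilde f$ must work for both $h$ and $\tilde h$; it does, since both are $\gamma$-separated with width at most $\bar\width$.

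\emph{Part (2).} If the width is at most $\width<2\pi$, then $|x|\le\width$. For $n\ne0$ this gives $|x-2\pi n|\ge2\pi-\width$, which is the claimed separation.

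For the sharper constants, expand $\majN(x)=\sum_m\beta_mx^{2m}$ with $\beta_m\ge0$. These coefficients come from the cotangent series, $1-\frac x2\cot\frac x2=\sum_{m\ge1}\frac{|B_{2m}|}{(2m)!}x^{2m}$, which has positive coefficients. The series converges for $|x|<2\pi$.

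Shift $h$ by its spectral midpoint; this does not change $a_h$ or $\frameop_h$. Then $\frameop_h=\sum\beta_m(-1)^m a_h^{2m}$, since $a_h$ multiplies entries by $ix$. Using $\|a_h\|\le\width$, this gives $\|\frameop_h\|\le\majN(\width)$.

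For the difference, telescope:
\[
 a_h^n-a_{\tilde h}^n=\sum_k a_h^k(a_h-a_{\tilde h})a_{\tilde h}^{n-1-k},
\]
with $\|a_h-a_{\tilde h}\|\le2\|h-\tilde h\|$. This bounds the difference by $2\sum\beta_m\,2m\,\width^{2m-1}\|h-\tilde h\|=2\majN'(\width)\|h-\tilde h\|$.

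\emph{Part (3).} $\avg_h=g(a_h)$ with $g(z)=(e^z-1)/z$, and $1/g(z)=z/(e^z-1)=\sum B_nz^n/n!$ for $|z|<2\pi$. Because $\|a_h\|\le\width<2\pi$, the holomorphic functional calculus gives $\avg_h^{-1}=\sum\frac{B_n}{n!}a_h^n$ as a norm-convergent series. It preserves Hermitian matrices: $a_h$ does, and the $B_n$ are real.

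The norm bound $\majB(\width)$ follows termwise. The Lipschitz bound follows from the same telescoping as in part (2), which gives $2\sum\frac{|B_n|}{n!}n\width^{n-1}\|h-\tilde h\|=2\majB'(\width)\|h-\tilde h\|$.
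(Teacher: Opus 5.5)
Your proposal is correct and follows the paper's proof essentially step by step. The shared steps are the entrywise reduction of \eqref{eq:frame} with integration by parts, and a single smooth compactly supported extension of $\majN$ off the compact set of allowed eigenvalue differences, whose Fourier transform gives the dimension-free bounds (your $2\int|\xi|\,|\widehat{\tilde f}(\xi)|\,d\xi$ is the paper's $4\pi\int|s\,g(s)|\,ds$ in another normalization). Both arguments then use the Bernoulli power series in $a_h$ with telescoping for the explicit constants in (2) and (3). The differences are cosmetic: you justify $\avg_h^{-1}=\sum_n B_na_h^{\,n}/n!$ by functional calculus where the paper cites Blanes et al., and your set $K$ should explicitly include the always-allowed difference $x=0$, which it omits only in the degenerate case $\gamma>2\pi$.
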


\begin{proof}
In an eigenbasis of $h$, let $x=\lambda_p-\lambda_q$. Comparing the multipliers of $D_{pq}$ reduces \eqref{eq:frame} to
\[
 (e^{ix}-1)\majN(x)=\int_0^1\bigl(\tau-\tfrac12\bigr)e^{i\tau x}\,d\tau .
\]
Both sides vanish at $x=0$. For $x\notin2\pi\mathbb Z$,
\begin{align*}
 (e^{ix}-1)\majN(x)
 &=\frac{e^{ix}-1}{x^2}+\frac{e^{ix}+1}{2ix} &&\text{since }\cot\tfrac x2=\tfrac{i(e^{ix}+1)}{e^{ix}-1},\\
 &=\int_0^1\bigl(\tau-\tfrac12\bigr)e^{i\tau x}\,d\tau &&\text{integrating by parts}.
\end{align*}

Since $\majN$ is real and even, the frame preserves Hermitian matrices entrywise.
The eigenvalue differences allowed by $\gamma$-separation and width at most $\bar\width$ form a compact set avoiding the poles. Choose a smooth compactly supported cutoff $\chi=1$ there, supported away from the poles, and let $g$ be the Fourier transform of the smooth extension $\chi\majN$.
Fourier inversion gives $\frameop_h(D)=\int_{\mathbb R}g(s)e^{2\pi ish}De^{-2\pi ish}\,ds$, so unitary invariance gives $N_\gamma=\|g\|_1$. Duhamel bounds $\|e^{2\pi ish}-e^{2\pi is\tilde h}\|$ by $2\pi|s|\,\|h-\tilde h\|$ on each side of $D$, giving $N'_\gamma=4\pi\int_{\mathbb R}|s g(s)|\,ds$, finite because $g$ decays faster than any power.

If $|x|\le\width<2\pi$ and $n\ne0$, then $|x-2\pi n|\ge2\pi|n|-|x|\ge2\pi-\width$.
For (2)--(3), $\|a_h\|\le\width<2\pi$, and since a scalar shift of
$h$ preserves $a_h$ and gives $\|h\|\le\width/2<\pi$, the inverse
formula follows from \cite[Lemma~3]{blanes2009} applied to $ih$.

For the explicit bounds of (2), since $|x|\le\width<2\pi$, the convergent series $\sum_{n\ge0}B_{n+2}a_h^{\,n}/(n+2)!$ multiplies entry $(p,q)$ by
$(z/(e^z-1)-1+z/2)/z^2$ at $z=ix$, which is $\majN(x)$ because
$z/(e^z-1)+z/2=(z/2)\coth(z/2)$, $(ix/2)\coth(ix/2)=(x/2)\cot(x/2)$, and $z^2=-x^2$. So this series is $\frameop_h$, and $\majN(x)=\sum_{m\ge0}(-1)^mB_{2m+2}x^{2m}/(2m+2)!$, since odd Bernoulli numbers past $B_1$ vanish.

Since $\|a_h^{\,n}\|\le\width^n$ and $(-1)^mB_{2m+2}=|B_{2m+2}|$, the absolute coefficient sums bound the operator norms by $\majB(\width)$ and $\majN(\width)$.
Telescoping powers with $\|i[h-\tilde h,E]\|\le2\|h-\tilde h\|\|E\|$ gives
$\|a_h^{\,n}-a_{\tilde h}^{\,n}\|\le n\width^{n-1}\|a_h-a_{\tilde h}\|
\le2n\width^{n-1}\|h-\tilde h\|$ for $n\ge1$. Termwise differentiation of the majorants
then gives the Lipschitz bounds. Both series preserve Hermitian matrices, since their
coefficients are real and $a_h$ preserves Hermiticity.
\end{proof}

\begin{remark}\label{rem:alias}
Separation is a sampling condition on step length times eigenvalue differences.
At an eigenvalue difference $x=2\pi n\ne0$, the averaging multiplier $(e^{ix}-1)/(ix)$ vanishes, so the exponential derivative is not invertible. In \eqref{eq:frame}, the left multiplier $e^{ix}-1$ vanishes, while integration by parts gives $1/(2\pi in)$ on the right, so a nonzero entry of $D$ at that difference prevents a solution.
A walk must choose its logarithm (\Cref{sec:walks}), since for an orthogonal projector $P$ commuting with $\sched$,
$e^{-i(\sched+2\pi P)}=e^{-i\sched}e^{-2\pi iP}=e^{-i\sched}$ by $e^{-2\pi iP}=1+(e^{-2\pi i}-1)P=1$.
\end{remark}
\section{Fine steps}\label{sec:results}

In this section, we use Duhamel to match products at finite precision to continuous evolution, then Davis--Kahan to transfer leakage bounds.
We start with bit truncation because it is concrete, and piecewise-constant truncations generate products directly.
For other regulated paths, we hold truncated midpoint samples constant on chosen intervals,
which gives the standard fine-step count.

\subsection{Bit truncation}\label{sec:truncation}

\begin{definition}[Bit truncation]\label{def:truncation}
Let $\bits$ be any integer. The \emph{truncation at scale $2^{-\bits}$} of a real number $x$
rounds its magnitude down to a multiple of $2^{-\bits}$ and restores its sign:
\[
  [x]_\bits=2^{-\bits}\operatorname{trunc}(2^\bits x),
\]
where $\operatorname{trunc}(y)$ removes the fractional part of $y$, rounding toward zero.
A complex number is truncated in its real and imaginary parts,
$[z]_\bits=[\operatorname{Re}z]_\bits+i[\operatorname{Im}z]_\bits$. A matrix $\sched$ is truncated entry
by entry, $\sched^{(\bits)}_{pq}=[\sched_{pq}]_\bits$, and a family pointwise,
$\sched^{(\bits)}(t)=\sched(t)^{(\bits)}$.
For nonnegative $\bits$, this is $\bits$-bit truncation.
\end{definition}

Binary precision is relative to a unit of energy, and changing the unit by a power of two moves the binary point and rescales time, preserving the evolution.

\begin{lemma}\label{lem:units}
Let $j$ be a nonnegative integer. The family $t\mapsto2^j\sched(2^jt)$ on $[0,2^{-j}T]$ has
evolution $t\mapsto\evol(2^jt)$, which equals $\evol(T)$ at time $2^{-j}T$. Moreover,
$[2^jx]_\bits=2^j[x]_{\bits+j}$ for every real $x$ and nonnegative integer $\bits$.
\end{lemma}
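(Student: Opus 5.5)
The lemma has two independent parts, a time rescaling of the evolution and an identity for truncation, and I would prove each directly from the definitions.

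For the first part, put $\tilde\sched(t)=2^j\sched(2^jt)$ on $[0,2^{-j}T]$ and $\tilde\evol(t)=\evol(2^jt)$. Then $\tilde\evol$ is continuous and unitary-valued, and $\tilde\evol(0)=1$. At every $t$ with $2^jt$ outside the countable exceptional set of $\evol$, the chain rule gives
\[
 i\dot{\tilde\evol}(t)=2^j\,i\dot\evol(2^jt)=2^j\sched(2^jt)\evol(2^jt)=\tilde\sched(t)\tilde\evol(t).
\]
The image of a countable set under $t\mapsto2^{-j}t$ is countable, so the exceptional set of $\tilde\evol$ is still countable. If $\sched$ is regulated, so is $\tilde\sched$, since affine reparametrization and scaling preserve one-sided limits. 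The uniqueness part of \Cref{prop:regulated-evolution}, which is \Cref{lem:duhamel} applied with identical generators, then shows that $\tilde\evol$ is the evolution of $\tilde\sched$. Evaluating at $t=2^{-j}T$ gives $\evol(T)$.

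Alternatively, I could verify the integral equation of \Cref{prop:regulated-evolution} directly. The substitution $s\mapsto2^js$ gives
\[
 \int_0^t\tilde\sched(s)\tilde\evol(s)\,ds=\int_0^{2^jt}\sched(u)\evol(u)\,du,
\]
so $\tilde\evol$ satisfies the same integral equation, and uniqueness again identifies it as the evolution.

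For the truncation identity, I would use \Cref{def:truncation}:
\[
 [2^jx]_\bits=2^{-\bits}\operatorname{trunc}\bigl(2^\bits2^jx\bigr)=2^j\cdot2^{-(\bits+j)}\operatorname{trunc}\bigl(2^{\bits+j}x\bigr)=2^j[x]_{\bits+j}.
\]
This holds for every real $x$ and requires only that $\bits+j$ is an integer, which \Cref{def:truncation} permits. The identity extends to complex entries by applying it to real and imaginary parts, and to matrices entrywise. This is the statement that moving the binary point by $j$ places trades one unit of energy scale for $j$ more bits.

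No step is a real obstacle. The only point needing care is the weak sense of the evolution equation: the exceptional set must remain countable after rescaling, and the regulated hypothesis must be preserved so that uniqueness applies.
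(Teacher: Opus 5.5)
Your proposal is correct and follows the paper's proof: the chain rule off a countable exceptional set, uniqueness from \Cref{lem:duhamel} with identical generators, and the truncation identity read off directly from \Cref{def:truncation}. Your extra remarks, that the regulated property is preserved and that the integral equation can be checked by substitution, are harmless additions; uniqueness itself does not need them, since the generator difference vanishes identically.
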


\begin{proof}
By the chain rule, $i\frac{d}{dt}\evol(2^jt)=2^j\sched(2^jt)\evol(2^jt)$ wherever $\evol$ is
differentiable at $2^jt$, which is at all but countably many $t$. The initial value is $\evol(0)=1$, so \Cref{lem:duhamel} with identical generators gives uniqueness. For the second claim, the definition of truncation gives
$[2^jx]_\bits=2^{-\bits}\operatorname{trunc}(2^{\bits+j}x)=2^j[x]_{\bits+j}$.
\end{proof}

We fix the energy unit henceforth (\Cref{lem:units}). Truncation preserves Hermiticity because
$[-x]_\bits=-[x]_\bits$ and $[\bar z]_\bits=\overline{[z]_\bits}$, and its entrywise errors give the following norm bound.

\begin{lemma}\label{lem:truncation-error}
For nonnegative $\bits$, if $\sched$ is $\sparsity$-sparse, then
$\|\sched(t)-\sched^{(\bits)}(t)\|\le\sqrt2\,\sparsity\,2^{-\bits}$ for every $t$.
\end{lemma}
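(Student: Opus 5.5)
The plan is to bound the error matrix $E=\sched(t)-\sched^{(\bits)}(t)$ entrywise, and then pass from entrywise bounds to the operator norm using sparsity. Fix $t$.

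\emph{Entrywise bound.} For a real number $x$, $\operatorname{trunc}$ moves $2^\bits x$ toward zero by less than $1$, so
\[
  |x-[x]_\bits|<2^{-\bits}.
\]
Hence each complex entry satisfies
\[
  |E_{pq}|^2=|\operatorname{Re}E_{pq}|^2+|\operatorname{Im}E_{pq}|^2<2\cdot4^{-\bits},
\]
that is, $|E_{pq}|<\sqrt2\,2^{-\bits}$.

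\emph{Sparsity of $E$.} Truncation maps $0$ to $0$, so $E_{pq}=0$ wherever $\sched(t)_{pq}=0$. Thus $E$ is $\sparsity$-sparse, with at most $\sparsity$ nonzero entries in each row. Because $\sched(t)$ and $\sched^{(\bits)}(t)$ are both Hermitian, $E$ is Hermitian, so the same holds for each column.

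\emph{Operator norm.} Apply the Schur test, $\|E\|\le\sqrt{R\,C}$, where $R$ is the maximum absolute row sum and $C$ the maximum absolute column sum. Since $E$ is Hermitian, $R=C$. Each row has at most $\sparsity$ nonzero entries, each of modulus at most $\sqrt2\,2^{-\bits}$, so
\[
  R\le\sparsity\sqrt2\,2^{-\bits},
  \qquad\text{hence}\qquad
  \|E\|\le\sqrt2\,\sparsity\,2^{-\bits}.
\]
One can also argue directly: $\|E\|$ equals the spectral radius of $E$, which is at most the $\infty$-norm of $E$.

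There is no real obstacle here. The only point needing care is that sparsity of $\sched$ transfers to $E$, which relies on truncation fixing zero. The bound does not depend on $\bits$ being nonnegative beyond its use in the definition.
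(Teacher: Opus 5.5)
Your proposal is correct and follows essentially the same route as the paper: truncation fixes zeros and preserves Hermiticity, so $E$ is Hermitian with at most $\sparsity$ nonzero entries per row, each of modulus below $\sqrt2\,2^{-\bits}$, and the row-sum bound for Hermitian matrices finishes the proof. The one step you assume rather than check, that $\sched^{(\bits)}(t)$ is Hermitian, follows from $[-x]_\bits=-[x]_\bits$ and $[\bar z]_\bits=\overline{[z]_\bits}$, which the paper records just before the lemma.
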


\begin{proof}
Fix $t$ and suppress it. Truncation preserves Hermiticity and zeros, so
$\sched-\sched^{(\bits)}$ is Hermitian with at most $\sparsity$ nonzero entries in each row.
The real and imaginary parts of each entry have truncation errors below $2^{-\bits}$, so each entry has modulus less than $\sqrt2\,2^{-\bits}$. The row-sum bound for Hermitian matrices therefore gives
\[
  \bigl\|\sched-\sched^{(\bits)}\bigr\|\le\max_p\sum_q\bigl|\sched_{pq}-[\sched_{pq}]_\bits\bigr|\le\sqrt2\,\sparsity\,2^{-\bits}.
\]
\end{proof}

For piecewise-constant truncation, this bounds the induced product's error.

\begin{lemma}\label{lem:induced}
Let $\sched$ be regulated and $\sparsity$-sparse, with evolution $\evol$, and let $\bits$ be a
nonnegative integer. If $\sched^{(\bits)}$ is regulated, its evolution $\adiab$ satisfies
\[
  \|\adiab(t)-\evol(t)\|\le\sqrt2\,\sparsity\,2^{-\bits}t
  \qquad(0\le t\le T).
\]
In particular, whenever $\sched^{(\bits)}$ is piecewise constant, its induced product
$\devprod=\adiab(T)$ (\Cref{def:piecewise}) satisfies
$\|\devprod-\evol(T)\|\le\sqrt2\,\sparsity\,2^{-\bits}T$.
\end{lemma}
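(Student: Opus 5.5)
The plan is to apply the Duhamel estimate (\Cref{lem:duhamel}) on $[0,t]$, with $\sched$ and $\comp=\sched^{(\bits)}$, and to bound the integrand pointwise by \Cref{lem:truncation-error}.

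First, \Cref{prop:regulated-evolution} gives evolutions for both families. Since $\sched$ is regulated, it has a unique continuous unitary evolution $\evol$ satisfying $i\dot\evol=\sched\evol$ outside a countable set $S_1$. Since $\sched^{(\bits)}$ is regulated by hypothesis, the same proposition gives its evolution $\adiab$ with $i\dot\adiab=\sched^{(\bits)}\adiab$ outside a countable set $S_2$. Truncation preserves Hermiticity, so $\adiab$ is unitary. Take $S=S_1\cup S_2$, which is countable, and note $\evol(0)=\adiab(0)=1$.

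Next, check the integrability hypothesis of \Cref{lem:duhamel}. The difference $\sched-\sched^{(\bits)}$ is regulated, as a difference of regulated maps, so its norm is regulated. It is therefore bounded with countably many discontinuities, hence Riemann integrable. Alternatively, one can note that the integrand is bounded by a constant, which is all the estimate needs.

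By \Cref{lem:truncation-error}, $\|\sched(s)-\sched^{(\bits)}(s)\|\le\sqrt2\,\sparsity2^{-\bits}$ for every $s$. \Cref{lem:duhamel} on $[0,t]$ then gives $\|\evol(t)-\adiab(t)\|\le\sqrt2\,\sparsity2^{-\bits}t$.

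For the second claim, note that a piecewise-constant family is regulated. After \Cref{def:piecewise} the paper observes that $\adiab(T)$ equals the induced product $\devprod$ regardless of the values at piece endpoints, so setting $t=T$ finishes the proof.

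There is no real obstacle. The only point needing care is that the regulated hypothesis on $\sched^{(\bits)}$ is genuinely used: it supplies the evolution $\adiab$ and the measurability of the integrand. Truncation of a regulated family need not be regulated, since a smooth entry can cross a truncation level infinitely often.
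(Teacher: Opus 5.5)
Your proposal is correct and matches the paper's proof essentially step for step: both evolutions come from \Cref{prop:regulated-evolution}, \Cref{lem:truncation-error} bounds the integrand pointwise, and \Cref{lem:duhamel} is applied on $[0,t]$ outside the union of the countable exceptional sets. The identity $\adiab(T)=\devprod$ for a piecewise-constant truncation then gives the second claim at $t=T$. Keep your first integrability argument, regulatedness of the difference, since boundedness alone would not supply the integrability that \Cref{lem:duhamel} assumes.
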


\begin{proof}
Truncation preserves Hermiticity, so \Cref{prop:regulated-evolution} gives both evolutions,
with $\evol(0)=\adiab(0)=1$. By \Cref{lem:truncation-error},
$\|\sched(s)-\sched^{(\bits)}(s)\|\le\sqrt2\,\sparsity\,2^{-\bits}$ for $s\in[0,T]$.
Applying \Cref{lem:duhamel} outside both countable exceptional sets and the endpoints
and integrating over $[0,t]$ gives $\sqrt2\,\sparsity\,2^{-\bits}t$.
For piecewise-constant truncation, gluing constant-generator evolutions gives
$\adiab(T)=\devprod$, one exponential per piece (\Cref{def:piecewise}), and we take $t=T$.
\end{proof}

\subsection{Variation checkpoints}\label{sec:switch}
Smoothness alone need not make the truncation regulated, let alone piecewise constant. The $1\times1$ family
$\sched(t)=2^{-\bits}+\tfrac14 2^{-\bits}e^{-1/t^2}\sin(1/t)$, with $\sched(0)=2^{-\bits}$, is
smooth, since every derivative of $e^{-1/t^2}\sin(1/t)$ tends to $0$ at $0$, yet its
truncation takes both values $2^{-\bits}$ and $0$ on every interval $(0,b)$ with $b>0$.

For such paths, we hold truncated midpoint samples on chosen intervals (\Cref{sec:main}), and finite total variation lets us count these intervals.

\begin{definition}\label{def:variation-checkpoints}
We measure how
much $\sched$ changes by its total variation up to time $t$,
\[
  \pathlen(t)=\sup\sum_{r}\bigl\|\sched(s_{r+1})-\sched(s_r)\bigr\|,
\]
the supremum over partitions $0=s_0<s_1<\dots<s_m=t$, and write $\pathlen=\pathlen(T)$.
We define \emph{checkpoints} as any nondecreasing list of times $0=t_0\le\dots\le t_n=T$.
\end{definition}
The variation is nondecreasing, and $\|\sched(t)-\sched(s)\|\le\pathlen(t)-\pathlen(s)$ for $s\le t$ when
$\pathlen<\infty$. Finite variation also gives $\sched$ one-sided limits, so $\sched$ is regulated. For an
$\lip$-Lipschitz family, $\pathlen\le\lip T$, and for a continuously differentiable family,
$\pathlen(t)\le\int_0^t\|\dot\sched\|$.

\begin{lemma}\label{lem:inverse-levels}
Suppose $\pathlen<\infty$, and let $n\ge1$ be an integer. Set $t_0=0$, $t_n=T$, and
\[
  t_j=\inf\bigl\{t\in[0,T]:\pathlen(t)\ge j\pathlen/n\bigr\}
  \qquad(0<j<n).
\]
These equal-variation checkpoints are nondecreasing, and on each interval the variation stays between two
consecutive levels: for $t_j<x\le y<t_{j+1}$,
\[
  j\pathlen/n\le\pathlen(x)\le\pathlen(y)\le(j+1)\pathlen/n,
\]
so that $\|\sched(y)-\sched(x)\|\le\pathlen/n$.
\end{lemma}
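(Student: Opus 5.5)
The plan is to use only two properties of the variation function: $\pathlen(\cdot)$ is nondecreasing with $\pathlen(0)=0$ and $\pathlen(T)=\pathlen$, and $\|\sched(y)-\sched(x)\|\le\pathlen(y)-\pathlen(x)$ for $x\le y$. Both were noted after \Cref{def:variation-checkpoints}. Everything then reduces to elementary facts about infima of superlevel sets of a monotone function. I will write $S_j=\{t\in[0,T]:\pathlen(t)\ge j\pathlen/n\}$ for $0<j<n$.

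\emph{Step 1: monotonicity of the checkpoints.} Each $S_j$ is nonempty, since $T\in S_j$ because $\pathlen(T)=\pathlen\ge j\pathlen/n$. Hence $t_j\in[0,T]$. The sets are nested, $S_{j+1}\subseteq S_j$, so $t_j\le t_{j+1}$ for $0<j<n-1$. At the ends, $t_0=0\le t_1$ is trivial, and $t_{n-1}\le T=t_n$ holds because $T\in S_{n-1}$.

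\emph{Step 2: the lower bound.} Take $t_j<x$ with $j\ge1$. By the definition of the infimum there is $s\in S_j$ with $s<x$, and monotonicity gives $\pathlen(x)\ge\pathlen(s)\ge j\pathlen/n$. For $j=0$ the bound $\pathlen(x)\ge0$ is immediate. The middle inequality $\pathlen(x)\le\pathlen(y)$ for $x\le y$ is just monotonicity.

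\emph{Step 3: the upper bound.} Take $y<t_{j+1}$. If $j+1<n$, then $y\notin S_{j+1}$, since otherwise $t_{j+1}\le y$. So $\pathlen(y)<(j+1)\pathlen/n$. If $j+1=n$, then $\pathlen(y)\le\pathlen(T)=\pathlen$ directly. Combining Steps 2 and 3 gives the displayed chain of inequalities. Subtracting the outer levels then yields
\[
\|\sched(y)-\sched(x)\|\le\pathlen(y)-\pathlen(x)\le\pathlen/n.
\]

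The only real subtlety is that $\pathlen(\cdot)$ may jump, because $\sched$ is merely regulated. Consequently $\pathlen(t_j)$ itself need not equal $j\pathlen/n$, and consecutive checkpoints may coincide. This is exactly why the statement uses strict inequalities $t_j<x$ and $y<t_{j+1}$ and avoids evaluating at the checkpoints. The arguments in Steps 2 and 3 use only strict comparison with the infimum, so no continuity of $\pathlen$ is needed. If $t_j=t_{j+1}$, the interval is empty and the claim is vacuous.
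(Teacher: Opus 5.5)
Your proof is correct and follows the paper's argument essentially step for step. The superlevel sets contain $T$ and are nested, so their infima are nondecreasing; $x>t_j$ supplies a point of the $j$th set below $x$, which gives the lower bound; $y<t_{j+1}$ keeps $y$ outside the $(j+1)$st set, which gives the upper bound; the endpoint cases come from $0\le\pathlen(t)\le\pathlen$, and subtracting with $\|\sched(y)-\sched(x)\|\le\pathlen(y)-\pathlen(x)$ finishes.
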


\begin{proof}
The sets defining the infima contain $T$ and shrink as $j$ increases, so their infima are nondecreasing.
For $j>0$, $x>t_j$ puts a point of the $j$th set below $x$,
and for $j+1<n$, $y<t_{j+1}$ keeps $y$ outside the next set, so
monotonicity gives the displayed bounds, with endpoint cases given by $0\le\pathlen(t)\le\pathlen$.
Subtracting gives $\|\sched(y)-\sched(x)\|\le\pathlen(y)-\pathlen(x)\le\pathlen/n$.
\end{proof}

\subsection{Midpoint products}\label{sec:main}
We turn checkpoints into products by holding truncated midpoint samples constant at one extra bit of precision, which halves the truncation error $\sqrt2\,\sparsity\,2^{-\bits}$ and leaves half for variation within each interval.
\begin{definition}\label{def:midpoint-product}
For nondecreasing checkpoints $0=t_0\le t_1\le\dots\le t_n=T$ and a nonnegative integer
$\bits$, the \emph{midpoint product} is generated by
the $(\bits+1)$-bit truncation of each interval's midpoint sample:
\begin{align*}
  W&=e^{-i(t_n-t_{n-1})\comp_{n-1}}\cdots e^{-i(t_1-t_0)\comp_0},\\
  \comp_j&=\sched\bigl(\tfrac{t_j+t_{j+1}}2\bigr)^{(\bits+1)},
\end{align*}
where an interval of length zero contributes the identity.
\end{definition}

\begin{lemma}\label{lem:main}
Let $\sched$ be $\sparsity$-sparse and write $\evol$ for its evolution, which exists in each case below (\Cref{prop:regulated-evolution}). Let
$0<\adiaberr\le\sqrt2\,\sparsity\,T$, and let $\bits$ be
the least nonnegative integer with $2^\bits\ge\sqrt2\,\sparsity\,T/\adiaberr$.
\begin{enumerate}
\item If $\sched$ is regulated, there are checkpoints whose midpoint product $W$ (\Cref{def:midpoint-product}) satisfies
\(
  \|W-\evol(T)\|\le\sqrt2\,\sparsity\,2^{-\bits}T\le\adiaberr.
\)
\item (Equal-variation steps.) If $0<\pathlen<\infty$, $W$ satisfies the same bound for
$n=\lceil2^{\bits+1}\pathlen/(\sqrt2\,\sparsity)\rceil$ and the checkpoints of
\Cref{lem:inverse-levels}, and $n\le1+4T\pathlen/\adiaberr$.
\item (Uniform steps \cite{dam2001}.) If $\lip>0$ and $\|\sched(t)-\sched(u)\|\le\lip\,|t-u|$, $W$ satisfies the same bound for
$t_j=jT/n$ with $n=\lceil2^{\bits+1}\lip T/(\sqrt2\,\sparsity)\rceil$, and
$n\le1+4\lip T^2/\adiaberr$.
\end{enumerate}
\end{lemma}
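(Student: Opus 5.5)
The plan is to compare $W$ with $\evol(T)$ through the piecewise-constant family $\comp(t)=\comp_j$ on $(t_j,t_{j+1})$, whose evolution at time $T$ is $W$ (\Cref{def:piecewise}). \Cref{lem:duhamel} then reduces all three parts to a uniform bound
\[
\|\sched(t)-\comp(t)\|\le\sqrt2\,\sparsity\,2^{-\bits}
\]
for all $t$ outside the finitely many checkpoints. The countable exceptional set is the checkpoints together with the discontinuities of $\sched$ (\Cref{prop:regulated-evolution}). Integrating this bound gives $\sqrt2\,\sparsity\,2^{-\bits}T$, which is at most $\adiaberr$ by the choice of $\bits$.

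The uniform bound splits by the triangle inequality at the midpoint sample $m_j=(t_j+t_{j+1})/2$:
\[
\|\sched(t)-\comp_j\|\le\|\sched(t)-\sched(m_j)\|+\|\sched(m_j)-\sched(m_j)^{(\bits+1)}\|.
\]
By \Cref{lem:truncation-error} at precision $\bits+1$, the second term is at most $\tfrac12\sqrt2\,\sparsity\,2^{-\bits}$. So in each part it suffices to choose checkpoints with
\[
\|\sched(t)-\sched(m_j)\|\le\tfrac12\sqrt2\,\sparsity\,2^{-\bits}=\sparsity2^{-\bits}/\sqrt2
\quad\text{for } t\in(t_j,t_{j+1}).
\]

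\textbf{Part (3).} With uniform steps, $|t-m_j|\le T/(2n)$, so the variation term is at most $\lip T/(2n)$. The choice $n\ge2^{\bits+1}\lip T/(\sqrt2\,\sparsity)$ makes this at most $\sparsity2^{-\bits}/(2\sqrt2)$, which is below the requirement. For the count, minimality of $\bits$ gives $2^\bits<2\sqrt2\,\sparsity T/\adiaberr$ when $\bits\ge1$, and $2^{\bits}=1\le\sqrt2\,\sparsity T/\adiaberr$ when $\bits=0$ by the hypothesis on $\adiaberr$. In either case $2^{\bits+1}\le4\sqrt2\,\sparsity T/\adiaberr$, so $n\le1+4\lip T^2/\adiaberr$.

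\textbf{Part (2).} Take the checkpoints of \Cref{lem:inverse-levels}. For $t$ and $m_j$ both in the open interval, that lemma gives $\|\sched(t)-\sched(m_j)\|\le\pathlen/n\le\sqrt2\,\sparsity2^{-\bits-1}$, which is exactly the requirement. Intervals of zero length contribute nothing, and the midpoint of a positive-length interval lies in its interior. The count follows as in part (3), replacing $\lip T$ by $\pathlen$.

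\textbf{Part (1).} This is the main obstacle, since a regulated family need not have finite variation. I would use the uniform approximation of regulated maps by step functions \cite[(7.6.1)]{dieudonne1960} at tolerance $\sparsity2^{-\bits}/(2\sqrt2)$. This gives a partition on each of whose open pieces the oscillation of $\sched$ is at most $\sparsity2^{-\bits}/\sqrt2$: if $\sched$ is within $\epsilon$ of a constant $c_j$ on the piece, then any two values differ by at most $2\epsilon$. Taking these partition points as checkpoints, with midpoints inside the pieces, makes the variation term at most the oscillation. The one point to check is that the step approximant's values at partition points are irrelevant, which holds because only open pieces enter, and the endpoints form a finite, hence null, set for \Cref{lem:duhamel}.
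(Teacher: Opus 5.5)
Your proposal is correct and follows the paper's proof essentially step for step. Duhamel reduces everything to the uniform bound $\sqrt2\,\sparsity\,2^{-\bits}$, which you split at the midpoint sample into a $(\bits+1)$-bit truncation error plus an oscillation term, and that term is controlled by the step approximation at tolerance $\sqrt2\,\sparsity\,2^{-\bits-2}$, by \Cref{lem:inverse-levels}, or by the Lipschitz bound, with the same minimality argument for the counts. The differences are cosmetic, such as your sharper $\lip T/(2n)$ in part (3) from the distance to the midpoint; you might add, as the paper does, that the ceilings are positive because $\pathlen,\lip T>0$, so $n\ge1$.
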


\begin{proof}
In (1), take a step approximation within
$\sqrt2\,\sparsity\,2^{-\bits-2}$ of $\sched$ \cite[(7.6.1)]{dieudonne1960}, with its piece endpoints
as checkpoints. The triangle inequality through its constant value bounds each open piece's oscillation by
$\sqrt2\,\sparsity\,2^{-\bits-1}$.
In (2), \Cref{lem:inverse-levels} bounds it by
$\pathlen/n\le\sqrt2\,\sparsity\,2^{-\bits-1}$, and in (3), the Lipschitz bound on length $T/n$ gives
$\lip T/n\le\sqrt2\,\sparsity\,2^{-\bits-1}$, both by the ceilings, which are positive since $\pathlen,\lip T>0$.

For $t_j<t<t_{j+1}$, the triangle inequality through the midpoint sample
$\sched\bigl(\tfrac{t_j+t_{j+1}}2\bigr)$ and
\Cref{lem:truncation-error} at $\bits+1$ bits give
\[
  \|\sched(t)-\comp_j\|
  \le\sqrt2\,\sparsity\,2^{-\bits-1}+\sqrt2\,\sparsity\,2^{-\bits-1}
  =\sqrt2\,\sparsity\,2^{-\bits}.
\]
Since $W$ is generated piecewise by $\comp_j$, \Cref{lem:duhamel} and the choice of $\bits$ give
$\|W-\evol(T)\|\le\sqrt2\,\sparsity\,2^{-\bits}T\le\adiaberr$.

For $\bits>0$, minimality gives
$\adiaberr\,2^{\bits-1}<\sqrt2\,\sparsity\,T$, hence
$2^{\bits+1}/(\sqrt2\,\sparsity)\le4T/\adiaberr$, and for $\bits=0$, the same bound follows from
$\adiaberr\le\sqrt2\,\sparsity\,T$. The ceiling bound
$\lceil x\rceil\le x+1$ then gives both counts.
\end{proof}

When $\pathlen(t)$ is hard to compute, any nondecreasing $\gauge$
with $\|\sched(t)-\sched(s)\|\le\gauge(t)-\gauge(s)$ for $s\le t$ gives the same results in \Cref{lem:inverse-levels} and \Cref{lem:main}(2),
replacing $\pathlen(t)$ by $\gauge(t)-\gauge(0)$ and $\pathlen$ by $\gauge(T)-\gauge(0)$.
Since total variation is bounded by every such increase, equal-variation steps use
no more checkpoints at the same precision.

\subsection{Finite-precision transfer}\label{sec:dat}
We now transfer leakage bounds to the induced and midpoint products, accounting for changes in the evolution and the endpoint projectors.
\begin{lemma}\label{lem:transfer}
Let $\sched_0$ and $\sched_T$ have a gap $\gap>0$ around intervals
$I$ and $J$, respectively, and let $D_0$ and $D_T$ be Hermitian with
$\|\sched_0-D_0\|,\|\sched_T-D_T\|\le\errend$. Write $P$ and $Q$ for the
spectral projectors of $\sched_0$ for $I$ and of $\sched_T$ for $J$, and $P'$
and $Q'$ for those of $D_0$ for $I^{+}$ and of $D_T$ for $J^{+}$. Then for every
matrix $X$ of norm at most one and every matrix $Y$,
\[
  \bigl\|(1-Q')YP'-(1-Q)XP\bigr\|\le\|Y-X\|+\frac{4\errend}{\gap}.
\]
\end{lemma}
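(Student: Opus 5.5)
The plan is a telescoping triangle inequality, with one Davis--Kahan step at each endpoint. I would write
\[
 (1-Q')YP'-(1-Q)XP=(1-Q')(Y-X)P'+\bigl[(1-Q')XP'-(1-Q)XP\bigr].
\]
The first term has norm at most $\|Y-X\|$, since projectors are contractions. The bracket splits further as
\[
 (1-Q')X(P'-P)+(Q-Q')XP .
\]
Using $\|X\|\le1$, it therefore suffices to show $\|P'-P\|\le2\errend/\gap$ and $\|Q-Q'\|\le2\errend/\gap$.

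Both projector bounds are the same statement, so consider $P$ versus $P'$. The difference of two orthogonal projectors satisfies
\[
 \|P'-P\|=\max\bigl(\|(1-P')P\|,\|(1-P)P'\|\bigr).
\]
Alternatively, one can write $P'-P=P'(1-P)-(1-P')P$ and bound the two pieces separately. Either way, I bound each piece by $2\errend/\gap$ with \Cref{thm:davis-kahan}, applied twice with the roles of the two matrices swapped.

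\emph{First direction, $\|(1-P')P\|$.} Apply \Cref{thm:davis-kahan} with $\sched=\sched_0$, the interval $I$, $\comp=D_0$, and $J=I^{+}$. We need every eigenvalue of $D_0$ outside $I^{+}$ to lie at distance at least $\gap/2$ from $I$, and this holds by the definition of $I^{+}$. The theorem then gives $\|(1-P')P\|\le\errend/(\gap/2)=2\errend/\gap$.

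\emph{Reverse direction, $\|(1-P)P'\|$.} Apply \Cref{thm:davis-kahan} with $\sched=D_0$, the interval $I^{+}$, $\comp=\sched_0$, and the interval $I$. We need every eigenvalue of $\sched_0$ outside $I$ to lie at distance at least $\gap/2$ from $I^{+}$. The gap hypothesis puts these eigenvalues at distance at least $\gap$ from $I$, hence at distance at least $\gap/2$ from $I^{+}$. The theorem again gives $2\errend/\gap$.

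\emph{Combining.} For the $\max$ identity I would include a short justification. For projectors $P,P'$, the operator $P'-P$ decomposes into the blocks $P'(1-P)$ and $-(1-P')P$, which act between orthogonal pieces. More precisely, $(P'-P)^2$ commutes with both projectors, and this yields the norm equal to the max. If that proof feels too long, I would instead use the cruder route $\|P'-P\|\le\|(1-P)P'\|+\|(1-P')P\|$, but that gives $4\errend/\gap$ per endpoint and $8\errend/\gap$ in total. So the $\max$ identity is needed to get the stated constant $4$, and it is the main obstacle. A cleaner alternative avoids the identity altogether: bound $\|(1-Q')XP'-(1-Q)XP\|$ directly by inserting the projectors asymmetrically so that only one-sided products of the form $(1-P')P$ or $(1-P)P'$ appear. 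I will settle on whichever is shorter, citing the standard fact $\|P-P'\|=\max(\|(1-P)P'\|,\|(1-P')P\|)$ (see \cite{bhatia1997}) if necessary.

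The same argument at the final time, with $\sched_T$, $D_T$, $J$, and $J^{+}$, gives $\|Q-Q'\|\le2\errend/\gap$. Summing the three contributions gives $\|Y-X\|+4\errend/\gap$.
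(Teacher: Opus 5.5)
Your proposal is correct and is essentially the paper's proof: the same three-term decomposition, Davis--Kahan applied in both directions with gap $\gap/2$, and the identity $\|P-P'\|=\max(\|(1-P')P\|,\|(1-P)P'\|)$. The paper justifies that identity by noting that $(P-P')^2=P(1-P')P+(1-P)P'(1-P)$ is block diagonal on $\operatorname{ran}P\oplus\ker P$ with block norms $\|(1-P')P\|^2$ and $\|(1-P)P'\|^2$, which is exactly the structure behind your remark that $(P'-P)^2$ commutes with $P$, so you can drop the hedging and the ``asymmetric insertion'' alternative, which would only reproduce this identity in disguise.
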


\begin{proof}
Since $I^+$ widens $I$ by $\gap/2$, eigenvalues of $D_0$ outside $I^+$ are more than $\gap/2$ from $I$.
The gap separates eigenvalues of $\sched_0$ outside $I$ from $I^+$ by at least $\gap-\gap/2=\gap/2$.
Thus, \Cref{thm:davis-kahan} in both directions, with gap $\gap/2$, bounds
$\|(1-P')P\|$ and $\|(1-P)P'\|$ by $2\|\sched_0-D_0\|/\gap$.
Their maximum is $\|P-P'\|$, since $P-P'$ is Hermitian and
$(P-P')^2=P(1-P')P+(1-P)P'(1-P)$ is block diagonal on
$\operatorname{ran}P\oplus\ker P$, with block norms $\|(1-P')P\|^2$ and $\|(1-P)P'\|^2$.
Hence, $\|P-P'\|\le2\errend/\gap$ by the endpoint error bound. The same argument at $T$
gives $\|Q-Q'\|\le2\errend/\gap$.

Adding and subtracting $(1-Q')XP$ and $(1-Q')XP'$ gives
\[
  (1-Q')YP'-(1-Q)XP
  =(Q-Q')XP+(1-Q')(Y-X)P'+(1-Q')X(P'-P).
\]
Since $X$ and the orthogonal projectors are contractions, the triangle inequality and
submultiplicativity bound its norm by
$\|Q-Q'\|+\|Y-X\|+\|P'-P\|\le\|Y-X\|+4\errend/\gap$.
\end{proof}

\begin{remark}\label{rem:gap}
The gap $\gap$ enters \Cref{lem:transfer} only through the replacement of $\sched_0$ and
$\sched_T$ by $D_0$ and $D_T$. Any other perturbation bound for the endpoint projectors can replace Davis--Kahan, and it
then decides how, if at all, a gap appears. Since
the intervals $I$ and $J$ are chosen independently, the eigenvalues followed may even shift in energy between
the ends.
\end{remark}

With truncated endpoints, \Cref{lem:transfer} applies to both products.

\begin{theorem}\label{thm:transfer}
Let $\sched$ be regulated and $\sparsity$-sparse, and let $\bits$ be a nonnegative integer.
Let $W$ be either
\begin{enumerate}
\item the induced product of a piecewise-constant truncation $\sched^{(\bits)}$ (\Cref{lem:induced}), or
\item a midpoint product of \Cref{lem:main} at some accuracy $\adiaberr$, with $\bits$ chosen for that accuracy and $(\bits+1)$-bit generators.
\end{enumerate}
Let $\sched(0)$ and $\sched(T)$ have gaps $\gap$ around $I$ and $J$, put $P=P_I(\sched(0))$ and
$Q=P_J(\sched(T))$, and let $P'$ and $Q'$ be the spectral projectors of the $\bits$-bit endpoint truncations
$\sched^{(\bits)}(0)$ for $I^{+}$ and $\sched^{(\bits)}(T)$ for $J^{+}$. In either case,
\[
  \bigl\|(1-Q')WP'-(1-Q)\evol(T)P\bigr\|\le\sqrt2\,\sparsity\,2^{-\bits}(T+4/\gap).
\]
Moreover, $\|(1-S)(W-\evol(T))R\|\le\sqrt2\,\sparsity\,2^{-\bits}T$ for arbitrary orthogonal projectors $R,S$.
\end{theorem}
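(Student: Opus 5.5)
The plan is to apply \Cref{lem:transfer} with $X=\evol(T)$, $Y=W$, $D_0=\sched^{(\bits)}(0)$, $D_T=\sched^{(\bits)}(T)$, and $\errend=\sqrt2\,\sparsity\,2^{-\bits}$. Two inputs are needed: the endpoint error and the evolution error. Since $X$ is unitary, $\|X\|\le1$. Truncation preserves Hermiticity, so $D_0$ and $D_T$ are Hermitian. \Cref{lem:truncation-error} at $t=0$ and $t=T$ gives $\|\sched(0)-D_0\|,\|\sched(T)-D_T\|\le\sqrt2\,\sparsity\,2^{-\bits}$. This holds in both cases, since the endpoint projectors are always taken from the $\bits$-bit truncations, even when the midpoint generators use $\bits+1$ bits. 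The gap hypotheses on $\sched(0)$ and $\sched(T)$ are exactly those of \Cref{lem:transfer}.

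Next, I bound $\|W-\evol(T)\|\le\sqrt2\,\sparsity\,2^{-\bits}T$ in each case. In case (1), this is \Cref{lem:induced}, since a piecewise-constant truncation is regulated. In case (2), \Cref{lem:main} states $\|W-\evol(T)\|\le\sqrt2\,\sparsity\,2^{-\bits}T$ as its first inequality, before the comparison with $\adiaberr$. This is the bound to use, since $\bits$ is the integer chosen there. Substituting into \Cref{lem:transfer} gives $\sqrt2\,\sparsity\,2^{-\bits}T+4\sqrt2\,\sparsity\,2^{-\bits}/\gap$, which is the claimed $\sqrt2\,\sparsity\,2^{-\bits}(T+4/\gap)$.

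For the final claim, orthogonal projectors are contractions, so submultiplicativity gives $\|(1-S)(W-\evol(T))R\|\le\|1-S\|\,\|W-\evol(T)\|\,\|R\|\le\|W-\evol(T)\|$. Here $1-S$ is itself an orthogonal projector, so $\|1-S\|\le1$. The same evolution bound then applies.

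I expect no real obstacle. The only point needing care is case (2). There the generators carry $\bits+1$ bits while the endpoint truncations carry $\bits$ bits, so I must use the $\bits$-bit constant from \Cref{lem:main}, not $\adiaberr$ and not a $(\bits+1)$-bit bound, for the evolution error. I must also check that the midpoint checkpoints need no relation to the endpoint truncations. This is fine, because \Cref{lem:transfer} treats $Y$ as arbitrary.
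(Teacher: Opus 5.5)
Your proposal is correct and matches the paper's proof: apply \Cref{lem:transfer} with $X=\evol(T)$, $Y=W$, $D_0=\sched^{(\bits)}(0)$, $D_T=\sched^{(\bits)}(T)$, and $\errend=\sqrt2\,\sparsity\,2^{-\bits}$ from \Cref{lem:truncation-error}. The evolution bound $\|W-\evol(T)\|\le\sqrt2\,\sparsity\,2^{-\bits}T$ comes from \Cref{lem:induced} or from the first inequality of \Cref{lem:main}, and the final claim follows from contractivity of orthogonal projectors.
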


\begin{proof}
By \Cref{lem:induced} or \Cref{lem:main}, $\|W-\evol(T)\|\le\sqrt2\,\sparsity\,2^{-\bits}T$,
so, by \Cref{lem:truncation-error}, \Cref{lem:transfer} applies with Hermitian $D_0=\sched^{(\bits)}(0)$,
$D_T=\sched^{(\bits)}(T)$, unitary $X=\evol(T)$, $Y=W$, and $\errend=\sqrt2\,\sparsity\,2^{-\bits}$.
For orthogonal projectors $R,S$, contractivity gives $\|(1-S)(W-\evol(T))R\|\le\|W-\evol(T)\|$.
\end{proof}

\begin{definition}\label{def:continuous-discrete-leakage}
When $\sched(0)$ and $\sched(T)$ have a gap $\gap$ around intervals $I$ and $J$, respectively, we define the \emph{continuous leakage} as the leakage of
$\evol(T)$ from $P_I(\sched(0))$ to $P_J(\sched(T))$. For a product $W$ and precision $\bits$, we define the
\emph{discrete leakage} as the leakage of $W$ from $P_{I^+}(\sched^{(\bits)}(0))$ to $P_{J^+}(\sched^{(\bits)}(T))$.
\end{definition}

\subsection{Adiabatic consequences}
By the reverse triangle inequality, either product of \Cref{thm:transfer} has discrete leakage within
$\adiaberr>0$ of the continuous leakage whenever $2^\bits\ge\sqrt2\,\sparsity\,(T+4/\gap)/\adiaberr$.

\begin{remark}\label{rem:tight}
The leakage comparison is exact in the limit $\bits\to\infty$. \Cref{lem:induced,lem:main} give $W\to\evol(T)$,
\Cref{thm:transfer} gives convergence of the leakages, while the proof of \Cref{lem:transfer} gives
$\|P'-P\|,\|Q'-Q\|\le2\sqrt2\,\sparsity\,2^{-\bits}/\gap\to0$.
\end{remark}

For a band of $m$ eigenvalues, counted with multiplicity and isolated from the rest of the spectrum by a gap, we use the continuous bound of Jansen, Ruskai, and Seiler \cite{jansen2007}. I do not claim it is the
tightest available, and its dependence on the multiplicity $m$ is an artifact of that analysis
that can always be improved \cite{bringewatt2025}.

\begin{definition}\label{def:jrs-coefficient}
For an integer $m\ge1$ and functions $u,v\ge0$ and $g>0$ on $[0,1]$, the \emph{Jansen--Ruskai--Seiler coefficient} is
\begin{equation}\label{eq:jrs-coefficient}
  J_m(u,v,g)=\frac{m\,u(0)}{g(0)^2}+\frac{m\,u(1)}{g(1)^2}
  +\int_0^1\Bigl(\frac{m\,v}{g^2}+7m\sqrt m\,\frac{u^2}{g^3}\Bigr)\,ds.
\end{equation}
\end{definition}
A $C^2$ path $\tilde\sched$ on $[0,1]$ with such a band and gap $g$ has leakage at most
$J_m(\|\tilde\sched'\|,\|\tilde\sched''\|,g)/T$ when traversed as $\sched(t)=\tilde\sched(t/T)$ on $[0,T]$ \cite[Theorem~3]{jansen2007}.
Since $J_m$ increases with $u$ and $v$, the bound remains valid when the derivative norms are
replaced by envelopes $u\ge\|\tilde\sched'\|$ and $v\ge\|\tilde\sched''\|$.

\begin{corollary}\label{cor:jrs}
Let $\tilde\sched$ be $\sparsity$-sparse and twice continuously
differentiable on an open interval containing $[0,1]$. Let $\tilde\sched(s)$ have a band
of $m$ eigenvalues, counted with multiplicity, isolated from the rest of its spectrum by a gap
$g(s)>0$, as in \cite[Theorem~3]{jansen2007}, and let $P$ and $Q$ be the band's spectral projectors at $s=0$ and $s=1$. Put
$C=J_m(\|\tilde\sched'\|,\|\tilde\sched''\|,g)$, and let $\pathlen$ be the total variation of $\tilde\sched$ on $[0,1]$. Suppose that $C$ and $\pathlen$ are positive, and let
$\sched(t)=\tilde\sched(t/T)$. Assume the continuous bound $\|(1-Q)\evol(T)P\|\le C/T$ of that theorem, and suppose that $C\le\sqrt2\,\sparsity\,T^2$. Let
\[
  \bits=\lceil\log_2(\sqrt2\,\sparsity\,T^2/C)\rceil,
\]
and let $W$ be the midpoint product with the equal-variation checkpoints of \Cref{lem:inverse-levels}, as in \Cref{lem:main}(2). Then $W$ has $(\bits+1)$-bit generators,
$n\le1+4\pathlen\,T^2/C$ unitary factors, and leakage
$\|(1-Q)WP\|\le2C/T$. For a target leakage $\delta>0$ and $T=2C/\delta$, provided
$\delta^2\le4\sqrt2\,\sparsity C$, the entry precision and the number of unitaries obey
\begin{align}
  \bits+1&\le2\log_2(1/\delta)+\log_2(4\sqrt2\,\sparsity C)+2,\nonumber\\
  n&\le1+16\pathlen C/\delta^2.\label{eq:target-cost}
\end{align}
\end{corollary}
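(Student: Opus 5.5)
The plan is to apply \Cref{lem:main}(2) to the traversed family $\sched(t)=\tilde\sched(t/T)$ on $[0,T]$, with accuracy $\adiaberr=C/T$, and then use the final statement of \Cref{thm:transfer} with $R=P$ and $S=Q$.

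First, the bookkeeping for $\sched$. Since $\tilde\sched$ is $C^2$ on a neighborhood of $[0,1]$, it is continuous, so $\sched$ is regulated and $\sparsity$-sparse. Reparametrizing time preserves total variation, so $\sched$ has total variation $\pathlen$ on $[0,T]$, with $0<\pathlen<\infty$. The hypothesis $C\le\sqrt2\,\sparsity T^2$ is exactly $\adiaberr\le\sqrt2\,\sparsity T$, so \Cref{lem:main} applies.

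Second, the precision. The least nonnegative $\bits$ with $2^\bits\ge\sqrt2\,\sparsity T/\adiaberr=\sqrt2\,\sparsity T^2/C$ is $\lceil\log_2(\sqrt2\,\sparsity T^2/C)\rceil$. This matches the statement, and it is nonnegative because the argument of the logarithm is at least $1$. \Cref{lem:main}(2) then gives $(\bits+1)$-bit generators, the bound $\|W-\evol(T)\|\le C/T$, and $n\le1+4T\pathlen/\adiaberr=1+4\pathlen T^2/C$.

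Third, the leakage. Contractivity gives $\|(1-Q)(W-\evol(T))P\|\le C/T$. Adding the assumed continuous bound $\|(1-Q)\evol(T)P\|\le C/T$ by the triangle inequality yields $\|(1-Q)WP\|\le 2C/T$. The projectors here are the exact $P,Q$, so no Davis--Kahan step is needed.

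Finally, substitute $T=2C/\delta$, so that $2C/T=\delta$ and $T^2/C=4C/\delta^2$. The count becomes $n\le1+16\pathlen C/\delta^2$. For the precision,
\[
\bits+1\le\log_2\bigl(4\sqrt2\,\sparsity C/\delta^2\bigr)+2=2\log_2(1/\delta)+\log_2(4\sqrt2\,\sparsity C)+2,
\]
using $\lceil x\rceil\le x+1$. The side condition $\delta^2\le4\sqrt2\,\sparsity C$ is exactly $C\le\sqrt2\,\sparsity T^2$ at this $T$, so the earlier hypothesis holds.

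No step is a real obstacle. The only points needing care are matching $\bits$ to the minimality convention of \Cref{lem:main} and checking that the side condition keeps $\bits\ge0$ and $\adiaberr\le\sqrt2\,\sparsity T$.
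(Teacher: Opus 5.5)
Your proposal is correct and follows the paper's proof essentially step for step. You apply \Cref{lem:main}(2) to $\tilde\sched(t/T)$ at accuracy $C/T$, use $C\le\sqrt2\,\sparsity T^2$ to identify $\bits$ as the least admissible precision, combine $\|W-\evol(T)\|\le C/T$ with the continuous bound through the triangle inequality and contractivity for the exact projectors, and then substitute $T=2C/\delta$; the only cosmetic difference is that you cite contractivity through the last clause of \Cref{thm:transfer}, whereas the paper writes $\|(1-Q)WP\|\le\|(1-Q)\evol(T)P\|+\|W-\evol(T)\|$ directly.
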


\begin{proof}
We apply \Cref{lem:main}(2) at accuracy $\adiaberr=C/T$ to $\sched$, which has variation $\pathlen$
because the increasing rescaling $t\mapsto t/T$ preserves total variation.
Since $C/T>0$, for every $j\in\mathbb N$,
$2^j\ge\sqrt2\,\sparsity\,T/(C/T)$ is equivalent to
$\log_2(\sqrt2\,\sparsity\,T^2/C)\le j$.
Since $C\le\sqrt2\,\sparsity\,T^2$, the logarithm is nonnegative, so $\bits$ is the least
precision for accuracy $C/T$. \Cref{lem:main}(2) gives the count and $\|W-\evol(T)\|\le C/T$, so
$\|(1-Q)WP\|\le\|(1-Q)\evol(T)P\|+\|W-\evol(T)\|\le2C/T$.

For \eqref{eq:target-cost}, $\delta^2\le4\sqrt2\,\sparsity C$ is
$C\le\sqrt2\,\sparsity\,T^2$ at $T=2C/\delta$. Substituting $T^2/C=4C/\delta^2$ in the count gives
$n\le1+4\pathlen T^2/C=1+16\pathlen C/\delta^2$, and in the precision gives
\begin{align*}
  \bits+1&\le\log_2(\sqrt2\,\sparsity\,T^2/C)+2 &&\text{since $\lceil x\rceil\le x+1$}\\
  &=2\log_2(1/\delta)+\log_2(4\sqrt2\,\sparsity C)+2 &&\text{since $T=2C/\delta$}.
\end{align*}
\end{proof}

\Cref{cor:jrs} keeps the band's own projectors, so its leakage has no $4/\gap$ term.
The standard fine-step bound \eqref{eq:target-cost} counts $O(1/\delta^2)$ unitaries, and improving it at fine steps requires further structure,
whereas at unit steps the count is $O(1/\delta)$ (\Cref{cor:unit-steps-jrs}).
\section{The number of unitaries}\label{sec:pieces}

In this section, we show that piecewise-monotone families have piecewise-constant truncations and count their factors (\Cref{lem:induced,thm:transfer}). Changes occur only when an entry's real or imaginary part reaches or crosses a multiple of $2^{-\bits}$, so counting these crossings bounds the number of pieces.

\begin{definition}\label{def:coordinates}
The
\emph{coordinates} of a $d\times d$ Hermitian matrix are the real and imaginary parts of its entries on
and above the diagonal, $d^2$ of them once the vanishing imaginary parts on the diagonal
are dropped. A coordinate $h$ of $\sched$ is a real function
of $t$, and $2^\bits[h(t)]_\bits$ is its \emph{grid level}.
A family is \emph{piecewise monotone} if its coordinates are monotone on each interval of a finite partition.
\end{definition}

\begin{proposition}\label{prop:pieces-exist}
Let $\bits\ge0$ be an integer and $0=\xi_0<\xi_1<\dots<\xi_N=T$ be times such that every coordinate of $\sched$ is monotone
on each interval $[\xi_r,\xi_{r+1}]$. Then
$\sched^{(\bits)}$ is piecewise constant, with at most
\[
  \sum_{r=0}^{N-1}\Bigl(1+\sum_h\bigl\lceil 2^\bits\,
  |h(\xi_{r+1})-h(\xi_r)|\bigr\rceil\Bigr)
\]
pieces, the inner sum running over the coordinates of $\sched$.
\end{proposition}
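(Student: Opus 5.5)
Work one monotone interval $[\xi_r,\xi_{r+1}]$ and one coordinate $h$ at a time, then combine. The truncation $\sched^{(\bits)}(t)$ is determined by the grid levels $2^\bits[h(t)]_\bits$ of all coordinates, since truncation acts on real and imaginary parts separately and Hermiticity is preserved. So it suffices to show that on each interval every grid level is piecewise constant with few changes, and then take the union of all change times as piece endpoints.

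Fix $r$ and a coordinate $h$, monotone on $[\xi_r,\xi_{r+1}]$, say nondecreasing; the nonincreasing case follows by applying the argument to $-h$, using $[-x]_\bits=-[x]_\bits$. The map $x\mapsto\operatorname{trunc}(2^\bits x)$ is a nondecreasing step function of $x$. Its values are constant on each set $\{x:m\le 2^\bits x<m+1\}$ for $m\ge1$, on $\{-m-1<2^\bits x\le -m\}$ for $m\ge1$, and on $(-1,1)$ with value $0$. Composed with nondecreasing $h$, the grid level is therefore nondecreasing in $t$ and integer-valued. Its preimages of single values are intervals (possibly degenerate), so it is piecewise constant on $[\xi_r,\xi_{r+1}]$. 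The number of distinct values it takes is at most $1$ plus the number of jumps. Each jump requires $2^\bits h$ to reach or cross a nonzero integer in the interval between $2^\bits h(\xi_r)$ and $2^\bits h(\xi_{r+1})$. An interval of length $a=2^\bits|h(\xi_{r+1})-h(\xi_r)|$ contains at most $\lceil a\rceil$ such integers when endpoint effects are handled. This endpoint bookkeeping is the step I expect to be fiddliest. I would argue that the level at $\xi_r$ is counted as the starting value, and that each subsequent integer reached in the half-open range $(2^\bits h(\xi_r),2^\bits h(\xi_{r+1})]$ contributes at most one new value. A half-open interval of length $a$ contains at most $\lceil a\rceil$ integers. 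The zero cell of width two only merges values, so it reduces the count.

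To combine, on $[\xi_r,\xi_{r+1}]$ take all change times of all coordinates' grid levels. Each coordinate contributes at most $\lceil 2^\bits|h(\xi_{r+1})-h(\xi_r)|\rceil$ change points. Together with the start of the interval, these split it into at most $1+\sum_h\lceil\cdots\rceil$ subintervals on which every coordinate level, hence $\sched^{(\bits)}$, is constant. At a change time the value may differ from both neighbors. This is harmless, because \Cref{def:piecewise} only constrains values on open pieces and ignores values at piece endpoints. Summing over $r$, and noting that the $\xi_r$ themselves serve as piece endpoints, gives the stated total. It also shows that $\sched^{(\bits)}$ is piecewise constant with finitely many pieces.
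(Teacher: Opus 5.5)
Your overall structure is sound and close to the paper's. Reducing to the integer grid levels, taking on each monotone interval the union of all coordinates' breakpoints, and ignoring values at piece endpoints is equivalent to the paper's device: it uses the nondecreasing counter $c(t)=\sum_h|v_h(t)-v_h(\xi_r)|$, whose non-singleton level sets give the pieces. The gap is in the per-coordinate count you flagged as fiddly.

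Write $X=2^kh(\xi_r)\le Y=2^kh(\xi_{r+1})$. Your bookkeeping assigns each new grid value to an integer in $(X,Y]$, and this fails for negative values. Truncation toward zero equals $-m$ on $(-m-1,-m]$, so it is left-continuous at negative integers and jumps only when $2^kh$ \emph{leaves} $-m$. For example, $X=-2$ and $Y=-3/2$ give the two values $-2$ and $-1$. But $(-2,-3/2]$ contains no integer, so your count would allow only one value. The remark about the zero cell does not address this. The problem is which end of each negative cell is closed (the end toward zero), not the width of the zero cell.

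The bound $1+\lceil Y-X\rceil$ is nevertheless true, and the paper proves it without locating integers. The grid level is integer-valued and monotone, so it takes at most $1+\operatorname{trunc}(Y)-\operatorname{trunc}(X)$ values. Then $\operatorname{trunc}(Y)-\operatorname{trunc}(X)\le\lceil Y-X\rceil$ follows by cases:
\begin{itemize}
\item for $0\le X\le Y$ it is the floor-difference inequality $\lfloor Y\rfloor-\lfloor X\rfloor\le\lceil Y-X\rceil$;
\item for $X\le Y\le0$ it follows from that case applied to $-Y\le-X$, by oddness;
\item for $X<0<Y$ the difference is $\lfloor Y\rfloor+\lfloor|X|\rfloor\le Y-X$.
\end{itemize}
Alternatively, you can repair your integer count directly. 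Count positive integers in $(X,Y]$ and negative integers in $[X,Y)$; in each sign case this total is at most $\lceil Y-X\rceil$, by the same three cases. With either replacement, the rest of your argument goes through unchanged.
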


\begin{proof}
Fix $[a,b]=[\xi_r,\xi_{r+1}]$. By Hermiticity, the grid levels
$v_h(t)=2^\bits[h(t)]_\bits$ determine $\sched^{(\bits)}(t)$.
Since truncation preserves monotonicity, each $v_h$ is monotone, so
$c(t)=\sum_h|v_h(t)-v_h(a)|$ is nondecreasing and increases whenever any $v_h$ changes.
Thus, $\sched^{(\bits)}$ is constant on each level set of $c$, an interval since $c$ is nondecreasing,
and $c$ takes values in $\{0,\ldots,c(b)\}$.
Discard singleton level sets, since values at piece endpoints do not matter (\Cref{sec:prelim}).
The remaining intervals are adjacent, since finitely many singletons cannot cover a positive-length gap,
and their interiors give the pieces, at most one per value of $c$.

Truncation satisfies $2^\bits([y]_\bits-[x]_\bits)\le\lceil2^\bits(y-x)\rceil$ whenever $x\le y$.
For $0\le x\le y$, this is the floor-difference inequality $\lfloor2^\bits y\rfloor-\lfloor2^\bits x\rfloor\le\lceil2^\bits(y-x)\rceil$.
For $x\le y\le0$, it follows from the nonnegative case for $-y\le-x$ by oddness,
and for opposite signs,
$2^\bits([y]_\bits-[x]_\bits)=\lfloor2^\bits y\rfloor+\lfloor2^\bits|x|\rfloor\le2^\bits(y-x)$.
Applying this to each monotone $h$ gives
\begin{align*}
  \#\{\text{pieces on }[a,b]\}&\le\#\{\text{values of }c\}\le1+c(b)\\
  &=1+\sum_h|v_h(b)-v_h(a)|\le1+\sum_h\bigl\lceil2^\bits|h(b)-h(a)|\bigr\rceil .
\end{align*}
Summing over the intervals gives the bound.
\end{proof}

The count is especially simple for interpolation $\sched(t)=\sched(0)+f(t)(\sched(T)-\sched(0))$ with $f$
nondecreasing from $0$ to $1$. Each coordinate is
$h(0)+f(t)\,\Delta h$, with $\Delta h$ the corresponding coordinate of $\sched(T)-\sched(0)$, so
\Cref{prop:pieces-exist} with $N=1$ gives at most
\begin{equation}\label{eq:linear-count}
  1+\sum_h\bigl\lceil 2^\bits\,|\Delta h|\bigr\rceil
\end{equation}
pieces at fixed precision $\bits$, independently of $T$ and $f$.

\subsection{Comparison with time sampling}
To compare truncation with uniform time steps at equal accuracy, take a $\sparsity$-sparse, piecewise-monotone, and $\lip$-Lipschitz family with $\lip>0$.
As in \Cref{lem:main}, choose $0<\adiaberr\le\sqrt2\,\sparsity\,T$, and let $\bits$ be the least nonnegative integer with
$2^\bits\ge\sqrt2\,\sparsity\,T/\adiaberr$.
The induced product (\Cref{lem:induced}), with $\bits$-bit generators, and the midpoint product with uniform steps \cite{dam2001}, with $(\bits+1)$-bit generators (\Cref{lem:main}(3)), both approximate $\evol(T)$ within $\adiaberr$.
\Cref{prop:pieces-exist} bounds the induced product's factor count by the number of intervals plus coordinate changes in units of $2^{-\bits}$, independently of $\lip$, whereas uniform steps use
$n\le1+4\lip T^2/\adiaberr$ unitaries (\Cref{lem:main}(3)).

For interpolation at fixed $T$ and target accuracy $\adiaberr$, \eqref{eq:linear-count}
is independent of $f$, whereas the uniform-step bound depends, through $\lip$, on the fastest change of $f$.
Neither bound dominates, since brief bursts of fast change can favor truncation over uniform steps, while many simultaneously varying
coordinates can favor the dimension-independent uniform-step bound.
\section{Layer-cake truncation and comparison}\label{sec:levels}

In this section, we extend bit truncation to levels that may better suit a problem, replacing $d\lambda$ in the layer-cake identity \cite{lieb2001}
\[
  x=\int_{(0,\infty)}\mathbf1_{\{\lambda\le x\}}\,d\lambda\qquad(x\ge0)
\]
with any level measure.
We bound the Hamiltonian error, count pieces, and recover the fine-step comparison of \Cref{lem:main},
before proving the assembly bound (\Cref{thm:level-evolution}) used at unit steps in \Cref{sec:coarse}.

\begin{definition}[Level measure]\label{def:level-measure}
A \emph{level measure} $\mu$ is the Stieltjes measure of a nondecreasing, right-continuous
function $F$, zero on $(-\infty,0]$ and taking finitely many values on every $[0,R]$.
Thus, $F(x)=\mu((0,x])$ for $x\ge0$.
Set
\[
  [x]_\mu=\int_{(0,\infty)}\mathbf1_{\{\lambda\le x\}}\,d\mu(\lambda)=F(x)
  \quad(x\ge0),\qquad [-x]_\mu=-[x]_\mu.
\]
Applying this entrywise to real and imaginary parts defines
$\sched^{(\mu)}$. The \emph{discrepancy} on $[0,R]$ is
$D_R(\mu)=\sup_{0\le x\le R}|x-F(x)|$.
The \emph{crossing number} $\operatorname{cr}_\mu(x,y)$ is one less than the number of distinct
values $[z]_\mu$ for $\min(x,y)\le z\le\max(x,y)$.
It counts the levels that a coordinate moving monotonically from $x$ to $y$ crosses, so it bounds how many times its truncation changes.
For a map $q$ on $[0,T]$, a \emph{layer} is a pre-image $q^{-1}(\{v\})$.
\end{definition}

A uniform grid of spacing $c>0$ corresponds to $\mu=\sum_{m\ge1}c\delta_{mc}$,
with $[x]_\mu=c\operatorname{trunc}(x/c)$ and $D_R(\mu)\le c$. Bit truncation uses spacing $2^{-\bits}$, and then $\operatorname{cr}_\mu(x,y)$ is about $2^\bits|x-y|$, the count in \Cref{prop:pieces-exist}.
We now extend \Cref{lem:truncation-error}, replacing $2^{-\bits}$ by $D_R(\mu)$.

\begin{theorem}\label{thm:general-levels}
Let $\mu$ be a level measure, and let $\sched$ be a $\sparsity$-sparse family whose coordinates (\Cref{def:coordinates}) have magnitude at most $R$ for every $t$.
Then $\sched^{(\mu)}$ is Hermitian, has a zero entry wherever $\sched$ does and hence is $\sparsity$-sparse, and, for every $t$,
\[
  \|\sched(t)-\sched^{(\mu)}(t)\|\le\sqrt2\,\sparsity\,D_R(\mu).
\]
\end{theorem}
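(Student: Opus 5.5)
The plan mirrors the proof of \Cref{lem:truncation-error}, with the grid spacing $2^{-\bits}$ replaced by the discrepancy $D_R(\mu)$. Fix $t$ and suppress it.

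\emph{Step 1: structural properties.} First I would record what follows directly from \Cref{def:level-measure}. Since $F$ vanishes on $(-\infty,0]$, we have $[0]_\mu=F(0)=0$, so zero coordinates stay zero. Hence $\sched^{(\mu)}$ has a zero entry wherever $\sched$ does, and in particular it is $\sparsity$-sparse. Oddness $[-x]_\mu=-[x]_\mu$, applied to real and imaginary parts separately, gives $[\bar z]_\mu=\overline{[z]_\mu}$. For a diagonal entry the imaginary part is $0$ and stays $0$. Together these make $\sched^{(\mu)}$ Hermitian, exactly as noted before \Cref{lem:truncation-error} for bit truncation.

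\emph{Step 2: entrywise error.} Each real or imaginary part $x$ of an entry satisfies $|x|\le R$ by hypothesis. By oddness,
\[
  |x-[x]_\mu|=\bigl||x|-F(|x|)\bigr|\le D_R(\mu),
\]
since $|x|\in[0,R]$. Each entry of $\sched-\sched^{(\mu)}$ therefore has modulus at most $\sqrt2\,D_R(\mu)$.

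\emph{Step 3: row-sum bound.} The difference $\sched-\sched^{(\mu)}$ is Hermitian and has at most $\sparsity$ nonzero entries per row, because its zeros include those of $\sched$. For a Hermitian matrix the operator norm is at most the maximal absolute row sum, since $\|A\|\le\sqrt{\|A\|_1\|A\|_\infty}$ and the two are equal when $A$ is Hermitian. This gives
\[
  \|\sched-\sched^{(\mu)}\|\le\sqrt2\,\sparsity\,D_R(\mu).
\]

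No step is a real obstacle. The only point needing care is that the discrepancy is defined on $[0,R]$ for magnitudes. Oddness transfers this to negative coordinates, and the hypothesis that coordinates lie in $[-R,R]$ is exactly what places every magnitude in the range where $D_R(\mu)$ applies.
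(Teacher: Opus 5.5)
Your proposal is correct and follows essentially the same route as the paper's proof. Both use oddness and $F(0)=0$ for Hermiticity and preservation of zeros, oddness together with $|x|\le R$ to bound each coordinate error by $D_R(\mu)$, and the row-sum bound of \Cref{lem:truncation-error}; you simply spell out details the paper leaves implicit, namely the vanishing diagonal imaginary parts and why the row-sum bound holds for Hermitian matrices.
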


\begin{proof}
The odd extension fixes zero, and oddness makes truncation of real and imaginary parts commute with complex conjugation, so
$\sched^{(\mu)}$ is Hermitian and preserves zeros. Since every coordinate lies in $[-R,R]$, the definition of $D_R(\mu)$ and oddness bound each coordinate
error by $D_R(\mu)$, hence each complex entry error by $\sqrt2\,D_R(\mu)$, and the row-sum bound of \Cref{lem:truncation-error} gives the result.
\end{proof}

We count pieces as in \Cref{prop:pieces-exist}.

\begin{theorem}\label{thm:level-pieces}
Let $q$ be a map on $[0,T]$ taking finitely many values, with each layer $q^{-1}(\{v\})$ an interval.
Then $q$ is constant on each piece of a partition of $[0,T]$ into at most $|q([0,T])|$ pieces (\Cref{def:piecewise}).
Consequently, if the coordinates $h$ of $\sched$ (\Cref{def:coordinates}) are monotone on each interval $[\xi_r,\xi_{r+1}]$, where
$0=\xi_0<\dots<\xi_N=T$, then, for each level measure $\mu$ with crossing number $\operatorname{cr}_\mu$ (\Cref{def:level-measure}), $\sched^{(\mu)}$ is piecewise constant, with at most
\[
  \sum_{r=0}^{N-1}\Bigl(1+\sum_h\operatorname{cr}_\mu\bigl(h(\xi_r),h(\xi_{r+1})\bigr)\Bigr)
\]
pieces.
\end{theorem}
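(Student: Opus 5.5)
The plan is to prove the first, abstract claim about $q$ and then reduce the second claim to it, following the proof of \Cref{prop:pieces-exist}. Each layer $q^{-1}(\{v\})$ is an interval, and there are finitely many values $v_1,\dots,v_M$ with $M=|q([0,T])|$. These layers are pairwise disjoint and cover $[0,T]$. Disjoint intervals are linearly ordered, so we list them left to right.

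First discard the degenerate layers, those that are single points. Finitely many points cannot cover a set of positive length. So the closures of the remaining nondegenerate layers still cover $[0,T]$, and consecutive ones share endpoints. Their endpoints give times $0=t_0<t_1<\dots<t_n=T$ with $n\le M$. Each open piece $(t_j,t_{j+1})$ lies inside one layer, so $q$ is constant there. Values at the piece endpoints do not matter (\Cref{def:piecewise}).

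For the consequence, fix $[a,b]=[\xi_r,\xi_{r+1}]$ and let $q$ be the map $t\mapsto\sched^{(\mu)}(t)$ restricted to $[a,b]$. By Hermiticity, this map is determined by the truncated coordinates $t\mapsto[h(t)]_\mu$. Each such function is monotone, because $x\mapsto[x]_\mu$ is nondecreasing: $F$ is nondecreasing on $[0,\infty)$, and its odd extension is nondecreasing because $F(0)=0$. Following \Cref{prop:pieces-exist}, I would define
\[
c(t)=\sum_h\#\{\text{distinct values }[z]_\mu\text{ for }z\text{ between }h(a)\text{ and }h(t)\}-\#\{\text{coordinates}\}.
\]
Each summand is nondecreasing in $t$ and strictly increases whenever $[h(t)]_\mu$ changes, since $h$ moves monotonically away from $h(a)$. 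So $c$ is nondecreasing, and $\sched^{(\mu)}$ is constant on each level set of $c$. A nondecreasing function's level sets are intervals. Also $c$ takes at most $1+c(b)=1+\sum_h\operatorname{cr}_\mu(h(a),h(b))$ values.

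The values of $h$ on $[a,b]$ lie in the compact interval between $h(a)$ and $h(b)$. On that interval $F$ takes finitely many values, which is why the crossing number is finite. Applying the first part to $c$ on $[a,b]$ then partitions $[a,b]$ into at most $1+\sum_h\operatorname{cr}_\mu(h(\xi_r),h(\xi_{r+1}))$ pieces on which $\sched^{(\mu)}$ is constant. Concatenating these partitions over $r$ gives a global piecewise-constant structure, since the $\xi_r$ just become piece endpoints, and summing the counts gives the stated bound.

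The main obstacle is the bookkeeping in the first part: the layers must be ordered, and degenerate layers must be dropped without losing coverage. A second point to check is that $c$ increases strictly at every change of a truncated coordinate. This needs monotonicity of $h$, because with monotone $h$ a new value of $[h(t)]_\mu$ is a new distinct value in the range. That is exactly how $\operatorname{cr}_\mu$ was defined.
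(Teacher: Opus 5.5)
Your proposal is correct and follows essentially the paper's route. You prove the abstract claim that a finitely-valued map with interval layers is piecewise constant with at most $|q([0,T])|$ pieces, then build on each $[\xi_r,\xi_{r+1}]$ a nondecreasing integer-valued counter that strictly increases whenever a truncated coordinate changes. The differences are cosmetic: the paper cuts at every layer's supremum instead of discarding singleton layers, and it sums ranks of values in order of attainment instead of your $\operatorname{cr}_\mu(h(a),h(t))$, which still bounds the number of values by $1+c(b)$.
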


\begin{proof}
Cut $[0,T]$ at $0$ and each layer's supremum. These at most $1+|q([0,T])|$ cuts include $T$, the supremum of its layer, giving at most $|q([0,T])|$ open pieces. Finitely many layers cover each piece $(a,b)$, so one contains points arbitrarily close to $a$ and, being an interval, contains some $(a,c)$ with $c>a$. If it missed a point of $(a,b)$, its supremum would be a cut inside the piece, a contradiction.

For $t\in[\xi_r,\xi_{r+1}]$, monotonicity puts $h(t)$ between $h(\xi_r)$ and $h(\xi_{r+1})$, so $[h(t)]_\mu$ takes at most
$1+\operatorname{cr}_\mu(h(\xi_r),h(\xi_{r+1}))$ values.
Rank them from $0$ in the order in which $[h(t)]_\mu$ attains them. Since $[\cdot]_\mu$ is nondecreasing, $[h(t)]_\mu$ is monotone in $t$, so each rank is nondecreasing in $t$, the layers of the tuple of ranks are intervals, and the sum of the ranks increases whenever the tuple changes. Hence, the number of values of the tuple is at most the $r$th summand.
Equal ranks give equal truncated coordinates, and Hermiticity determines the lower-triangular entries, so the first claim, applied to the tuple, bounds the pieces of $\sched^{(\mu)}$ on $[\xi_r,\xi_{r+1}]$ by that summand. Summing over $r$ gives the bound.
\end{proof}

For regulated families, we keep the checkpoints and midpoint samples of \Cref{lem:main} and truncate each sample with $\mu$.

\begin{corollary}\label{cor:level-main}
Let $\sched$ be $\sparsity$-sparse on $[0,T]$, with coordinates of magnitude at most $R$.
Let $0<\adiaberr\le\sqrt2\,\sparsity T$, let $\bits$ be the least nonnegative integer
with $2^\bits\ge\sqrt2\,\sparsity T/\adiaberr$, and let $\mu$ be a level measure with
discrepancy $D_R(\mu)\le2^{-\bits-1}$.
For checkpoints $0=t_0\le\cdots\le t_n=T$, let $A_j=\sched((t_j+t_{j+1})/2)$ and define
\[
  W_\mu=e^{-i(t_n-t_{n-1})A_{n-1}^{(\mu)}}\cdots e^{-i(t_1-t_0)A_0^{(\mu)}}.
\]
\begin{enumerate}
\item If $\sched$ is regulated, some checkpoints give
$\|W_\mu-\evol(T)\|\le\sqrt2\,\sparsity2^{-\bits}T\le\adiaberr$.
\item For $\sched$ of total variation $0<\pathlen<\infty$, equal-variation checkpoints (\Cref{lem:inverse-levels})
give this bound with
$n=\lceil2^{\bits+1}\pathlen/(\sqrt2\,\sparsity)\rceil\le1+4T\pathlen/\adiaberr$.
\item If $\sched$ is $\lip$-Lipschitz with $\lip>0$, uniform checkpoints $t_j=jT/n$
give the same bound with
$n=\lceil2^{\bits+1}\lip T/(\sqrt2\,\sparsity)\rceil\le1+4\lip T^2/\adiaberr$.
\end{enumerate}
\end{corollary}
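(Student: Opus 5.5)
The plan is to rerun the proof of \Cref{lem:main} verbatim, with \Cref{thm:general-levels} in place of \Cref{lem:truncation-error} at the single point where truncation error enters. The checkpoints are the same in all three parts: a step approximation in (1), the equal-variation checkpoints of \Cref{lem:inverse-levels} in (2), and the uniform times in (3). The counts $n$ are also unchanged. Only the generators change, from $A_j^{(\bits+1)}$ to $A_j^{(\mu)}$. So the structure of the proof is identical, and only the truncation error needs re-checking.

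\textbf{Oscillation within intervals.} Exactly as in \Cref{lem:main}, each open interval $(t_j,t_{j+1})$ has oscillation bounded by $\sqrt2\,\sparsity\,2^{-\bits-1}$.
\begin{enumerate}
\item In (1), choose a step approximation within $\sqrt2\,\sparsity\,2^{-\bits-2}$ \cite[(7.6.1)]{dieudonne1960} and use the triangle inequality through its constant value.
\item In (2), use $\pathlen/n\le\sqrt2\,\sparsity\,2^{-\bits-1}$, which follows from the ceiling.
\item In (3), use $\lip T/n\le\sqrt2\,\sparsity\,2^{-\bits-1}$, again from the ceiling.
\end{enumerate}
Positivity of $\pathlen$ and of $\lip T$ makes $n\ge1$ in (2) and (3). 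Intervals of length zero contribute the identity and may be ignored.

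\textbf{Truncation error of the generators.} The midpoint sample $A_j$ is a value of $\sched$, so its coordinates have magnitude at most $R$. \Cref{thm:general-levels} and $D_R(\mu)\le2^{-\bits-1}$ then give
\[
\|A_j-A_j^{(\mu)}\|\le\sqrt2\,\sparsity\,2^{-\bits-1}.
\]
The same theorem shows that $A_j^{(\mu)}$ is Hermitian. Hence the piecewise-constant family $\comp(t)=A_j^{(\mu)}$ on $(t_j,t_{j+1})$ is a legitimate Hermitian generator, and its induced product is $W_\mu$ by \Cref{def:piecewise}.

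\textbf{Assembly and counts.} For $t_j<t<t_{j+1}$, the triangle inequality through $A_j$ gives
\[
\|\sched(t)-\comp(t)\|\le\sqrt2\,\sparsity\,2^{-\bits-1}+\sqrt2\,\sparsity\,2^{-\bits-1}=\sqrt2\,\sparsity\,2^{-\bits}.
\]
Both evolutions exist by \Cref{prop:regulated-evolution}, since $\comp$ is piecewise constant. \Cref{lem:duhamel}, applied outside the finitely many checkpoints, then yields $\|W_\mu-\evol(T)\|\le\sqrt2\,\sparsity\,2^{-\bits}T$. This is at most $\adiaberr$ by the choice of $\bits$.

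The bounds $n\le1+4T\pathlen/\adiaberr$ and $n\le1+4\lip T^2/\adiaberr$ are copied from the end of the proof of \Cref{lem:main}. There, minimality of $\bits$ gives $2^{\bits+1}/(\sqrt2\,\sparsity)\le4T/\adiaberr$, and $\lceil x\rceil\le x+1$ finishes the count. These bounds do not involve $\mu$.

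\textbf{Expected difficulty.} The only point needing care is the hypothesis of \Cref{thm:general-levels}: it requires the coordinate bound $R$ for the family, and here it is applied to a single constant sample $A_j$. Since $A_j$ is a value of $\sched$, the bound holds trivially. Nothing else in the argument depends on the level structure of $\mu$.
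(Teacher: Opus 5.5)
Your proposal is correct and is essentially the paper's proof: keep the checkpoints, counts, and oscillation bound $\sqrt2\,\sparsity\,2^{-\bits-1}$ from \Cref{lem:main}, replace \Cref{lem:truncation-error} by \Cref{thm:general-levels} applied to the sample $A_j$, and finish with the triangle inequality and \Cref{lem:duhamel}. Two small imprecisions, both harmless. First, the exceptional set in \Cref{lem:duhamel} must also contain the countably many discontinuities of $\sched$, where $\evol$ need not satisfy its equation, and not only the checkpoints; the lemma allows any countable exceptional set, so this costs nothing. Second, the existence of $\evol$ comes from $\sched$ being regulated in all three parts (finite variation and Lipschitz continuity both imply this), not from $\comp$ being piecewise constant.
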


\begin{proof}
The checkpoints and counts in \Cref{lem:main} are independent of $\mu$.
For $t_j<t<t_{j+1}$, the oscillation bound in its proof gives $\|\sched(t)-A_j\|\le\sqrt2\,\sparsity\,2^{-\bits-1}$, and
\Cref{thm:general-levels} gives $\|A_j-A_j^{(\mu)}\|\le\sqrt2\,\sparsity\,D_R(\mu)\le\sqrt2\,\sparsity\,2^{-\bits-1}$. Thus, $\|\sched(t)-A_j^{(\mu)}\|\le\sqrt2\,\sparsity\,2^{-\bits}$,
and since the generators $A_j^{(\mu)}$ are Hermitian, \Cref{lem:duhamel} and the choice of $\bits$ give the claimed error.
\end{proof}

We now turn to the assembly bound, which does not involve level measures. Because Duhamel's estimate gives only an $O(1)$ bound at unit steps, we compare a unitary path $U$ with a piecewise-constant evolution $V$ using
corrections $1-iY_j$ on each piece.

\begin{theorem}\label{thm:level-evolution}
For a partition $0=a_0<\cdots<a_n=T$, let $V(0)=1$ evolve under constant Hermitian generators
$G_j$ on $(a_j,a_{j+1})$. As in \eqref{eq:piecewise-product}, $V(T)=e^{-i(a_n-a_{n-1})G_{n-1}}\cdots e^{-i(a_1-a_0)G_0}$.
Let $U$ be a unitary-valued function on $[0,T]$ with $U(0)=1$, and let $Y_0,\ldots,Y_{n-1}$ be matrices, with $Y_{-1}=0$.
Define $M_j=U^\dagger(1-iY_j)V$ for $j\in\{-1,0,\ldots,n-1\}$, so $M_{-1}=U^\dagger V$.
If $\|M_j(a_{j+1})-M_j(a_j)\|\le R_j$ for $0\le j<n$, then
\begin{equation}\label{eq:layer-assembly}
 \|V(T)-U(T)\|\le\sum_{j=0}^{n-1}(R_j+\|Y_j-Y_{j-1}\|)+\|Y_{n-1}\|.
\end{equation}
\end{theorem}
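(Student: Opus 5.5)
The bound \eqref{eq:layer-assembly} should follow from a telescoping argument. By unitary invariance, $\|V(T)-U(T)\|=\|U(T)^\dagger V(T)-1\|=\|M_{-1}(T)-M_{-1}(0)\|$, since $M_{-1}(0)=U(0)^\dagger V(0)=1$. I will replace the uncorrected quantity $M_{-1}$ by the corrected ones $M_j$, using $M_j$ on the $j$th piece, and track the cost of switching corrections at each partition point.

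The key identity is $M_j-M_{j-1}=U^\dagger\bigl(-i(Y_j-Y_{j-1})\bigr)V$. Since $U$ and $V$ are unitary, this gives $\|M_j(t)-M_{j-1}(t)\|\le\|Y_j-Y_{j-1}\|$ at every $t$, and similarly $\|M_{n-1}(T)-M_{-1}(T)\|\le\|Y_{n-1}\|$ at the final time. Then I will write the telescoping decomposition
\[
M_{-1}(T)-M_{-1}(0)=\bigl(M_{-1}(T)-M_{n-1}(T)\bigr)+\sum_{j=0}^{n-1}\Bigl(\bigl(M_j(a_{j+1})-M_j(a_j)\bigr)+\bigl(M_j(a_j)-M_{j-1}(a_j)\bigr)\Bigr).
\]
To check it, note that the sum telescopes: the terms $M_j(a_{j+1})$ and $-M_{j}(a_{j+1})$, the latter appearing from the switch term at index $j+1$, cancel in pairs. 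What remains is $M_{n-1}(a_n)-M_{-1}(a_0)$, with $a_n=T$ and $a_0=0$, and adding the first term recovers the left side.

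Applying the triangle inequality to the decomposition bounds the three kinds of terms separately. The hypothesis bounds each within-piece change by $R_j$. The switch terms at $a_j$ are bounded by $\|Y_j-Y_{j-1}\|$; for $j=0$ this is $\|Y_0\|$, using the convention $Y_{-1}=0$. The final uncorrection term is bounded by $\|Y_{n-1}\|$. Summing these gives \eqref{eq:layer-assembly}.

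There is no real obstacle here. The point needing care is the index bookkeeping, including that the $j=0$ switch term is handled by $Y_{-1}=0$ and the final one by $\|Y_{n-1}\|$. No continuity or differentiability of $U$ is needed, since only values at the partition points enter the argument. The values of $V$ at the $a_j$ are well defined because $V$ is continuous.
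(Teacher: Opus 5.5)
Your proof is correct and is essentially the paper's telescoping argument. Substituting $M_j(a_j)-M_{j-1}(a_j)=-iU^\dagger(Y_j-Y_{j-1})V(a_j)$ and $M_{-1}(T)-M_{n-1}(T)=iU^\dagger Y_{n-1}V(T)$ into your decomposition gives exactly the paper's identity for $U(T)^\dagger V(T)-1$, which it likewise finishes with unitary invariance and the triangle inequality.
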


\begin{proof}
Gluing constant evolutions gives the product, and by definition,
$M_{n-1}=U^\dagger V-iU^\dagger Y_{n-1}V$ and $M_{-1}(a_0)=1$.
Telescoping the corrected comparisons, with $Y_{-1}=0$, gives
\begin{align*}
 U(T)^\dagger V(T)-1
 &=\sum_{j=0}^{n-1}\Bigl[\bigl(U^\dagger(1-iY_j)V\bigr)(a_{j+1})
                       -\bigl(U^\dagger(1-iY_j)V\bigr)(a_j)\Bigr]\\
 &\quad-i\sum_{j=0}^{n-1}\bigl(U^\dagger(Y_j-Y_{j-1})V\bigr)(a_j)
       +i\bigl(U^\dagger Y_{n-1}V\bigr)(T).
\end{align*}
Unitary invariance, $\|V(T)-U(T)\|=\|U(T)^\dagger V(T)-1\|$, and the triangle inequality give \eqref{eq:layer-assembly}.
\end{proof}

Thus, the piecewise-constant evolution $V$ stays close to the continuous evolution when the corrected comparison changes little across each piece and neighboring corrections nearly agree at shared endpoints. With $Y_j=0$, this gives the fine-step comparison for any level measure in \Cref{cor:level-main}, and with the Hermitian frame of \Cref{sec:coarse}, it gives the unit-step comparison in \Cref{thm:unit-steps}.
\section{Unit steps: from a product back to a continuous evolution}\label{sec:coarse}

In this section, we hold the clock at finite precision using \Cref{sec:levels} to pass from products back to continuous evolutions. For separated samples, the comparison error is $O(1/T)$, so continuous adiabatic bounds transfer, and we apply the transfer to quantum linear systems and to walks through their logarithms.

\begin{definition}\label{def:unit-product}\label{def:sample-piece}
For integer $T\ge1$, let $\sched(s)$ be a continuous Hamiltonian on $[0,1]$. Let $\evol$ evolve under $\sched(t/T)$ on $[0,T]$. Set
\begin{equation}\label{eq:unit-product}
  \devprod_T=e^{-i\sched((T-1)/T)}\cdots e^{-i\sched(0)} .
\end{equation}
The $T$ factors are ordered by decreasing sample index.
The piece for sample $\sched(j/T)$ is $[j-\tfrac12,j+\tfrac12]\cap[0,T]$, a full unit interval for $0<j<T$ and a half interval for $j=0,T$.
The \emph{rounded clock} is $q(t)=\lfloor t+1/2\rfloor$ for $t\ge0$.
\end{definition}
For $\mu_{\mathrm c}=\sum_{j\ge0}\delta_{j+1/2}$, the rounded clock is $q(t)=[t]_{\mu_{\mathrm c}}$.
Thus, $q(t)=j$ on $[j-1/2,j+1/2)\cap[0,T]$, including the half pieces at $j=0,T$,
so the Hamiltonian $\sched(q(t)/T)$ is constant on these intervals. If $\adiab$ is its evolution, then \eqref{eq:piecewise-product} and the endpoint half pieces give
\begin{equation}\label{eq:rounded}
  \devprod_T=e^{i\sched(1)/2}\,\adiab(T)\,e^{-i\sched(0)/2}.
\end{equation}
This includes $T=1$, when there are no full pieces.

\subsection{The comparison with the path}\label{sec:comparison}
We compare $\devprod_T$ with $\evol(T)$ one piece at a time. Freezing the generator at each sample leaves a first-order error $\tau D$, whose norms sum to $O(1)$ over $T$ pieces, so we remove this linear drift from the comparison by inserting $1-iY$. Stein's equation $e^{ih}\bar Ye^{-ih}-\bar Y=R$, the discrete-time Sylvester or Lyapunov equation of Stein~\cite{stein1952}, determines the endpoint value $\bar Y$, where $R$ is the integral of the conjugated linear drift below. We then extend $\bar Y$ across the piece by $\dot Y+i[h,Y]=\tau D$, so that $Y$ has the same value at both ends and neighboring endpoint corrections nearly cancel. Under the bounds below, only their differences and the second-order remainders accumulate, giving $O(1/T)$ by \eqref{eq:layer-assembly}.

\begin{lemma}\label{lem:cell}
Let $h$ be Hermitian, with the frame operator $\frameop_h$ defined and $\|\frameop_h\|\le N$, let $D$ be any matrix, and let $K$ be a continuous Hamiltonian on $[-\tfrac12,\tfrac12]$.
Put $\bar Y=\frameop_h(D)$, $\Delta(\tau)=K(\tau)-h$, and $\omega(\tau)=N+\tfrac18-\tau^2/2$.
\begin{enumerate}
\item The solution of $\dot Y+i[h,Y]=\tau D$, $Y(-\tfrac12)=\bar Y$, satisfies
$Y(\tfrac12)=\bar Y$.
\item For unitary evolutions $i\dot U=KU$, $i\dot V=hV$, the comparison $M=U^\dagger(1-iY)V$ satisfies
\[
 \|M(\tfrac12)-M(-\tfrac12)\|
 \le\int_{-1/2}^{1/2}\bigl[\|\Delta(\tau)-\tau D\|+
          \omega(\tau)\|\Delta(\tau)\|\,\|D\|\bigr]\,d\tau .
\]
\end{enumerate}
\end{lemma}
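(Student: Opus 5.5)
For (1) I will pass to the interaction picture of $h$. Put $Z(\tau)=e^{i\tau h}Y(\tau)e^{-i\tau h}$. Then
\[
 \dot Z=e^{i\tau h}\bigl(\dot Y+i[h,Y]\bigr)e^{-i\tau h}=\tau\,e^{i\tau h}De^{-i\tau h},
\]
so the equation becomes an explicit quadrature:
\[
 Z(\tfrac12)-Z(-\tfrac12)=\int_{-1/2}^{1/2}\tau\,e^{i\tau h}De^{-i\tau h}\,d\tau .
\]
The claim $Y(\tfrac12)=\bar Y$, given $Y(-\tfrac12)=\bar Y$, says $e^{ih/2}\bar Ye^{-ih/2}-e^{-ih/2}\bar Ye^{ih/2}$ equals this integral. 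Conjugating by $e^{ih/2}$ and substituting $\sigma=\tau+\tfrac12$ turns it into exactly \eqref{eq:frame}:
\[
 e^{ih}\bar Ye^{-ih}-\bar Y=\int_0^1(\sigma-\tfrac12)e^{i\sigma h}De^{-i\sigma h}\,d\sigma .
\]
\Cref{lem:bernoulli-inverse}(1) verifies this identity for $\bar Y=\frameop_h(D)$ entrywise in an eigenbasis of $h$. That entrywise verification only needs $\frameop_h$ to be defined, not the separation bounds. This settles (1).

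For (2) I will differentiate $M=U^\dagger(1-iY)V$ using $\tfrac{d}{d\tau}U^\dagger=iU^\dagger K$ and $\dot V=-ihV$. This gives
\[
 \dot M=U^\dagger\bigl[iK(1-iY)-i\dot Y-i(1-iY)h\bigr]V .
\]
Substituting $\dot Y=\tau D-i[h,Y]$, the terms collapse: $KY-Yh-hY+Yh=\Delta Y$. Hence
\[
 \dot M=U^\dagger\bigl[i(\Delta-\tau D)+\Delta Y\bigr]V,
\]
and by unitary invariance $\|\dot M(\tau)\|\le\|\Delta(\tau)-\tau D\|+\|\Delta(\tau)\|\,\|Y(\tau)\|$. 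Since $K$ is continuous, $M$ is $C^1$, and the fundamental theorem of calculus with the integral triangle inequality gives the stated bound. This holds once I show $\|Y(\tau)\|\le\omega(\tau)\|D\|$.

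The main point is that pointwise bound on $Y$. Integrating from the left end would give only $N+\tfrac18+\tau^2/2$ for $\tau>0$, which is weaker than $\omega$. I will use the quadrature from whichever endpoint is nearer, which is legitimate because $Y$ equals $\bar Y$ at both ends by (1). For $\tau\le0$ I write
\[
 Y(\tau)=e^{-i\tau h}\Bigl[Z(-\tfrac12)+\int_{-1/2}^{\tau}\sigma e^{i\sigma h}De^{-i\sigma h}\,d\sigma\Bigr]e^{i\tau h},
\]
and for $\tau\ge0$ the analogous expression $Z(\tfrac12)-\int_\tau^{1/2}$. Now $\|Z(\pm\tfrac12)\|=\|\bar Y\|\le N\|D\|$. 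In either case the integral has norm at most $\|D\|\int|\sigma|\,d\sigma=(\tfrac18-\tau^2/2)\|D\|$. Hence $\|Y(\tau)\|\le\omega(\tau)\|D\|$, which completes (2).
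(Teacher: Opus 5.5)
Your proposal is correct and follows the paper's proof. Your interaction-picture quadrature is the paper's variation-of-constants formula, the endpoint equality reduces to \eqref{eq:frame} after the shift $\sigma=\tau+\tfrac12$, the bound $\|Y(\tau)\|\le\omega(\tau)\|D\|$ comes from integrating from the nearer endpoint, and the product rule gives $\dot M=U^\dagger[i(\Delta-\tau D)+\Delta Y]V$; the only difference is that you prove $Y(\tfrac12)=\bar Y$ first and then use it for the right-end representation, whereas the paper derives that representation from the Stein equation and reads off $Y(\tfrac12)=\bar Y$ at $\tau=\tfrac12$.
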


\begin{proof}
Substituting $\sigma=\tau+\tfrac12$ in \eqref{eq:frame} gives
\[
 e^{ih}\bar Ye^{-ih}-\bar Y
 =\int_{-1/2}^{1/2}\tau e^{i(\tau+1/2)h}De^{-i(\tau+1/2)h}\,d\tau=R.
\]
Write $E(s)=s\,e^{i(s+1/2)h}De^{-i(s+1/2)h}$, so that $\|E(s)\|=|s|\,\|D\|$. Then
\begin{align*}
 Y(\tau)&=e^{-i(\tau+1/2)h}\Bigl[\bar Y+\int_{-1/2}^{\tau}E(s)\,ds\Bigr]e^{i(\tau+1/2)h} &&\text{variation of constants},\\
 &=e^{-i(\tau-1/2)h}\Bigl[\bar Y-\int_{\tau}^{1/2}e^{-ih}E(s)e^{ih}\,ds\Bigr]e^{i(\tau-1/2)h} &&\text{by the Stein equation},\\
 \|Y(\tau)\|&\le\Bigl(N+\int_{|\tau|}^{1/2}s\,ds\Bigr)\|D\|=\omega(\tau)\|D\| &&\text{first line if }\tau\le0,\text{ second if }\tau\ge0,
\end{align*}
by unitary invariance and $\|\bar Y\|\le N\|D\|$,
and the second line at $\tau=\tfrac12$ gives $Y(\tfrac12)=\bar Y$.
The product rule and the equation for $Y$ give
$\dot M=U^\dagger[i(\Delta-\tau D)+\Delta Y]V$,
so integration and $\|\Delta Y\|\le\|\Delta\|\|Y\|$ give (2).
\end{proof}

By \Cref{lem:bernoulli-inverse}, the bound $N$ may be $N_\gamma$ under separation, or $\majN(\width)$ at width at most $\width<2\pi$.

\begin{definition}\label{def:unit-coefficients}
For the unit-step comparison, using the frame-operator bounds $N_\gamma$ and $N'_\gamma$ of \Cref{lem:bernoulli-inverse}, define
\begin{equation}\label{eq:unit-coefficients}
 c_0=\tfrac14+2N_\gamma,\qquad
 c_1=\tfrac1{64}+\tfrac14N_\gamma+N'_\gamma,\qquad
 c_2=\tfrac1{12}+2N_\gamma.
\end{equation}
At width $\width<2\pi$, write $c_i(\width)$ for the explicit values obtained by replacing $N_\gamma,N'_\gamma$ with $\majN(\width),2\majN'(\width)$.
\end{definition}
Euler's bound $|B_n|/n!\le4(2\pi)^{-n}$ majorizes the series $\majN(w)=\sum_{m\ge0}|B_{2m+2}|w^{2m}/(2m+2)!$ (proof of \Cref{lem:bernoulli-inverse}) and its derivative termwise, and $4\pi^2-w^2=\gamma(2\pi+w)$ for $\gamma=2\pi-w$, so
\begin{equation}\label{eq:frame-majorants}
 \majN(w)\le\frac4{4\pi^2-w^2}\le\frac2{\pi\gamma},\qquad
 \majN'(w)\le\frac{8w}{(4\pi^2-w^2)^2}\le\frac2{\pi\gamma^2}\qquad(0<w<2\pi).
\end{equation}
Hence, $c_0(w),c_2(w)\le\tfrac14+4/(\pi\gamma)$ and $c_1(w)\le\tfrac1{64}+1/(2\pi\gamma)+4/(\pi\gamma^2)$.
Eigenvalue differences near nonzero multiples of $2\pi$ therefore give large constants, because errors whose phase advances by $\gamma$ per step can add for about $1/\gamma$ steps before cancelling.

Rescaling cannot remove separation, a condition on step length times eigenvalue differences (\Cref{rem:alias}), because scaling $\sched$ by $c$ changes each factor to $e^{-ic\sched(j/T)}$ and hence the product itself.

\begin{theorem}\label{thm:unit-steps}
Let $T$, $\sched$, $\evol$, and $\devprod_T$ be as in \Cref{def:unit-product}.
Suppose that $\sched$ is differentiable, $\|\sched'\|\le\lip$,
$\|\sched'(s)-\sched'(s')\|\le2\taylor|s-s'|$, and each sample $\sched(j/T)$, $0<j<T$, is $\gamma$-separated of width at most $\bar\width$ (\Cref{def:exponential-operators}). Then
\[
 \bigl\|\devprod_T-e^{i\sched(1)/2}\evol(T)e^{-i\sched(0)/2}\bigr\|
 \le\frac{c_0\lip+c_1\lip^2+c_2\taylor}{T},
\]
with $c_i$ from \Cref{def:unit-coefficients}, or $c_i(\width)$ for width at most $\width<2\pi$.
\end{theorem}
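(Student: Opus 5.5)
By \eqref{eq:rounded}, $\devprod_T=e^{i\sched(1)/2}\adiab(T)e^{-i\sched(0)/2}$, where $\adiab$ evolves under the rounded-clock Hamiltonian $\sched(q(t)/T)$. Conjugating by the half-step phases is unitarily invariant, so the claim is equivalent to bounding $\|\adiab(T)-\evol(T)\|$ by $(c_0\lip+c_1\lip^2+c_2\taylor)/T$. This is what I will do, using \Cref{thm:level-evolution} on the partition $0=a_0<\tfrac12<\tfrac32<\cdots<T-\tfrac12<T$. The pieces are the half piece for $j=0$, the full unit pieces for $0<j<T$, and the final half piece for $j=T$. The generator on the piece of sample $j$ is $G=\sched(j/T)$.

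\textbf{Choice of corrections.} On the two half pieces I take $Y=0$. On the interior piece $[j-\tfrac12,j+\tfrac12]$ I shift to the local time $\tau=t-j\in[-\tfrac12,\tfrac12]$ and set $h_j=\sched(j/T)$ and $D_j=\sched'(j/T)/T$. I then take $Y_j(\tau)$ to be the solution of \Cref{lem:cell}(1) with $\bar Y_j=\frameop_{h_j}(D_j)$. This solution is defined because $h_j$ is $\gamma$-separated, and it equals $\bar Y_j$ at both ends.

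The hypotheses of \Cref{thm:level-evolution} need translating, since there each $Y_j$ is a constant matrix. I expect the telescoping proof to carry over verbatim to time-dependent $Y_j$ on each piece, with the jump terms at $a_j$ involving only endpoint values. Since these are $\bar Y_j$ at both ends, the jump terms are $\|\bar Y_j-\bar Y_{j-1}\|$ for interior neighbours, together with $\|\bar Y_1\|$ and $\|\bar Y_{T-1}\|$ at the two ends where $Y=0$. Shifting $U$ and $V$ to start at the left end of the piece changes $M$ only by constant unitary factors, so \Cref{lem:cell}(2) bounds $R_j$ with $K(\tau)=\sched((j+\tau)/T)$.

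\textbf{Estimating the terms.}
\begin{enumerate}
\item Set $\Delta(\tau)=K(\tau)-h_j$. The bound $\|\sched'\|\le\lip$ gives $\|\Delta(\tau)\|\le\lip|\tau|/T$. The Lipschitz bound on $\sched'$ gives $\|\Delta(\tau)-\tau D_j\|\le\taylor\tau^2/T^2$.
\item Inserting these into \Cref{lem:cell}(2) with $\|D_j\|\le\lip/T$ gives a per-piece contribution of order $\taylor/T^2+\omega\,\lip^2/T^2$. Here $\int|\tau|\omega(\tau)\,d\tau=N/4+1/64$ and $\int\tau^2\,d\tau=1/12$.
\item The neighbouring frames satisfy
\[
\|\bar Y_j-\bar Y_{j-1}\|\le\|\frameop_{h_j}-\frameop_{h_{j-1}}\|\,\|D_j\|+\|\frameop_{h_{j-1}}\|\,\|D_j-D_{j-1}\|\le N'_\gamma\lip^2/T^2+2N_\gamma\taylor/T^2,
\]
using $\|h_j-h_{j-1}\|\le\lip/T$ and $\|D_j-D_{j-1}\|\le2\taylor/T^2$.
\item The two end terms contribute $2N_\gamma\lip/T$.
\item On the half pieces, plain Duhamel (\Cref{lem:duhamel}) gives $\int_0^{1/2}\lip\tau/T\,d\tau=\lip/(8T)$ each.
\end{enumerate}

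\textbf{Summing.} The roughly $T$ interior terms of order $1/T^2$ give $(c_1\lip^2+c_2\taylor)/T$. The end terms and the half pieces give $c_0\lip/T=(\tfrac14+2N_\gamma)\lip/T$. This matches \Cref{def:unit-coefficients}; the stated values can be checked exactly by tracking these contributions. At width $\width<2\pi$, \Cref{lem:bernoulli-inverse}(2) substitutes $\majN(\width)$ and $2\majN'(\width)$ for $N_\gamma$ and $N'_\gamma$.

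\textbf{Main obstacle.} The delicate step is the bookkeeping that adapts \Cref{thm:level-evolution} to time-dependent $Y$ on pieces, with mixed half and full pieces. The point is to check that the telescoped identity still yields exactly $R_j$ plus the endpoint jumps, and that the cases $T=1$ and $T=2$ (no interior pieces, or one) do not break the count.
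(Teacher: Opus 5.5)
Your proposal is correct and follows the paper's proof essentially step for step. The shared ingredients are the rounded clock, \Cref{thm:level-evolution} with zero corrections on the half pieces and $\frameop_{h_j}(D_j)$ on the full pieces, \Cref{lem:cell} for the increments, and \Cref{lem:bernoulli-inverse} for the jumps, where $T-1$ full pieces and $T-2$ internal jumps give exactly $c_0,c_1,c_2$. The bookkeeping you flag is no obstacle: since the varying correction equals $\bar Y_j$ at both ends, the increment of $U^\dagger(1-i\bar Y_j)V$ with constant $\bar Y_j$ equals the one bounded by \Cref{lem:cell}(2), so \Cref{thm:level-evolution} applies unchanged, and $T=1$ leaves only the two half pieces, giving $\lip/4\le c_0\lip$.
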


\begin{proof}
For each full piece, set $h_j=\sched(j/T)$, $D_j=\sched'(j/T)/T$,
$\Delta_j(\tau)=\sched((j+\tau)/T)-h_j$, and $Y_j=\frameop_{h_j}(D_j)$,
and let $R_j$ be the integral in \Cref{lem:cell}(2) with $N_\gamma$ for $N$.
By \Cref{lem:cell}(1), the varying correction equals $Y_j$ at both endpoints, so $R_j$ bounds the increment with constant $Y_j$ in \eqref{eq:layer-assembly}.
With $Y_{-1}=Y_0=Y_T=0$ and $M_j=U^\dagger(1-iY_j)V$ as in \Cref{thm:level-evolution}, \Cref{lem:duhamel} bounds the two half-piece increments by $B_T$. In \eqref{eq:layer-assembly} over $T+1$ pieces, the last jump is $\|Y_{T-1}\|$ and the final correction vanishes, giving, by \eqref{eq:rounded}, the error bound
\begin{align*}
 B_T&=\int_0^{1/2}\|\sched(t/T)-\sched(0)\|\,dt
       +\int_{T-1/2}^{T}\|\sched(t/T)-\sched(1)\|\,dt,\\
 E_T&=B_T+\sum_{j=1}^{T-1}(R_j+\|Y_j-Y_{j-1}\|)+\|Y_{T-1}\|.
\end{align*}
Since $\Delta_j(\tau)-\tau D_j=\int_0^\tau\bigl(\sched'((j+s)/T)-\sched'(j/T)\bigr)\,ds/T$, integrating the derivative bounds gives
$\|\Delta_j\|\le\lip|\tau|/T$, $\|D_j\|\le\lip/T$, and
$\|\Delta_j-\tau D_j\|\le\taylor\tau^2/T^2$, so $B_T\le\lip/(4T)$ and
\[
 R_j\le\int_{-1/2}^{1/2}\Bigl[\frac{\taylor\tau^2}{T^2}+\omega(\tau)\frac{\lip^2|\tau|}{T^2}\Bigr]d\tau
 =\frac{\taylor/12+(N_\gamma/4+1/64)\lip^2}{T^2},
\]
by $\int_{-1/2}^{1/2}\tau^2\,d\tau=1/12$ and $\int_{-1/2}^{1/2}|\tau|\omega(\tau)\,d\tau=N_\gamma/4+1/64$.
The derivative hypotheses give $\|D_j-D_{j-1}\|\le2\taylor/T^2$ and $\|h_j-h_{j-1}\|\le\lip/T$. Hence, \Cref{lem:bernoulli-inverse} bounds the end frames by $N_\gamma\lip/T$ and each internal jump by
$N_\gamma\|D_j-D_{j-1}\|+N'_\gamma\|h_j-h_{j-1}\|\|D_{j-1}\|\le(2N_\gamma\taylor+N'_\gamma\lip^2)/T^2$.
For $T\ge2$, adding $B_T$, the $T-1$ pieces, the two end frames, and the $T-2$ internal jumps gives
\begin{align*}
 E_T&\le\frac{\lip}{4T}+(T-1)\frac{\taylor/12+(N_\gamma/4+1/64)\lip^2}{T^2}
 +\frac{2N_\gamma\lip}{T}+(T-2)\frac{2N_\gamma\taylor+N'_\gamma\lip^2}{T^2}\\
 &\le\frac{c_0\lip+c_1\lip^2+c_2\taylor}{T} &&\text{by \eqref{eq:unit-coefficients}}.
\end{align*}
For $T=1$, $E_T=B_T\le\lip/4$,
and the width specialization follows from \Cref{lem:bernoulli-inverse}(2).
\end{proof}

Thus, a product of steps of length one agrees with the path's own evolution, up to two half-step phases and $O(1/T)$ error. Since these phases preserve the endpoint spectral subspaces, we can transfer continuous adiabatic bounds to the product.

\begin{theorem}\label{thm:unit-steps-dat}
With the hypotheses and constants of \Cref{thm:unit-steps}, let $P,Q$ be orthogonal projectors commuting with $\sched(0),\sched(1)$, respectively, for example their spectral projectors. Then
\[
  \Bigl|\,\|(1-Q)\devprod_TP\|-\|(1-Q)\evol(T)P\|\,\Bigr|\le\frac{c_0\lip+c_1\lip^2+c_2\taylor}{T}.
\]
In particular, if a continuous adiabatic theorem bounds by $\varepsilon$ the leakage
$\|(1-Q)\evol(T)P\|$ of $\sched$ over the time $T$, then
$\|(1-Q)\devprod_TP\|\le\varepsilon+(c_0\lip+c_1\lip^2+c_2\taylor)/T$.
\end{theorem}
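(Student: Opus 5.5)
The plan is to reduce the statement to \Cref{thm:unit-steps} by noting that the half-step phases leave leakage unchanged. Put $A=e^{i\sched(1)/2}$ and $B=e^{-i\sched(0)/2}$. These are unitary, and $A$ commutes with $Q$ and $B$ with $P$, because each is a function of a matrix that commutes with the corresponding projector. For a matrix that commutes with a Hermitian matrix, one can argue through the spectral decomposition, or simply note that $Q$ commutes with every power series in $\sched(1)$; the exponential is such a series.

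First I will show that the phase-corrected evolution has the same leakage as $\evol(T)$. Write
\[
 (1-Q)A\evol(T)BP=A(1-Q)\evol(T)PB.
\]
Unitary invariance of the operator norm then gives
\[
 \|(1-Q)A\evol(T)BP\|=\|(1-Q)\evol(T)P\|.
\]

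Next I will compare the product with this corrected evolution. By the reverse triangle inequality and contractivity of $1-Q$ and $P$,
\[
 \Bigl|\,\|(1-Q)\devprod_TP\|-\|(1-Q)A\evol(T)BP\|\,\Bigr|\le\|(1-Q)(\devprod_T-A\evol(T)B)P\|\le\|\devprod_T-A\evol(T)B\|.
\]
\Cref{thm:unit-steps} bounds the right side by $(c_0\lip+c_1\lip^2+c_2\taylor)/T$. Combining this with the first step gives the displayed inequality.

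The "in particular" clause then follows by adding $\varepsilon$ to the difference bound.

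No step is a real obstacle. The only point to check is that $P$ commutes with $e^{-i\sched(0)/2}$ and $Q$ with $e^{i\sched(1)/2}$, and this holds because each projector commutes with the corresponding matrix.
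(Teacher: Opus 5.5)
Your proposal is correct and follows essentially the same route as the paper: the endpoint commutations and unitary invariance show the half-step phases leave the leakage unchanged, and then compression by the contractions $1-Q$ and $P$, \Cref{thm:unit-steps}, and the reverse triangle inequality give the bound. Your power-series justification that $P$ and $Q$ commute with $e^{-i\sched(0)/2}$ and $e^{i\sched(1)/2}$ supplies the one detail the paper leaves implicit.
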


\begin{proof}
The endpoint commutations and unitary invariance give
\[
 \|(1-Q)e^{i\sched(1)/2}\evol(T)e^{-i\sched(0)/2}P\|=\|(1-Q)\evol(T)P\|.
\]
Since $\|1-Q\|,\|P\|\le1$, compression preserves the bound from \Cref{thm:unit-steps},
and the reverse triangle inequality gives the claim.
\end{proof}

Transferring faster continuous decay remains open (\Cref{sec:outlook}). For faster-decaying discrete adiabatic bounds, An, Costa, and Berry report a missing step in Dranov, Kellendonk, and Seiler's proof \cite{an2025,dranov1998}, and their boundary-cancellation bounds assume a conjectured higher-order discrete theorem \cite{an2025}.

For wider spectra, shorter midpoint steps keep length times width below $2\pi$, and \Cref{prop:balanced-pieces} bounds the comparison for such unequal pieces by piece integrals and frame jumps.
\begin{proposition}\label{prop:balanced-pieces}
Let $a_0<\cdots<a_n$, and let $K$ be a continuous Hamiltonian on $[a_0,a_n]$, with evolution $U$, $U(a_0)=1$. Sample each $[a_j,a_{j+1}]$ at its midpoint $\xi_j$.
Put $\lambda_j=a_{j+1}-a_j$, and suppose $h_j=\lambda_jK(\xi_j)$ is $\gamma$-separated of width at most $\bar\width$.
For Hermitian slopes $D_j$, set $Y_j=\frameop_{h_j}(D_j)$, $Y_{-1}=0$, and let $R_j$ be the integral in \Cref{lem:cell}(2) with $N_\gamma$ for $N$ and $\Delta_j(\tau)=\lambda_jK(\xi_j+\lambda_j\tau)-h_j$. Then
\[
 \|e^{-ih_{n-1}}\cdots e^{-ih_0}-U(a_n)\|
 \le\sum_{j=0}^{n-1}(R_j+\|Y_j-Y_{j-1}\|)+\|Y_{n-1}\|.
\]
For width at most $\width<2\pi$, use $\majN(\width)$ in $R_j$.
\end{proposition}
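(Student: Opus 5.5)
The plan is to apply \Cref{thm:level-evolution} to the partition $a_0<\cdots<a_n$. Its piecewise-constant evolution $V$ has generator $K(\xi_j)=h_j/\lambda_j$ on $(a_j,a_{j+1})$, so $V(a_n)=e^{-ih_{n-1}}\cdots e^{-ih_0}$. I will shift time to start at $a_0$, so that \Cref{thm:level-evolution}, stated on $[0,T]$, applies with $T=a_n-a_0$. The corrections are $Y_j=\frameop_{h_j}(D_j)$ and $Y_{-1}=0$. It then remains to show that the increment of $M_j=U^\dagger(1-iY_j)V$ across each piece is at most $R_j$. Once that holds, \eqref{eq:layer-assembly} is exactly the claimed bound.

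For the per-piece bound, I will rescale each piece to unit length. Put $t=\xi_j+\lambda_j\tau$ with $\tau\in[-\tfrac12,\tfrac12]$; note that $\xi_j$ is the midpoint, so the piece maps onto $[-\tfrac12,\tfrac12]$. Set
\[
 \tilde U(\tau)=U(\xi_j+\lambda_j\tau)U(\xi_j-\lambda_j/2)^\dagger,\qquad
 \tilde V(\tau)=V(\xi_j+\lambda_j\tau)V(a_j)^\dagger .
\]
By the chain rule, as in \Cref{lem:units}, $i\dot{\tilde U}=\lambda_jK(\xi_j+\lambda_j\tau)\tilde U$ and $i\dot{\tilde V}=h_j\tilde V$, with both starting at the identity at $\tau=-\tfrac12$. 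This is precisely the setting of \Cref{lem:cell}, with generator $\lambda_jK(\xi_j+\lambda_j\cdot)$, frozen value $h_j$, and $\Delta_j$ as defined. Hermiticity of $D_j$ is not needed there, and the bound $\|\frameop_{h_j}\|\le N_\gamma$ (or $\majN(\width)$) comes from \Cref{lem:bernoulli-inverse}.

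Next I relate the two comparisons. Write $A=U(a_j)$ and $B=V(a_j)$. Then
\[
 M_j(t)=A^\dagger\,\tilde U^\dagger(1-iY(\tau))\tilde V\,B ,
\]
where $Y$ is the varying correction of \Cref{lem:cell}(1). By \Cref{lem:cell}(1), $Y(\pm\tfrac12)=Y_j$. Hence the constant-$Y_j$ comparison in \Cref{thm:level-evolution} agrees with the varying one at both endpoints of the piece. Unitary invariance then gives
\[
 \|M_j(a_{j+1})-M_j(a_j)\|=\|\tilde M(\tfrac12)-\tilde M(-\tfrac12)\|\le R_j
\]
by \Cref{lem:cell}(2). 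This is the same mechanism as in the proof of \Cref{thm:unit-steps}, except that here there are no half pieces and all pieces are treated uniformly.

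The only delicate step is this endpoint bookkeeping. I have to check that the constant $Y_j$ in $M_j$ of \Cref{thm:level-evolution} can legitimately be replaced by the time-varying $Y$ on the interior of the piece. That works because only the values at $a_j$ and $a_{j+1}$ enter the telescoping sum. I also need \Cref{lem:cell} to accept a continuous $K$ together with the unit-time rescaling. Once those two points are checked, summing over the pieces and applying \eqref{eq:layer-assembly} finishes the proof. The width variant follows by substituting $\majN(\width)$ from \Cref{lem:bernoulli-inverse}(2).
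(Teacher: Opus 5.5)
Your proposal is correct and follows the paper's proof. Both rescale each piece to $[-\tfrac12,\tfrac12]$ by the chain rule, use the endpoint equality $Y(\pm\tfrac12)=Y_j$ of \Cref{lem:cell}(1) so that \Cref{lem:cell}(2) bounds the increment of the constant-$Y_j$ comparison $M_j$ by $R_j$, then translate $a_0$ to $0$ and apply the assembly bound \eqref{eq:layer-assembly} of \Cref{thm:level-evolution}. Your factorization $M_j=A^\dagger\tilde M B$ with unitary $A,B$ just spells out the unitary-invariance step the paper leaves implicit. Normalizing $\tilde U,\tilde V$ at $\tau=-\tfrac12$ is harmless but not needed, since \Cref{lem:cell}(2) holds for any unitary solutions.
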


\begin{proof}
On piece $j$, the chain rule with $t=\xi_j+\lambda_j\tau$ gives the rescaled generator $\lambda_jK(\xi_j+\lambda_j\tau)$ and constant generator $h_j$.
With slope $D_j$ and bound $N_\gamma$, or $\majN(\width)$ under the width bound, the endpoint equality of \Cref{lem:cell} gives the increment bound $R_j$ for $M_j=U^\dagger(1-iY_j)V$, where $V$ is the piecewise-constant evolution.
Translating $a_0$ to zero, \Cref{thm:level-evolution} identifies the evolution generated by $K(\xi_j)$ on each $(a_j,a_{j+1})$ with the displayed product, and its assembly bound \eqref{eq:layer-assembly} gives the claim.
\end{proof}

\subsection{Adiabatic consequences at unit steps}\label{sec:unit-jrs}
We apply \Cref{thm:unit-steps-dat} to Jansen--Ruskai--Seiler, then to quantum linear systems along An and Lin's adiabatic quantum computing schedule AQC($3/2$).

\begin{corollary}\label{cor:unit-steps-jrs}
Assume the continuous Jansen--Ruskai--Seiler theorem \cite[Theorem~3]{jansen2007}.
Fix an integer $T\ge1$. Let $\sched$ be $C^2$ on an open interval containing $[0,1]$, with $\|\sched'\|\le\lip$ and $\|\sched''\|\le2\taylor$, and let $\sched(j/T)$ be $\gamma$-separated of width at most $\bar\width$ for $0<j<T$. Let $I$ be a compact interval. Suppose that, for $0\le s\le1$,
exactly $m$ eigenvalues of $\sched(s)$, counted with multiplicity, lie in $I$, and the others
are at distance at least $g(s)>0$ from $I$, with $g$ continuous. Let $P,Q$ be the spectral
projectors of $\sched(0),\sched(1)$ for $I$. Then the product $\devprod_T$ of \eqref{eq:unit-product} satisfies
\[
  \|(1-Q)\devprod_TP\|
  \le\frac{J_m(\|\sched'\|,\|\sched''\|,g)+c_0\lip+c_1\lip^2+c_2\taylor}{T},
\]
with the Jansen--Ruskai--Seiler coefficient $J_m$ of \Cref{def:jrs-coefficient} and $c_i$ from \Cref{def:unit-coefficients}.
\end{corollary}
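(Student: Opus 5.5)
The plan is to combine \Cref{thm:unit-steps-dat} with the continuous Jansen--Ruskai--Seiler bound. I will bound the continuous leakage $\|(1-Q)\evol(T)P\|$ by $J_m/T$, then add the comparison error $(c_0\lip+c_1\lip^2+c_2\taylor)/T$.

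First, I check the hypotheses of \Cref{thm:unit-steps}. The path $\sched$ is $C^2$ on an open interval containing $[0,1]$, so it is differentiable on $[0,1]$ with $\|\sched'\|\le\lip$. The mean value inequality applied to $\sched'$, using $\|\sched''\|\le2\taylor$, gives $\|\sched'(s)-\sched'(s')\|\le2\taylor|s-s'|$. Separation and width of the interior samples are assumed directly. Next, $P$ and $Q$ are spectral projectors of $\sched(0)$ and $\sched(1)$, so they commute with these matrices, as \Cref{thm:unit-steps-dat} requires. That theorem then gives
\[
 \|(1-Q)\devprod_TP\|\le\|(1-Q)\evol(T)P\|+\frac{c_0\lip+c_1\lip^2+c_2\taylor}{T}.
\]

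Second, I bound the continuous leakage. Here $\evol$ evolves under $\sched(t/T)$ on $[0,T]$, which is exactly the traversal $\tilde\sched=\sched$ in the statement following \Cref{def:jrs-coefficient}. The hypotheses give a band of exactly $m$ eigenvalues in $I$, separated from the rest by the continuous gap $g(s)>0$. This is the isolated-band setting of \cite[Theorem~3]{jansen2007}. The band projector is $P_I(\sched(s))$, which at the endpoints equals $P$ and $Q$. Assuming that theorem yields $\|(1-Q)\evol(T)P\|\le J_m(\|\sched'\|,\|\sched''\|,g)/T$, and substituting this gives the claim.

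The only delicate point is matching our hypotheses to those of Jansen--Ruskai--Seiler. Their theorem needs a $C^2$ path, a band separated by a gap for every $s\in[0,1]$, and norms of $\sched'$ and $\sched''$ that are integrable. The latter holds because these norms are continuous on the compact interval $[0,1]$. The gap $g$ is assumed continuous, so $J_m$ is finite. Since the corollary explicitly assumes that theorem, no further work is needed beyond noting that the endpoint projectors in the two results coincide.
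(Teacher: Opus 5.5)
Your proposal is correct and follows the paper's own proof. The paper likewise takes the Jansen--Ruskai--Seiler bound $J_m(\|\sched'\|,\|\sched''\|,g)/T$ on the continuous leakage, integrates $\|\sched''\|\le2\taylor$ to obtain the Lipschitz bound on $\sched'$, and applies \Cref{thm:unit-steps-dat} to add $(c_0\lip+c_1\lip^2+c_2\taylor)/T$.
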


\begin{proof}
Jansen--Ruskai--Seiler \cite[Theorem~3]{jansen2007} bounds $\|(1-Q)\evol(T)P\|$ by
$J_m(\|\sched'\|,\|\sched''\|,g)/T$. Integrating $\|\sched''\|\le2\taylor$ gives the Lipschitz bound on $\sched'$ required by \Cref{thm:unit-steps-dat}, which adds $(c_0\lip+c_1\lip^2+c_2\taylor)/T$.
\end{proof}

With uniform separation and bounded width, $O(1/\delta)$ unit steps suffice for target leakage $\delta$, and the multiplicity dependence comes entirely
from the continuous theorem.

\begin{definition}\label{def:aqc-schedule}
For $\kappa\ge1$ and $p>0$, An and Lin's AQC($p$) schedule \cite[Eq.~(5)]{anlin2022} is the solution $f$ of
\[
 f'=a_p(1-f+f/\kappa)^p,\qquad f(0)=0,\qquad
 a_p=\int_0^1(1-u+u/\kappa)^{-p}\,du.
\]
Here $1-f+f/\kappa$ is the gap lower bound in \Cref{cor:qlsp-unit-steps}. The normalization gives $f(1)=1$. For $p=3/2$ and $\kappa>1$, separating variables gives
\[
 f(s)=\frac{1-(1+rs)^{-2}}{1-(1+r)^{-2}},\qquad r=\sqrt\kappa-1,
\]
while $f(s)=s$ for $\kappa=1$.
\end{definition}

We choose $p=3/2$ because $|f''|/g^2$ and $f'^2/g^3$, the two terms in the Jansen--Ruskai--Seiler integral \eqref{eq:jrs-coefficient}, become constant for $g=1-f+f/\kappa$.

\begin{corollary}\label{cor:qlsp-unit-steps}
Assume the continuous multiplicity-one Jansen--Ruskai--Seiler theorem \cite[Theorem~3]{jansen2007}.
Let $A$ be Hermitian positive definite with $\|A\|=1$ and condition number
$\kappa\ge1$, and let $\|b\|=1$. With $Q_b=1-|b\rangle\langle b|$, let $\sched=(1-f)H_0+fH_1$ be
the An--Lin path with the AQC($3/2$) schedule of \Cref{def:aqc-schedule} and $P(s)$ the projector onto $([(1-f(s))+f(s)A]^{-1}b,0)$, where
\[
 H_0=\begin{pmatrix}0&Q_b\\Q_b&0\end{pmatrix},\qquad
 H_1=\begin{pmatrix}0&AQ_b\\Q_bA&0\end{pmatrix}.
\]
Assume An and Lin's gap bound \cite[Appendix~A]{anlin2022}: the nonzero eigenvalues of $\sched(s)$ are at
distance at least $g(s)=1-f(s)+f(s)/\kappa$ from $0$. If $\kappa>1$, then, for
every integer $T\ge1$, the product \eqref{eq:unit-product} satisfies
\[
 \|(1-P(1))\devprod_TP(0)\|\le\frac{5\,J_1(|f'|,|f''|,g)}{4T}\le\frac{45\kappa}{T},
\]
with the Jansen--Ruskai--Seiler coefficient $J_1$ of \Cref{def:jrs-coefficient}. If $\kappa=1$, then $(1-P(1))\devprod_TP(0)=0$. Thus, for $\delta>0$,
$\lceil45\kappa/\delta\rceil$ unit steps give leakage at most $\delta$.
\end{corollary}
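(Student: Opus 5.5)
The plan is to obtain the bound from \Cref{thm:unit-steps-dat}: the continuous leakage is controlled by Jansen--Ruskai--Seiler with $m=1$, and to this we add the unit-step comparison error of \Cref{thm:unit-steps} in its width form, with width $\width=2$. The target is to show that the comparison error $c_0(2)\lip+c_1(2)\lip^2+c_2(2)\taylor$ is at most $J_1/4$. That gives the factor $5/4$. The bound $45\kappa$ is then a direct estimate of $J_1$.

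\emph{Setup and reduction.} For $\kappa=1$ we have $A=1$, so $H_1=H_0$ and $\sched$ is constant. Then $\devprod_T=e^{-iTH_0}$ commutes with $P(0)=P(1)$, and the leakage vanishes. Assume now $\kappa>1$.

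The vector $(0,b)$ lies in the kernel of both $H_0$ and $H_1$, since $Q_bb=0$. So $(0,b)$ is a common null vector. Its orthogonal complement is invariant under every $\sched(s)$, hence under $\devprod_T$ and $\evol$. On that complement, An and Lin's gap bound says $0$ is a simple eigenvalue with eigenprojector $P(s)$, isolated by $g(s)$. I would apply \Cref{thm:unit-steps-dat} and JRS there, taking $I=\{0\}$ and $m=1$. The null vector $(0,b)$ is orthogonal to $P$ and fixed by both evolutions, so it does not affect the leakage.

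\emph{Derivative and width bounds.} We have $\sched'=f'(H_1-H_0)$ and $\sched''=f''(H_1-H_0)$, with
\[
\|H_1-H_0\|\le\|A-1\|\le1-1/\kappa=:c<1.
\]
Hence $\|\sched'\|\le|f'|$ and $\|\sched''\|\le|f''|$. The explicit formula for $f$ extends smoothly to an open interval containing $[0,1]$, namely $s>-1/r$, so $\sched$ is $C^2$ there as JRS requires. Since $\|H_0\|,\|H_1\|\le1$, every sample has norm at most $1$, hence width at most $2<2\pi$. So $\gamma=2\pi-2$ and the explicit constants $c_i(2)$ apply.

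From $f'=a_pg^{3/2}$ with $g'=-cf'$ we get $f''=-\tfrac32ca_p^2g^2$. Since $g\le1$, this gives $\lip=a_p$ and $\taylor\le\tfrac34ca_p^2$. For the normalization, $a_p=2\kappa/(\sqrt\kappa+1)$, so $1\le a_p$ and $a_p^2\le4\kappa$.

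\emph{Comparing with $J_1$.} The integrand of $J_1$ is constant: $|f''|/g^2=\tfrac32ca_p^2$ and $f'^2/g^3=a_p^2$. The endpoint terms are $a_p$ at $s=0$ and $a_p\kappa^{-3/2}\kappa^2=a_p\sqrt\kappa$ at $s=1$. So
\[
J_1=a_p(1+\sqrt\kappa)+\bigl(\tfrac32c+7\bigr)a_p^2\ge a_p(1+\sqrt\kappa)+7a_p^2 .
\]
The majorants \eqref{eq:frame-majorants} with $\gamma=2\pi-2\approx4.28$ give $c_0(2),c_2(2)\le0.55$ and $c_1(2)\le0.13$. Therefore the comparison error is at most
\[
0.55a_p+0.13a_p^2+0.55\cdot\tfrac34a_p^2\le1.1a_p^2\le\tfrac74a_p^2\le J_1/4,
\]
where the first inequality uses $a_p\ge1$.

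For the final bound, $a_p(1+\sqrt\kappa)=2\kappa$ and $a_p^2\le4\kappa$ give $J_1\le2\kappa+8.5\cdot4\kappa=36\kappa$, so $\tfrac54J_1\le45\kappa$. Taking $T=\lceil45\kappa/\delta\rceil$ then gives leakage at most $\delta$.

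The main obstacle is the numerical bookkeeping for the $c_i(2)$, which must fit under the margin $J_1/4$. The structural subtlety is the doubled kernel: the statement's projector is rank one, so the restriction to the complement of $(0,b)$ has to be made explicit before JRS with $m=1$ applies.
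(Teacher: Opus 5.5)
Your proposal is correct and follows the paper's argument: JRS with $m=1$ for the continuous leakage, \Cref{thm:unit-steps-dat} with the width-$2$ constants $c_i(2)$, the comparison error absorbed into $J_1/4$, and $J_1\le36\kappa$ from the explicit AQC($3/2$) formulas (your $a_p$ is the paper's $\lip=2(1+r)^2/(r+2)$). The differences are minor: you handle the two-dimensional kernel by restricting to the orthogonal complement of $(0,b)$, where the paper applies JRS to $\sched+|(0,b)\rangle\langle(0,b)|$; and you bound the whole comparison error by $\tfrac74a_p^2$ using $a_p\ge1$ and the cruder $\gamma$-majorants, where the paper compares term by term with $c_0,c_2\le\tfrac12$ and $c_1\le\tfrac5{64}$.
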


\begin{proof}
The block norm bound
$\left\|\left(\begin{smallmatrix}0&B\\B^\dagger&0\end{smallmatrix}\right)\right\|\le\|B\|$
and $\|Q_b\|,\|A\|,\|1-A\|\le1$ give $\|H_0\|,\|H_1\|,\|H_1-H_0\|\le1$.
Since $0\le f\le1$, convexity gives $\|\sched\|\le1$, hence width at most $2$.
In block form, $\sched$ has off-diagonal blocks $[(1-f)+fA]Q_b$ and its adjoint, with $(1-f)+fA$ positive, so the range of $P(s)$ and $(0,b)$ span the kernel of $\sched(s)$, and $\sched P=P\sched=0$.
If $\kappa=1$, positivity and $\|A\|=\|A^{-1}\|=1$ give $A=1$, so the path is constant
and every step acts as the identity on its kernel.

Assume $\kappa>1$, and put $J=J_1(|f'|,|f''|,g)$ and $S=|(0,b)\rangle\langle(0,b)|$. Since $\sched S=S\sched=0$ and $SP(s)=0$, the operator
$\sched+S$ has derivatives $f'(H_1-H_0)$ and $f''(H_1-H_0)$, of norms at most $|f'|$ and $|f''|$, the simple eigenvalue $0$ with projector $P(s)$, and gap at least $g$, since $S$ moves $(0,b)$ to eigenvalue $1\ge g$.
Its evolution $\evol(t)e^{-itS}$ agrees with $\evol(t)$ on $\operatorname{ran}P(0)$,
so Jansen--Ruskai--Seiler, whose coefficient increases with the derivative norms, gives leakage at most $J/T$.

With $\lip=2(1+r)^2/(r+2)$, differentiation, $1-(1+r)^{-2}=r(r+2)/(1+r)^2$, and
$\kappa=(1+r)^2$ give $g=(1+rs)^{-2}$, $f'=\lip(1+rs)^{-3}$, and $f''=-3r\lip(1+rs)^{-4}$,
hence the bounds
$\|\sched'\|\le\lip$ and $\|\sched''\|\le3r\lip$.
The width-$2$ specialization in \Cref{thm:unit-steps}, with $c_i=c_i(2)$ and $\taylor=\tfrac32r\lip$, gives
comparison error at most $E=(c_0\lip+c_1\lip^2+\tfrac32c_2r\lip)/T$. By \Cref{thm:unit-steps-dat}, leakage is at most $J/T+E$.
Evaluating \eqref{eq:jrs-coefficient} at $m=1$, the endpoint terms $f'(0)/g(0)^2=\lip$ and $f'(1)/g(1)^2=\lip(1+r)$ and constant integrands $|f''|/g^2=3r\lip$ and $7f'^2/g^3=7\lip^2$ give
\[
 J=\lip+\lip(1+r)+3r\lip+7\lip^2=(4r+2)\lip+7\lip^2\le36\kappa,
\]
using $\lip\le2(1+r)$ and $4r+2\le4(1+r)$.

Since $\pi>3$, the bounds \eqref{eq:frame-majorants} give $\majN(2)\le1/8$ and $\majN'(2)\le1/64$,
so \eqref{eq:unit-coefficients} gives $c_0,c_2\le1/2$ and $c_1\le1/64+1/32+1/32=5/64$.
Since $r,\lip>0$, these bounds give $TE\le J/4$ term by term:
\[
 c_0\lip\le\frac{\lip+\lip(1+r)}4,\qquad
 c_1\lip^2\le\frac{5\lip^2}{64}\le\frac{7\lip^2}4,\qquad
 c_2\frac{3r\lip}2\le\frac{3r\lip}4.
\]
Adding gives leakage at most $5J/(4T)\le45\kappa/T\le\delta$ when $T\ge45\kappa/\delta$.
\end{proof}

\subsection{Walks}\label{sec:walks}
We extend the comparison to walks through their logarithms. Since a unitary determines its eigenphases only modulo $2\pi$ (\Cref{rem:alias}), we choose a logarithm with a branch cut in a common open arc free of spectrum.
\begin{definition}\label{def:walk}
A \emph{walk} is an ordered sequence of unitaries.
The walk is \emph{compatible} with a differentiable unitary-valued family $\devprod$ on $[0,1]$
when its $T$ steps are $\devprod(j/T)$, with product
\[
  \devprod_T=\devprod((T-1)/T)\cdots\devprod(0).
\]
\end{definition}

\begin{theorem}\label{thm:unit-steps-walk}
Let $T\ge1$ be an integer, let $\width<2\pi$, and let $\devprod$ be differentiable and unitary on $[0,1]$, with
$\|\devprod'\|\le\lip$ and $\|\devprod'(s)-\devprod'(s')\|\le2\taylor|s-s'|$.
Suppose that some open arc of length $2\pi-\width$ meets the spectrum of no $\devprod(s)$.
Then there is a differentiable Hermitian path $\Lambda$ on $[0,1]$ with $e^{-i\Lambda}=\devprod$ such that, for the product $\devprod_T$ of the compatible walk and evolution $\evol_\Lambda$ of $\Lambda(t/T)$,
\[
 \bigl\|\devprod_T-e^{i\Lambda(1)/2}\evol_\Lambda(T)e^{-i\Lambda(0)/2}\bigr\|
 \le\frac{c_0\lip_\Lambda+c_1\lip_\Lambda^2+c_2\taylor_\Lambda}{T},
\]
where $c_i=c_i(\width)$, $\lip_\Lambda=\majB\lip$, and $\taylor_\Lambda=\majB(\taylor+(\tfrac12+\majB')\lip^2)$, with the Bernoulli majorant $\majB$ of \Cref{def:exponential-operators} and its derivative $\majB'$ evaluated at $\width$.
The path has spectral width at most $\width$, is $C^2$ wherever $\devprod$ is $C^2$, and satisfies
\[
 \|\Lambda'\|\le\lip_\Lambda,\qquad
 \|\Lambda'(s)-\Lambda'(s')\|\le2\taylor_\Lambda|s-s'|.
\]
\end{theorem}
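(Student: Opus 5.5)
The plan is to build $\Lambda$ by a fixed branch of the logarithm, bound its derivatives through the inverse averaging operator of \Cref{lem:bernoulli-inverse}(3), and then apply \Cref{thm:unit-steps} at width $\width$.

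\emph{Construction of $\Lambda$.} Rotate so that the free open arc is centered at $-1$; that is, pick a phase $e^{i\theta}$ such that the spectrum of every $e^{i\theta}\devprod(s)$ lies in the closed arc $\{e^{-i\phi}:|\phi|\le\width/2\}$. Using the principal branch on this arc, define
\[
\Lambda(s)=i\log\bigl(e^{i\theta}\devprod(s)\bigr)+\theta,
\]
written as a Cauchy integral (or holomorphic functional calculus) over a contour in the free arc's complement. Then $\Lambda$ is Hermitian, satisfies $e^{-i\Lambda}=\devprod$, has spectrum in an interval of length at most $\width$, and is differentiable, and $C^2$ wherever $\devprod$ is, because the functional calculus is analytic on a neighborhood of the spectra, uniformly in $s$ (the spectra stay at positive distance from the arc by compactness, if the arc condition is read as closed-off slightly; otherwise I would shrink it).

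\emph{Derivative bounds.} Differentiating $e^{-i\Lambda}=\devprod$ with \eqref{eq:exp-derivative} gives
\[
\devprod'=-ie^{-i\Lambda}\avg_\Lambda(\Lambda'),\qquad\text{so}\qquad \Lambda'=\avg_\Lambda^{-1}\bigl(i\devprod^\dagger\devprod'\bigr).
\]
By \Cref{lem:bernoulli-inverse}(3), $\|\Lambda'\|\le\majB\lip=\lip_\Lambda$. For the Lipschitz bound on $\Lambda'$, I would split
\[
\Lambda'(s)-\Lambda'(s')=\avg_{\Lambda(s)}^{-1}\bigl(X(s)-X(s')\bigr)+\bigl(\avg_{\Lambda(s)}^{-1}-\avg_{\Lambda(s')}^{-1}\bigr)X(s'),
\]
where $X=i\devprod^\dagger\devprod'$.
\begin{itemize}
\item The first term: $\|X(s)-X(s')\|\le\|\devprod(s)-\devprod(s')\|\lip+2\taylor|s-s'|\le(\lip^2+2\taylor)|s-s'|$, so it contributes at most $\majB(2\taylor+\lip^2)|s-s'|$.
\item The second term: the Lipschitz bound $2\majB'\|\Lambda(s)-\Lambda(s')\|\le2\majB'\lip_\Lambda|s-s'|$ times $\|X\|\le\lip$ contributes $2\majB'\majB\lip^2|s-s'|$.
\end{itemize}
Summing gives $2\majB(\taylor+(\tfrac12+\majB')\lip^2)|s-s'|=2\taylor_\Lambda|s-s'|$, as stated. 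One caveat: the bound on $\avg^{-1}$ differences in \Cref{lem:bernoulli-inverse}(3) was stated for $h,\tilde h$ of width at most $\width$, which holds here.

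\emph{Conclusion.} Now $\Lambda$ satisfies the hypotheses of \Cref{thm:unit-steps} with $\lip_\Lambda$ and $\taylor_\Lambda$, at width at most $\width<2\pi$, which by \Cref{lem:bernoulli-inverse}(2) supplies separation. Its product $e^{-i\Lambda((T-1)/T)}\cdots e^{-i\Lambda(0)}$ equals $\devprod_T$, so the stated bound follows with $c_i(\width)$. The main obstacle is the regularity and width claim for the logarithm: a single branch must work for all $s$ simultaneously, which the common free arc provides, and one must justify differentiability of $\Lambda$ through the Cauchy integral rather than only through the formal inversion of $\avg$.
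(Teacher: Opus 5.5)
Your proposal is correct and follows the paper's proof. It uses the same branch of the logarithm cut inside the common free arc, the identity $\Lambda'=\avg_\Lambda^{-1}(i\devprod^\dagger\devprod')$, the same add-and-subtract split bounded with \Cref{lem:bernoulli-inverse}(3), and \Cref{thm:unit-steps} at width $\width$.

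The only difference is in the regularity step. You get continuity, differentiability, and $C^2$ regularity from the holomorphic functional calculus. The paper instead proves continuity by compactness and uses the local inverse of the exponential.

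Your caveat about shrinking the arc is unnecessary, and shrinking would enlarge $\width$ and weaken the constants. A single contour only has to enclose the closed complementary arc of length $\width$ and avoid the cut ray through the free arc's midpoint. That ray is automatically at positive distance from the closed arc, even when spectra touch the free arc's endpoints.
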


\begin{proof}
Write the complementary closed arc as $\{e^{-i\theta}:\theta\in I\}$, with $I$ a closed interval of length $\width$.
Diagonalizing $\devprod(s)=\sum_k e^{-i\theta_k}P_k$ with $\theta_k\in I$, we set $\Lambda(s)=\sum_k\theta_kP_k$, which is Hermitian with $e^{-i\Lambda(s)}=\devprod(s)$ and spectral width at most $\width$.
Hermitian matrices with spectrum in the compact interval $I$ form a compact set. If $s_n\to s$, any subsequential limit is Hermitian, has spectrum in $I$, and exponentiates to $\devprod(s)$ by continuity of the exponential. Injectivity of $\theta\mapsto e^{-i\theta}$ on $I$ makes this limit uniquely $\Lambda(s)$, proving continuity. The exponential has invertible differential by
\Cref{lem:bernoulli-inverse}, so its smooth local inverse preserves differentiability and $C^2$ regularity. Differentiating $\devprod=e^{-i\Lambda}$ by \eqref{eq:exp-derivative} gives $\devprod'=-i\devprod\,\avg_\Lambda(\Lambda')$, so
$\Lambda'=\avg_\Lambda^{-1}(i\devprod^{\dagger}\devprod')$.
Unitarity and \Cref{lem:bernoulli-inverse} give
$\|\Lambda'\|\le\majB\lip$, so $\|\Lambda(s)-\Lambda(s')\|\le\majB\lip|s-s'|$. Adding and subtracting $\devprod(s)^\dagger\devprod'(s')$ gives
\begin{align*}
  \Lambda'(s)-\Lambda'(s')
  &=\avg_{\Lambda(s)}^{-1}\bigl(i\devprod(s)^\dagger(\devprod'(s)-\devprod'(s'))+i(\devprod(s)-\devprod(s'))^\dagger\devprod'(s')\bigr)\\
  &\quad+\bigl(\avg_{\Lambda(s)}^{-1}-\avg_{\Lambda(s')}^{-1}\bigr)\bigl(i\devprod(s')^\dagger\devprod'(s')\bigr),
\end{align*}
which implies, by \Cref{lem:bernoulli-inverse}, the bounds on $\devprod'$ and $\Lambda$, and
$\|\devprod(s)-\devprod(s')\|\le\lip|s-s'|$ obtained by integrating $\|\devprod'\|\le\lip$, that
\[
  \|\Lambda'(s)-\Lambda'(s')\|
  \le\majB\,(2\taylor+\lip^2)|s-s'|+2\majB'\cdot\majB\lip|s-s'|\cdot\lip
  =2\majB\,(\taylor+(\tfrac12+\majB')\lip^2)|s-s'| .
\]
Since $e^{-i\Lambda(j/T)}=\devprod(j/T)$, \Cref{thm:unit-steps} for $\Lambda$ gives the comparison.
\end{proof}

Thus, \Cref{thm:unit-steps-dat} transfers continuous adiabatic bounds for $\Lambda$ to the walk, giving $O(\kappa/\delta)$ controlled queries to a block encoding (\Cref{app:qlsp-result}) in \Cref{cor:qlsp-walk}.
\section{Discussion}\label{sec:outlook}

At finite precision, products of unitaries and continuous evolutions are the same object, whichever we start from.
Holding the entries at finite precision turns a continuous evolution into a product within $\sqrt2\,\sparsity2^{-\bits}T$ of it
(\Cref{lem:induced,lem:main}).
Holding the clock at finite precision turns a product of unit steps from a smooth path with separated samples back into
that path's own evolution, up to the half-step phases and $O(1/T)$ (\Cref{thm:unit-steps}).
Theorems therefore transfer between the two settings, and a theorist can derive a discrete result from its continuous
counterpart instead of proving it again.
The quantum linear systems problem shows what this buys, since there the continuous theorem and the comparison together recover the $O(\kappa/\delta)$ count that was previously obtained through
a discrete adiabatic theorem (\Cref{cor:qlsp-unit-steps,cor:qlsp-walk}).

Because the comparisons concern evolutions rather than adiabatic transport, I expect them to help wherever a
time-dependent evolution is simulated.
Existing methods choose their discretization from the Hamiltonian in different ways.
Wiebe, Berry, H\o yer, and Sanders choose step lengths from the local size of the derivatives of $\sched$
\cite{wiebe2011}.
Berry, Childs, Su, Wang, and Wiebe rescale time by $\|\sched(t)\|$, so that costs scale with
$\int\|\sched(t)\|\,dt$ rather than $T\max_t\|\sched(t)\|$ \cite{berry2020}.
In Low and Wiebe's interaction-picture algorithm, the average of $\|\dot\sched\|$ controls the number of
discretization points and enters the cost logarithmically \cite{low2018}.
Poulin, Qarry, Somma, and Verstraete avoid smoothness assumptions altogether by sampling at random times
\cite{poulin2011}.
Holding entries at finite precision needs neither derivatives nor random times, so I expect its equal-variation
checkpoints to complement these methods.
Each factor it produces is time-independent, and qubitization \cite{low2019} can implement it.
On analog hardware, a device whose finite-resolution controls truncate the entries of $\sched$ in a fixed basis implements
$\sched^{(\bits)}(t)$, so \Cref{lem:induced} applies to it, and so does \Cref{thm:transfer} whenever $\sched^{(\bits)}$ is
piecewise constant, as it is for piecewise-monotone $\sched$ (\Cref{prop:pieces-exist}).

The correspondence is first order, and that leaves its main open questions, since some continuous theorems give much more than $O(1/T)$.
For gapped Gevrey families whose derivatives all vanish at the two ends, the continuous leakage has bounds of the form
$\exp(-cT^a)$, with $c>0$ and $0<a<1$ \cite{elgart2012,ge2016}.
Analytic Hamiltonians even have exponentially small estimates under suitable conditions at the two ends
\cite{joye1991,nenciu1993,hagedorn2002}.
At fine steps, transferring such a bound means taking the evolution error $\adiaberr$ comparable to it, and the precision $\bits$ of \Cref{lem:main}, which is $O(\log(\sparsity T/\adiaberr))$, then grows at most linearly in $T$.
The factor counts of \Cref{lem:main}(2) and \Cref{sec:pieces}, however, grow like $2^\bits$,
which, for a fixed piecewise-monotone path rescaled to $[0,T]$, is like $T/\adiaberr$ and outgrows every power of $T$
for these bounds.
Can polynomially many factors retain the same leakage decay?
At unit steps, the correction $1-iY$ of \Cref{lem:cell} cancels only the linear drift, so the question becomes which
additional corrections or endpoint hypotheses would let faster-decaying bounds transfer.

A second question concerns growth in $T$.
The time-discretization error of a sampled adiabatic product need not grow with $T$
\cite{yi2021,kovalsky2023,an2025,lu2026}, and I would like the same for entry truncation.
At fixed precision $\bits$, when is its contribution to adiabatic leakage
bounded independently of $T$?
\Cref{thm:unit-steps} rounds the clock but keeps exact samples, so it does not answer this.
Derivative bounds on $\sched$ alone need not answer it either, because truncating a linearly moving coordinate produces a
sawtooth error that could resonate with the spectral gap.
A third question is whether the unit-step comparison extends to contracting evolutions \cite{avron2012}, which
Cunningham and Roland obtain by applying unitaries at Poisson-distributed times and averaging over realizations
\cite{cunningham2026}.
The obstacle is that the correction $1-iY$ of \Cref{lem:cell} need not be a contraction.

Finally, some costs lie outside this paper.
Direct truncation needs the times at which entries cross truncation levels, which an oracle for $\sched(t)$ does not
return.
\Cref{lem:main}(2) avoids this by computing checkpoints from the variation $\pathlen(t)$, or from any computable
nondecreasing $\gauge$ with $\|\sched(t)-\sched(s)\|\le\gauge(t)-\gauge(s)$ for $s\le t$.
For adiabatic applications, preparing a state in the initial spectral subspace and testing for the final one are further
tasks.
\section*{Acknowledgments}
I thank Prabal Adhikari, Evelyn Sander, Jacob Bringewatt, Sam Mendelson, and Geir Agnarsson for
helpful conversations. I checked the numbered results of this paper, with their constants, in the
Lean~4 proof assistant with the Mathlib library, the Davis--Kahan theorem included. The code is
not yet polished enough for public release. It is available upon request now, and I will post
it on GitHub once it is polished. The proofs of Corollaries~\ref{cor:jrs}, \ref{cor:unit-steps-jrs},
\ref{cor:qlsp-unit-steps}, and~\ref{cor:qlsp-walk} take the theorem of Jansen, Ruskai, and
Seiler as input, and the proofs of Corollaries~\ref{cor:qlsp-unit-steps} and~\ref{cor:qlsp-walk} also take An and
Lin's gap bound. Neither input was checked in Lean. I used AI assistance (Anthropic's Claude Code, OpenAI's Codex, and Harmonic's Aristotle) to formalize
proofs in Lean, to survey and compare the literature, and to draft and revise the text under my
direction. During revision, presentational changes that these tools suggested in \Cref{sec:coarse} led to proofs, worked out together with them, that differ from my original ones. I checked them by hand and in Lean before including them. At my request, these tools also moved the quantum linear systems material into Appendix~\ref{app:qlsp-walk}, which is written as a self-contained companion note with its own title and local notation. I combined the two into one document for this posting. The move may have left presentational artifacts, such as repeated definitions or notation that differs from the main text, and I have corrected those I found. The appendix's mathematics is checked in Lean as described above. The main text and an extended version of the appendix will appear as separate papers in their journal submissions.

\clearpage
\appendix
\crefalias{section}{appendix}
\refstepcounter{section}
\makeatletter
\def\@currentlabelname{Quantum linear systems by a rescaled qubitized walk}
\makeatother
\label{app:qlsp-walk}
\addcontentsline{toc}{section}{Appendix \thesection: Quantum linear systems by a rescaled qubitized walk}
\markboth{Michael Jarret}{Quantum linear systems by a rescaled qubitized walk}
\thispagestyle{plain}
\begin{center}
{\large\bfseries Appendix \thesection: Quantum linear systems\\[3pt]
by a rescaled qubitized walk\par}
\medskip
{\scshape Michael Jarret\par}
\end{center}
\begin{quotation}
\small\noindent\textsc{Abstract.}
(This appendix is written as a companion note, with its own title,
abstract, and local notation.)
We use a continuous adiabatic theorem to obtain $O(\kappa/\delta)$ queries for quantum linear systems,
giving a rescaled qubitized walk, its circuit, and endpoint operations. Rescaling keeps a fixed spectral arc free,
so the unit-step comparison constants are absolute.
\end{quotation}

\subsection{Introduction}\label{app:qlsp-introduction}
For the rescaled qubitized walk \cite{low2019}, transferring a continuous adiabatic bound through its logarithm
(\Cref{thm:unit-steps-walk,thm:unit-steps-dat}) recovers Costa et al.'s $O(\kappa/\delta)$ query count \cite{costa2022}.
We use the Bernoulli-majorant bounds (\Cref{lem:bernoulli-inverse}) and An and Lin's AQC($3/2$) path
(\Cref{cor:qlsp-unit-steps,def:aqc-schedule}) \cite{anlin2022}, then examine how the walk construction affects the constant.

\subsection{Setting and result}\label{app:qlsp-result}
The task is to prepare $A^{-1}b/\|A^{-1}b\|$, with $\|b\|=1$, from a Hermitian block encoding $(O,V)$ of positive definite $A=V^\dagger OV$, where $O$ is a Hermitian involution and $V$ an isometry.
For an exact normalized block encoding $A=\iota^\dagger U_A\iota$ with isometry $\iota$, take
$O=|0\rangle\langle1|\otimes U_A+|1\rangle\langle0|\otimes U_A^\dagger$,
with $V=|+\rangle\otimes\iota$ \cite{costa2022}, which uses a constant number of controlled calls to $U_A$ and $U_A^\dagger$.
An and Lin's path is $H(f)=(1-f)H_0+fH_1$, where, with $Q_b=I-|b\rangle\langle b|$,
\[
 H_0=\begin{pmatrix}0&Q_b\\Q_b&0\end{pmatrix},\qquad
 H_1=\begin{pmatrix}0&AQ_b\\Q_bA&0\end{pmatrix}.
\]
The walk encodes this path through the isometry
\[
 L=|0\rangle\otimes I_2\otimes|0\rangle^{\otimes2}\otimes V,
\]
and reversing $L$'s implementation decodes the output.
\begin{definition}\label{def:qlsp-walk-construction}
Put $\alpha(f)=\sqrt{(1-f)^2+f^2}$.
Use four qubits, ordered as in $L$, followed by the space $\operatorname{codom}V$ of $O$.
Put $Q_0=\operatorname{diag}(VQ_bV^\dagger,0)$,
\[
 R_{1,0}=\begin{pmatrix}I&0\\0&-I\end{pmatrix},\qquad
 R_{0,1}=\begin{pmatrix}0&I\\I&0\end{pmatrix},\qquad
 R=\frac{(1-f)R_{1,0}+fR_{0,1}}{\alpha},
\]
\[
 \Pi=\frac1{\sqrt2}\begin{pmatrix}I&I\\I&-I\end{pmatrix},\qquad
 \Pi_Q=\begin{pmatrix}Q_0&I-Q_0\\I-Q_0&Q_0\end{pmatrix},\qquad
 \Pi_1=\begin{pmatrix}1/2&\sqrt3/2\\\sqrt3/2&-1/2\end{pmatrix},
\]
and $G=\operatorname{diag}(\operatorname{diag}(R,R),\Pi_Q\operatorname{diag}(\Pi,\Pi))$.
The outer block indices of $G$, of $\Pi_Q$, and of $R$ and $\Pi$ correspond to qubits 2, 3, and 4,
respectively. The operator $\operatorname{diag}(I,O,I,O)$ applies $O$ when qubit 4 is $|1\rangle$.
Define
\[
 E=\Pi_1\otimes
 \left[G\left(\begin{pmatrix}0&1\\1&0\end{pmatrix}\otimes
 \operatorname{diag}(I,O,I,O)\right)G^\dagger\right].
\]
Put $Z=I_2\otimes\operatorname{diag}(I,0,Q_0,I-Q_0)$ to isolate the solution line.
Fix $\kappa>1$ with AQC($3/2$) schedule $f$ (\Cref{def:aqc-schedule}). Define
$W(s)=(2LL^\dagger-I)E(f(s))$ and $F_s=(I+iE(f(s)))/\sqrt2$.
\end{definition}

Only $\operatorname{diag}(I,O,I,O)$ in $E$ queries $O$, giving one controlled query per walk step or endpoint operation.
By \Cref{lem:qlsp-walk-circuit}, $E(f)$ is a Hermitian involution with
\[
 L^\dagger E(f)L=H(f)/(2\sqrt2\,\alpha(f)).
\]
Since $E^2=I$, $F_s$ is unitary. If $H(f(s))v=0$, the compression gives $L^\dagger ELv=0$ and hence, using $E^2=I$,
\[
 WLv=-ELv,\qquad WELv=Lv,\qquad WF_sLv=iF_sLv.
\]
The entry $\langle0|\Pi_1|0\rangle=1/2$ rescales the encoded block to norm at most $1/2$, leaving a $\kappa$-independent arc free of eigenvalues of $W$.
When $\|A\|=1$, the gap parameter $\kappa$ below can be taken to be the condition number \cite[Appendix~A]{anlin2022}.

\begin{corollary}\label{cor:qlsp-walk}
Let $O$ be a Hermitian involution, $V$ an isometry, $A=V^\dagger OV$ positive definite,
and $\|b\|=1$. Assume, for some $\kappa>1$, that every nonzero eigenvalue of $H(f)$ has absolute value at least $1-f+f/\kappa$ for $0\le f\le1$ \cite[Appendix~A]{anlin2022}. Assume also the
continuous multiplicity-one Jansen--Ruskai--Seiler theorem \cite[Theorem~3]{jansen2007}. Use AQC($3/2$) with
\[
 r=\sqrt\kappa-1,\qquad
 f(s)=\frac{1-(1+rs)^{-2}}{1-(1+r)^{-2}}.
\]
The walk $W(s)$ of \Cref{def:qlsp-walk-construction} uses one controlled $O$ query per step.
For the constant $K>0$ of \Cref{lem:qlsp-walk-constants}:
\begin{enumerate}
\item For every integer $T\ge1$,
\[
 \|(I-P(1))W((T-1)/T)\cdots W(0)P(0)\|\le\frac{K\kappa}{T},
\]
where $P(s)$ projects onto the line through $F_sL([(1-f(s))I+f(s)A]^{-1}b,0)$, a $+i$
eigenvector of $W(s)$.
Thus, for $0<\delta\le1$,
$\lceil K\kappa/\delta\rceil$ steps suffice for leakage at most $\delta$.
\item For $0<\delta\le1$, put $N=\lceil2K\kappa/\delta\rceil$.
Starting from $L(b,0)$, apply controlled $E(f(0))$ with its control prepared in $(|0\rangle+i|1\rangle)/\sqrt2$ and measured, succeeding on outcome $(|0\rangle+|1\rangle)/\sqrt2$, take $N$ walk steps, and apply controlled $E(f(1))$ in the same way with the control prepared in $(|0\rangle-i|1\rangle)/\sqrt2$.
The walk leakage is at most $\delta/2$.
Each endpoint succeeds with probability $1/2$, so joint success has probability $1/4$. Conditioned on success, the normalized output is within distance $\delta$, up to a global
phase, of $L(A^{-1}b/\|A^{-1}b\|,0)$.
Each attempt uses $N+2\le(2K+3)\kappa/\delta$ controlled $O$ queries, excluding preparing $b$, implementing $Q_b$, and operations built from $V$.
\end{enumerate}
\end{corollary}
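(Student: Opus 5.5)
The proof transfers the continuous Jansen--Ruskai--Seiler bound for the walk's logarithm $\Lambda$ to the product by \Cref{thm:unit-steps-walk,thm:unit-steps-dat}, exactly as in \Cref{cor:qlsp-unit-steps}. It runs in three steps: the spectral structure of $W(s)$, the derivative bounds, and the endpoint operations.

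\textbf{Spectral structure.} By \Cref{lem:qlsp-walk-circuit}, $E(f)$ is a Hermitian involution with $L^\dagger EL=H/(2\sqrt2\,\alpha)$, whose norm is at most $1/2$. So $W=(2LL^\dagger-I)E$ is a product of two reflections. Jordan's lemma splits it into invariant blocks of dimension at most two.
\begin{itemize}
\item On a two-dimensional block where the compression has eigenvalue $\lambda$, $|\lambda|\le1/2$, the eigenvalues of $W$ are $e^{\pm i\arccos\lambda}$.
\item These lie in a fixed arc around $\pm i$ of half-width $\arcsin(1/2)=\pi/6$.
\item The one-dimensional blocks give eigenvalues $\pm1$.
\end{itemize}
I expect the open arcs strictly between these eigenvalue sets to stay free for every $s$, and I will fix one such arc. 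Its length $2\pi-\width$ is then independent of $\kappa$, so $\majB(\width)$, $\majB'(\width)$ and the $c_i(\width)$ are absolute constants. The $+i$ eigenvector is $F_sLv$ with $v=([(1-f)+fA]^{-1}b,0)\in\ker H$, as computed above the statement. Its eigenvalue is isolated by an arc gap proportional to the gap of $H/\alpha$, which is at least a constant times $g=1-f+f/\kappa$, since $\alpha\le1$.

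\textbf{Derivatives and the continuous bound.} Next I bound $\|W'\|\le\lip$ and $\|W''\|\le2\taylor$. Since $E$ depends smoothly on $f$ through $R(f)$, with $\|dR/df\|$ and $\|d^2R/df^2\|$ bounded by constants (because $\alpha\ge1/\sqrt2$), the chain rule gives $\|W'\|\lesssim|f'|$ and $\|W''\|\lesssim|f''|+f'^2$. With the explicit AQC($3/2$) formulas, this means $\lip=O(\sqrt\kappa)$ and $\taylor=O(\sqrt\kappa\cdot r+\kappa)=O(\kappa)$.
\begin{itemize}
\item \Cref{thm:unit-steps-walk} then gives a logarithm $\Lambda$ whose comparison error is $O((\lip+\lip^2+\taylor)/T)=O(\kappa/T)$.
\item On $\Lambda$, the $+i$ eigenvector becomes an eigenvector at eigenvalue $-\pi/2$, with gap $\gtrsim g$.
\item The JRS coefficient $J_1(\|\Lambda'\|,\|\Lambda''\|,g')$ is computed as in \Cref{cor:qlsp-unit-steps}. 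The integrands $|f''|/g^2$ and $f'^2/g^3$ are $O(\sqrt\kappa)$ and $O(\kappa)$ up to constants, so $J_1=O(\kappa)$.
\end{itemize}
\Cref{thm:unit-steps-dat} then yields part (1), with $K$ collecting the constants of \Cref{lem:qlsp-walk-constants}.

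\textbf{Main obstacle.} The step I expect to be hardest is the arc-gap estimate. It must bound the distance from $i$ to the rest of the spectrum of $W(s)$, including the $\pm1$ blocks and the other Jordan blocks, below by $c\,g(s)$. It must also control the derivatives of $\Lambda$ near that eigenvalue. I would handle this through $\arccos$ being Lipschitz on $[-1/2,1/2]$: an eigenvalue $\lambda$ of $H/(2\sqrt2\alpha)$ with $|\lambda|\ge g/(2\sqrt2)$ maps to a phase at distance $\ge|\lambda|$ from $\pi/2$.

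\textbf{Part (2).} Starting from $L(b,0)$, the controlled $E(f(0))$ with the prepared control, postselected on $|+\rangle$, applies $(I\pm iE)/2$. By $E^2=I$, this maps $L(b,0)$ to $F_0L(b,0)/\sqrt2$, which succeeds with probability $1/2$. The vector $(b,0)$ spans $\ker H(0)$ with $(0,b)$, so it is a valid initial eigenvector, and $P(0)$ is its line. The walk with $N$ steps leaks at most $\delta/2$ by (1). The final controlled $E(f(1))$ implements $F_1^\dagger/\sqrt2$ on the $+i$ line and returns $L([A]^{-1}b,0)$ normalized. A leakage of $\delta/2$ bounds the state distance by $\delta$ up to phase, since $\|\psi-e^{i\phi}P\psi/\|P\psi\|\|\le2\|(1-P)\psi\|$ after normalization. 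The success probability is $1/2$ at each end, up to the leaked component, which is handled by conditioning. Counting queries gives $N+2\le(2K+3)\kappa/\delta$, using $\delta\le1<\kappa$.
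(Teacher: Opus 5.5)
Your proposal has a genuine gap in the spectral-structure step: the $+i$ eigenvalue of $W(s)$ is not isolated. Since $H_0(0,b)=(Q_bb,0)=0$ and $H_1(0,b)=(AQ_bb,0)=0$, the vector $(0,b)$ lies in $\ker H(f)$ for every $f$, not just at $f=0$ as you observe in part (2). Hence $\ker H(f)=\operatorname{span}\{([(1-f)I+fA]^{-1}b,0),(0,b)\}$, and the $+i$ eigenspace $F_sL\ker H(f(s))$ of $W(s)$ is two-dimensional for every $s$.

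This has three consequences:
\begin{enumerate}
\item The distance you propose to bound below by $c\,g(s)$ in your ``main obstacle'' paragraph is zero.
\item $P(s)$ is a proper subprojector of the $-\pi/2$ eigenprojector of $\Lambda(s)$, not the projector of an isolated eigenvalue, so the multiplicity-one Jansen--Ruskai--Seiler theorem does not apply to it.
\item The issue is not cosmetic. Inside a degenerate band, the adiabatic evolution may rotate $F_0L(b,0)$ toward the second kernel line $F_1L(0,b)$. That is leakage out of $P(1)$ which no gap hypothesis controls, since even a multiplicity-two JRS on the full band only bounds leakage out of the band.
\end{enumerate}

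The missing idea is the conserved projector $Z$ of \Cref{lem:qlsp-walk-circuit}(3). It commutes with $E$, with every $W(s)$ and $F_s$, and hence with $\Lambda$. The identity $ZL=L\operatorname{diag}(I,Q_b)$ places $F_sL([(1-f)I+fA]^{-1}b,0)$ in $\operatorname{ran}Z$ and $F_sL(0,b)$ in $\ker Z$. The paper then works with $\widehat\Lambda=Z\Lambda Z-(7\pi/6)(I-Z)$, which has these properties:
\begin{enumerate}
\item its $-\pi/2$ band is exactly $\operatorname{ran}P(s)$;
\item its added eigenvalue lies $2\pi/3$ from the band, so the gap bound $1/[2\sqrt2\,\alpha(1+rs)^2]\le1/2$ survives;
\item its derivatives are $Z\Lambda'Z$ and $Z\Lambda''Z$, so the derivative bounds carry over;
\item its unit-step product agrees with the walk on $\operatorname{ran}P(0)$.
\end{enumerate}
Only then do multiplicity-one JRS and \Cref{thm:unit-steps-dat} combine. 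In \Cref{cor:qlsp-unit-steps}, which you cite as your model, the same degeneracy is removed by adding $S=|(0,b)\rangle\langle(0,b)|$. You need an analogue here, and because the line $F_sL(0,b)$ moves with $s$, the fixed projector $Z$ is the natural choice.

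Two smaller points concern the constant and the endpoints. First, routing through \Cref{thm:unit-steps-walk} gives $\|\Lambda'\|\le\majB(5\pi/3)\|W'\|$ with $\majB(5\pi/3)=2+\tfrac{5\pi}6(1+\sqrt3)>9$. That does not meet the hypothesis $\|\Lambda'\|\le\rho_1|f'|/\alpha^2$, $\rho_1=5\pi/3$, of \Cref{lem:qlsp-walk-constants}, so you would prove $O(\kappa/T)$ with a larger constant than the stated $K$. The paper obtains $\|\avg_\Lambda^{-1}\|\le(w/2)/\sin(w/2)=5\pi/3$ from the halving identity $\avg_\Lambda^{-1}=(2\avg_{\Lambda/2}^{-1}-\avg_\Lambda^{-1})e^{-a_\Lambda/2}$, whose Bernoulli coefficients $(2-2^n)B_n/n!$ sum in absolute value to $\majB(w)-2\majB(w/2)+2$. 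It also keeps the factor $1/\alpha$ in the gap, which you drop. Second, each endpoint succeeds with probability exactly $1/2$, not ``up to the leaked component.'' Because $(I-iE(f(1)))/2=F_1^\dagger/\sqrt2$ with $F_1$ unitary, the success probability is $1/2$ for every input and the normalized output is $F_1^\dagger\psi$. Your bound $2\|(I-P(1))\psi\|\le\delta$ then finishes part (2).
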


\subsection{Proof}\label{app:qlsp-proof}
The proof cuts the walk's logarithm in a $\kappa$-independent free arc, restricts it to a subspace with a rank-one solution band, and applies \Cref{thm:unit-steps-dat} and \Cref{lem:qlsp-walk-circuit,lem:qlsp-walk-constants}.

\begin{proof}
By \Cref{lem:qlsp-walk-circuit}, $E$ is a Hermitian involution with the compression stated before \Cref{cor:qlsp-walk}.

For real $f$, $\|A\|=\|V^\dagger OV\|\le1$ gives $\|H_0\|,\|H_1\|\le1$ and hence $\|H(f)\|\le|1-f|+|f|\le\sqrt2\,\alpha$ by Cauchy--Schwarz, so the compression norm is at most $1/2$. For a compression eigenpair $(\lambda,v)$, $WLv=2\lambda Lv-ELv$ and $WELv=Lv$ give the walk phases
$\exp(\pm i\arccos\lambda)$, $\lambda\in\operatorname{spec}(L^\dagger EL)$.
On the complement of these invariant planes, $W$ acts as $-E$, with eigenvalues $\pm1$. Since $\arccos\lambda\in[\pi/3,2\pi/3]$, cutting $i\log$ inside the free phase arc $(0,\pi/3)$ gives, by the local inverse argument of \Cref{thm:unit-steps-walk}, a $C^2$ Hermitian logarithm where $1+rs>0$, with spectrum in
$[-2\pi,-\pi/3]$.
The cut sends $+i$ to the
logarithm eigenvalue $-\pi/2$, and the gap bound $1-f+f/\kappa=(1+rs)^{-2}$, by the formula for $f$, gives it gap at least $1/[2\sqrt2\,\alpha(1+rs)^2]$ on $[0,1]$.
Indeed, compression scales the gap of $H(f)$ by $1/(2\sqrt2\,\alpha)$, the phases $e^{i\arccos\lambda}$ have logarithm eigenvalues $-\arccos\lambda$ with $|\arccos\lambda-\pi/2|=|\arcsin\lambda|\ge|\lambda|$, and the logarithm eigenvalues of $e^{-i\arccos\lambda}$ and $\pm1$ lie at least $\pi/2$ from $-\pi/2$.

Write $\Lambda$ for the full logarithm, with majorants evaluated at $w=5\pi/3$, and recall from \Cref{def:exponential-operators} that
$a_\Lambda(X)=i(\Lambda X-X\Lambda)$ and
$\avg_\Lambda=\int_0^1 e^{\tau a_\Lambda}\,d\tau$, and from \Cref{lem:bernoulli-inverse}(3) that
$\avg_\Lambda^{-1}=\sum_{n\ge0}(B_n/n!)a_\Lambda^n$.
Splitting this integral at $1/2$ gives
$\avg_\Lambda=\tfrac12(1+e^{a_\Lambda/2})\avg_{\Lambda/2}$.
Since these functions of $a_\Lambda$ commute, $2\avg_{\Lambda/2}^{-1}=(1+e^{a_\Lambda/2})\avg_\Lambda^{-1}$, which rearranges to the first identity below.
Unitary conjugation $e^{-a_\Lambda/2}$ preserves the norm, and the Bernoulli-series factors
$2-2^n$ are $1,0$ at $n=0,1$ and negative for $n\ge2$. Hence, using
the closed form $\majB(x)=2+x/2-(x/2)\cot(x/2)$ (derived in the proof of
\Cref{lem:qlsp-walk-constants}) and $\cot(y/2)-\cot y=1/\sin y$,
\begin{align*}
  \avg_\Lambda^{-1}
  &=\bigl(2\avg_{\Lambda/2}^{-1}-\avg_\Lambda^{-1}\bigr)e^{-a_\Lambda/2},\\
  2\avg_{\Lambda/2}^{-1}-\avg_\Lambda^{-1}
  &=\sum_{n\ge0}\frac{(2-2^n)B_n}{n!}(a_\Lambda/2)^n,\\
  \|\avg_\Lambda^{-1}\|
  &\le\majB(w)-2\majB(w/2)+2\\
  &=\frac w2\left(\cot\frac w4-\cot\frac w2\right)
   =\frac{w/2}{\sin(w/2)}.
\end{align*}
Since $\sin(w/2)=1/2$, the bound is $5\pi/3$, sharper than the bound $\majB(5\pi/3)>9$ of \Cref{lem:bernoulli-inverse}. The identity $\Lambda'=\avg_\Lambda^{-1}(iW^\dagger W')$ (proof of \Cref{thm:unit-steps-walk}) and its derivative, with $\|(\avg_\Lambda^{-1})'\|\le2\majB'\|\Lambda'\|$ (\Cref{lem:bernoulli-inverse}) and $\|(iW^\dagger W')'\|\le\|W''\|+\|W'\|^2$, give
\[
 \|\Lambda'\|\le\tfrac{5\pi}3\|W'\|,\qquad
 \|\Lambda''\|\le\tfrac{5\pi}3\|W''\|+\tfrac{5\pi}3(1+2\majB')\|W'\|^2.
\]
The fixed orthogonal projector $Z$ commutes with $E$ and every step, hence with $F_s$ and the logarithm, and satisfies
$ZL=L\operatorname{diag}(I,Q_b)$ (\Cref{lem:qlsp-walk-circuit}).
The invariant planes give the phase $\pi/2$ only at $\lambda=0$, so the $+i$ eigenspace of $W(s)$ is $F_sL\ker H(f(s))$, and $ZF_sL=F_sL\operatorname{diag}(I,Q_b)$. Since
$\ker H(f)=\{(z[(1-f)I+fA]^{-1}b,z'b):z,z'\in\mathbb C\}$ meets
$\operatorname{ran}\operatorname{diag}(I,Q_b)$ in the line through
$([(1-f)I+fA]^{-1}b,0)$, the $+i$ eigenspace of $W(s)$ on $\operatorname{ran}Z$ is $\operatorname{ran}P(s)$.

Define $\widehat\Lambda=Z\Lambda Z-(7\pi/6)(I-Z)$. It agrees with $\Lambda$ on the invariant subspace
$\operatorname{ran}Z$, so $e^{-i\widehat\Lambda(s)}Z=W(s)Z$, and equals $-7\pi/6$, the midpoint of $[-2\pi,-\pi/3]$, on its orthogonal complement,
so its spectrum stays in that interval and its $-\pi/2$ band is $\operatorname{ran}P(s)$.
The added eigenvalue is $2\pi/3$ from the band, preserving the gap bound
$1/[2\sqrt2\,\alpha(1+rs)^2]\le1/2$, since $\alpha\ge1/\sqrt2$ and $s\ge0$.
Since $Z$ is fixed, $\widehat\Lambda'=Z\Lambda'Z$ and $\widehat\Lambda''=Z\Lambda''Z$,
so projector contraction preserves both derivative bounds.
Since $W(s)$ and $e^{-i\widehat\Lambda(s)}$ commute with $Z$ and agree on $\operatorname{ran}Z$, and $ZP(0)=P(0)$, their products agree on $\operatorname{ran}P(0)$.
Thus, substituting \Cref{lem:qlsp-walk-circuit}(2) into the derivative bounds lets us apply
\Cref{lem:qlsp-walk-constants} to $\widehat\Lambda$.
With $J_1$, $g$, $\rho_1$, $\rho_2$, and $c_i$ from \Cref{lem:qlsp-walk-constants}, every integer $T\ge1$ satisfies
\begin{align*}
 \|(I-P(1))W((T-1)/T)\cdots W(0)P(0)\|
 &\le\frac{J_1+(4c_0\rho_1+16c_1\rho_1^2+6c_2\rho_2)(1+r)^2}T\\
 &\le\frac{K(1+r)^2}T=\frac{K\kappa}T.
\end{align*}
The first uses Jansen--Ruskai--Seiler, monotonicity in derivative norms, and \Cref{thm:unit-steps-dat} with \Cref{lem:qlsp-walk-constants}(3). Part~(2) and $K$'s definition give the second, and $1+r=\sqrt\kappa$ gives equality. Taking $T=\lceil K\kappa/\delta\rceil$ or $\lceil2K\kappa/\delta\rceil$ bounds leakage by $\delta$ or $\delta/2$, respectively.

Finally, control preparation and postselection give
$(I+iE(f(0)))/2=F_0/\sqrt2$ and $(I-iE(f(1)))/2=F_1^\dagger/\sqrt2$.
Since $F_s$ is unitary, each succeeds with probability $1/2$, hence $1/4$ jointly.
By the definition of $P(s)$, $f(0)=0$, and $f(1)=1$, $F_0L(b,0)$ spans $\operatorname{ran}P(0)$ and, with $x=A^{-1}b/\|A^{-1}b\|$, $F_1L(x,0)$ spans $\operatorname{ran}P(1)$. Let $\psi$ be the walk product applied to $F_0L(b,0)$, so the normalized successful joint output is $F_1^\dagger\psi$, and put $\chi=F_1^\dagger P(1)\psi$.
Since $F_1$ is unitary and $\delta\le1$, the leakage bound $\delta/2$ of $\lceil2K\kappa/\delta\rceil$ steps gives
$\|F_1^\dagger\psi-\chi\|=\|(I-P(1))\psi\|\le\delta/2<1$.
The reverse triangle inequality gives $\chi\ne0$ and
$|1-\|\chi\||\le\|F_1^\dagger\psi-\chi\|$, so homogeneity and the triangle inequality give
\[
 \left\|F_1^\dagger\psi-\frac\chi{\|\chi\|}\right\|
 \le\|F_1^\dagger\psi-\chi\|+|1-\|\chi\||\le\delta.
\]
Since $P(1)\psi$ lies in the line through $F_1L(x,0)$, $\chi/\|\chi\|$ is $L(x,0)$ up to phase.
Using $\lceil y\rceil\le y+1$ and $\kappa/\delta\ge1$ gives $\lceil2K\kappa/\delta\rceil+2\le2K\kappa/\delta+3\le(2K+3)\kappa/\delta$.
\end{proof}

We now verify the three circuit facts the proof used.

\begin{lemma}\label{lem:qlsp-walk-circuit}
Let $O$, $V$, $A$, and $b$ be as in \Cref{cor:qlsp-walk}, and use the walk construction of \Cref{def:qlsp-walk-construction}. Write $E_{1,0},E_{0,1}$ for $E$ with $R$ replaced by $R_{1,0},R_{0,1}$, respectively. Then, for every real
$f$ and, along the schedule, every $s$ at which $s\mapsto f(s)$ is twice differentiable:
\begin{enumerate}
\item $E$ is a Hermitian involution and $L^\dagger EL=H(f)/(2\sqrt2\,\alpha(f))$.
\item $E=((1-f)E_{1,0}+fE_{0,1})/\alpha$ with $E_{1,0}$ and $E_{0,1}$ Hermitian involutions,
and, with $\alpha=\alpha(f(s))$, $\|W'(s)\|\le|f'(s)|/\alpha^2$ and
$\|W''(s)\|\le(|f''(s)|+\tfrac4{\sqrt3}|f'(s)|^2)/\alpha^2$.
\item $ZL=L\operatorname{diag}(I,Q_b)$, and $Z$ commutes with $E$ and with $W(s)$.
\end{enumerate}
\end{lemma}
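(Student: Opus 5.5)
The plan is to verify the three parts by direct block computation. Part~(2) rests on one structural observation: $E$ depends linearly on $R$, and every unit combination of $R_{1,0},R_{0,1}$ is again a reflection.

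\textbf{Part (1).} First, $\Pi_1$, $X=\left(\begin{smallmatrix}0&1\\1&0\end{smallmatrix}\right)$ and $\operatorname{diag}(I,O,I,O)$ are Hermitian involutions, the last because $O$ is.
\begin{itemize}
\item $\Pi$ is a Hermitian involution.
\item $\Pi_Q$ is one because $Q_0$ is an orthogonal projector, so $Q_0(I-Q_0)=0$.
\item $R$ is one because $R_{1,0}$ and $R_{0,1}$ are anticommuting Hermitian involutions, so $R^2=((1-f)^2+f^2)I/\alpha^2=I$.
\end{itemize}
Hence $G$ is unitary, $G(X\otimes\operatorname{diag}(I,O,I,O))G^\dagger$ is a Hermitian involution, and so is its tensor product with $\Pi_1$.

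For the compression, write $G=\operatorname{diag}(G_a,G_b)$ with $G_a=\operatorname{diag}(R,R)$ and $G_b=\Pi_Q\operatorname{diag}(\Pi,\Pi)$. The bracket in $E$ is then off-diagonal in qubit~2, with blocks $G_aDG_b^\dagger$ and its adjoint, where $D=\operatorname{diag}(I,O,I,O)$. Applying $L^\dagger\cdot L$ does three things:
\begin{itemize}
\item qubit~1 contributes $\langle0|\Pi_1|0\rangle=\tfrac12$;
\item qubit~2 keeps the off-diagonal structure matching the two blocks of $H(f)$;
\item qubits~3,4 sit in $|00\rangle$, sandwiched by $V$.
\end{itemize}
The first column of $R$ is $((1-f),f)/\alpha$ on qubit~4, so $D$ contributes $(1-f)I+fO$. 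The $\Pi$ column contributes $1/\sqrt2$ on each qubit-4 branch, and $\Pi_Q$ restricted to qubit~3 $=|0\rangle$ contributes $Q_0=VQ_bV^\dagger$. Collecting terms gives
\[
 V^\dagger\bigl((1-f)I+fO\bigr)VQ_b/(2\sqrt2\,\alpha)=((1-f)+fA)Q_b/(2\sqrt2\,\alpha),
\]
using $V^\dagger V=I$ and $A=V^\dagger OV$. Keeping track of which qubit-4 components survive is the only bookkeeping here.

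\textbf{Part (2).} $R$ enters $G$ only through $G_a$, and $G_a$ appears exactly once in each off-diagonal block, so $E$ is linear in $R$. This gives $E=((1-f)E_{1,0}+fE_{0,1})/\alpha$, and $E_{1,0},E_{0,1}$ are Hermitian involutions by part~(1) applied at $f=0,1$.

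Parametrize by $E_\theta=\cos\theta\,E_{1,0}+\sin\theta\,E_{0,1}$ with $\tan\theta=f/(1-f)$. Every unit combination of $R_{1,0},R_{0,1}$ is a reflection, so part~(1) makes every $E_\theta$ an involution. Expanding $E_\theta^2=I$ then forces $E_{1,0}E_{0,1}+E_{0,1}E_{1,0}=0$. Consequently
\[
 \|aE_\theta+bE_{\theta+\pi/2}\|=\sqrt{a^2+b^2}.
\]
Next, $d\theta/df=1/\alpha^2$. Differentiating along the schedule gives
\[
 W'=(2LL^\dagger-I)\,\dot\theta\, E_{\theta+\pi/2},\qquad \dot\theta=f'/\alpha^2,
\]
and $2LL^\dagger-I$ is unitary, so $\|W'\|=|f'|/\alpha^2$. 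Differentiating again gives
\[
 W''=(2LL^\dagger-I)\bigl(\ddot\theta E_{\theta+\pi/2}-\dot\theta^2E_\theta\bigr),\qquad
 \ddot\theta=\frac{f''}{\alpha^2}-\frac{(4f-2)f'^2}{\alpha^4}.
\]
Separate the $f''$ term by the triangle inequality and combine the remaining two $f'^2$ terms orthogonally. This leaves
\[
 \frac{f'^2}{\alpha^4}\sqrt{(4f-2)^2+1}.
\]
With $u=2f-1$ and $\alpha^2=(1+u^2)/2$, the bound $2\sqrt{4u^2+1}/(1+u^2)\le4/\sqrt3$ is one-variable calculus, with the maximum at $u^2=1/2$.

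The main risk is this second-derivative constant. Bounding the two terms by the triangle inequality alone fails, since it gives $2>4/\sqrt3$ at $f=1/2$. The anticommutation identity, and the orthogonal combination it permits, is therefore the essential step.

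\textbf{Part (3).} $Z=I_2\otimes\operatorname{diag}(I,0,Q_0,I-Q_0)$ acts on qubits~2,3. On $L$, qubit~3 is $|0\rangle$, so the $Q_0$ and $I-Q_0$ blocks are not reached: qubit~2 $=0$ gives $I$, and qubit~2 $=1$ gives $Q_0$ on $V$. Using $Q_0V=VQ_b$, this yields $ZL=L\operatorname{diag}(I,Q_b)$.

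For the commutation with $E$, check the two halves of $G$ separately.
\begin{itemize}
\item The $G_a$ half acts only on qubit~4, so it commutes with the qubit-3 block pattern $\operatorname{diag}(I,0)$.
\item The $G_b$ half has $\operatorname{diag}(\Pi,\Pi)$ acting on qubit~4, which commutes with $\operatorname{diag}(Q_0,I-Q_0)$ because these are scalar in qubit~4. It also contains $\Pi_Q$, which swaps $Q_0$ and $I-Q_0$ across qubit~3; this maps the pattern $\operatorname{diag}(Q_0,I-Q_0)$ to $\operatorname{diag}(Q_0,Q_0)$ after conjugation.
\item $D$ commutes with $Q_0$ because $O$ preserves $\operatorname{codom}V$ blocks only through $V$. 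This point needs a careful check, and if needed one uses that $Z$ is compared only after the $X$-swap of qubit~2.
\end{itemize}
Since $LL^\dagger$ commutes with $Z$ by the identity $ZL=L\operatorname{diag}(I,Q_b)$ and its adjoint, $W$ also commutes with $Z$.
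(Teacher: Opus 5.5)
Your parts (1) and (2) are correct and follow the paper's computation closely. Your $E_{\theta+\pi/2}$ is the paper's $X=-vE_{1,0}+uE_{0,1}$, and your inequality $2\sqrt{4u^2+1}/(1+u^2)\le4/\sqrt3$ is the paper's $(4\alpha^2-3)^2\ge0$. The identity $ZL=L\operatorname{diag}(I,Q_b)$ and the step from $[Z,E]=0$ to $[Z,W(s)]=0$ are also fine. One aside is wrong but harmless. At $f=1/2$ the cross term vanishes, so the triangle inequality gives $2<4/\sqrt3$ there. It overshoots only near $|2f-1|=(\sqrt5-1)/2$, where it gives $1+\sqrt5$.

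The genuine gap is the commutation of $Z$ with $E$, where three of your claims fail.
First, your second bullet says $\operatorname{diag}(\Pi,\Pi)$ commutes with $\operatorname{diag}(Q_0,I-Q_0)$ because these are scalar in qubit~4. But $Q_0=\operatorname{diag}(VQ_bV^\dagger,0)$ is not scalar in qubit~4, and in general it does not commute with the Hadamard $\Pi$.
Second, conjugating $\operatorname{diag}(Q_0,I-Q_0)$ by $\Pi_Q$ gives $\operatorname{diag}(I,0)$, not $\operatorname{diag}(Q_0,Q_0)$.
Third, your last bullet is left open, and its reason fails, since $O$ need not commute with $VQ_bV^\dagger$. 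In fact $D$ does commute with $Q_0$, but only because $Q_0$ vanishes on the qubit-4 block where $O$ acts, which is exactly the structure your second bullet denies.

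More basically, the bullets never connect the two qubit-2 halves $Z_1=\operatorname{diag}(I,0)$ and $Z_2=\operatorname{diag}(Q_0,I-Q_0)$ of $Z$, which your swap $X$ on qubit~2 exchanges. The missing step is the intertwining relation $Z_1\widetilde B=\widetilde BZ_2$. Here $\widetilde B=G_aDG_b^\dagger=\operatorname{diag}(B,B)\Pi_Q$ is the off-diagonal block of the bracket, with $B=R\operatorname{diag}(I,O)\Pi$. The relation follows from two facts:
\begin{enumerate}
\item $Z_1\Pi_Q=\Pi_QZ_2$, which uses only $Q_0(I-Q_0)=0$;
\item $[\operatorname{diag}(B,B),Z_1]=0$, since $Z_1$ is scalar within each qubit-3 block.
\end{enumerate}
Its adjoint $Z_2\widetilde B^\dagger=\widetilde B^\dagger Z_1$ then makes $\operatorname{diag}(Z_1,Z_2)$ commute with the bracket. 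Hence $Z=I_2\otimes\operatorname{diag}(Z_1,Z_2)$ commutes with $E$, which is $\Pi_1$ tensored with the bracket.

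In your factor-by-factor language, this says $G^\dagger\operatorname{diag}(Z_1,Z_2)G=\operatorname{diag}(Z_1,Z_1)$. That operator commutes with $X\otimes D$ because $X$ and $D$ both act trivially on qubit~3. The order matters: $\Pi_Q$ must first turn $Z_2$ into $Z_1$, which is scalar in qubit~4, before $\operatorname{diag}(\Pi,\Pi)$ can pass through.
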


\begin{proof}
Put $B=R\operatorname{diag}(I,O)\Pi$ and $\widetilde B=\operatorname{diag}(B,B)\Pi_Q$, with
$\Sigma=\left(\begin{smallmatrix}0&\widetilde B\\\widetilde B^\dagger&0\end{smallmatrix}\right)$. Since $\Pi$, $\Pi_Q$, and $\operatorname{diag}(I,O)$ are Hermitian,
block multiplication gives $E=\Pi_1\otimes\Sigma$.

The matrices $R$, $\operatorname{diag}(I,O)$, $\Pi$, $\Pi_1$, and $\Pi_Q$ are Hermitian involutions.
For $R$, this follows from $R_{1,0}R_{0,1}+R_{0,1}R_{1,0}=0$ and $((1-f)/\alpha)^2+(f/\alpha)^2=1$, and for $\Pi_Q$, from $Q_0$ being an
orthogonal projector. Hence, $B$ and $\widetilde B$ are unitary, and $\Sigma$ and $E$ are Hermitian involutions. Put $V_1=|0\rangle\otimes V$ and
$V_2=|0\rangle\otimes V_1$, so that $L=|0\rangle\otimes I_2\otimes V_2$, $V_1^\dagger V_1=I$,
and $V_1Q_bV_1^\dagger=Q_0$. Since $\Pi_QV_2=(Q_0V_1,(I-Q_0)V_1)$,
$V_2^\dagger\widetilde BV_2=V_1^\dagger BQ_0V_1=V_1^\dagger BV_1\,Q_b$.
Moreover, $\Pi V_1=(V,V)/\sqrt2$ gives
$\operatorname{diag}(I,O)\Pi V_1=(V,OV)/\sqrt2$, so the first block row of $R$, $V^\dagger V=I$, and $V^\dagger OV=A$ give
$V_1^\dagger BV_1=((1-f)I+fA)/(\sqrt2\,\alpha)$. Since $\langle0|\Pi_1|0\rangle=1/2$ and $(1-f)I+fA$ is Hermitian,
\begin{align*}
 L^\dagger EL
 &=\tfrac12\begin{pmatrix}0&V_1^\dagger BV_1Q_b\\Q_bV_1^\dagger B^\dagger V_1&0\end{pmatrix}\\
 &=\frac1{2\sqrt2\,\alpha}\begin{pmatrix}0&((1-f)I+fA)Q_b\\Q_b((1-f)I+fA)&0\end{pmatrix}
 =\frac{H(f)}{2\sqrt2\,\alpha},
\end{align*}
by definition of $H(f)$.

The matrices $B$, $\widetilde B$, $\Sigma$, and $E$ depend linearly on $R$, giving the identity for $E$ in (2). At $f=0,1$, respectively, $E=E_{1,0},E_{0,1}$, so both are Hermitian involutions, and they anticommute because $E=(E_{1,0}+E_{0,1})/\sqrt2$ at $f=1/2$ is an
involution, so $I=E^2=I+\tfrac12(E_{1,0}E_{0,1}+E_{0,1}E_{1,0})$. Put
$u=(1-f)/\alpha$, $v=f/\alpha$, so $E=uE_{1,0}+vE_{0,1}$, and
$X=-vE_{1,0}+uE_{0,1}$. Since $u^2+v^2=1$, $E^2=X^2=I$ and $EX+XE=0$.
Thus, for real $x,y$, $(xE+yX)^2=(x^2+y^2)I$, and $\|M\|^2=\|M^2\|$ for Hermitian $M$ gives
$\|xE+yX\|=\sqrt{x^2+y^2}$. Differentiating in $f$ gives $du/df=-f/\alpha^3=-v/\alpha^2$, $dv/df=(1-f)/\alpha^3=u/\alpha^2$, and
$d\alpha^2/df=2(2f-1)$, so
\[
 \frac{dE}{df}=\frac{X}{\alpha^2},\qquad
 \frac{dX}{df}=-\frac{E}{\alpha^2},\qquad
 \frac{d^2E}{df^2}=-\frac{2(2f-1)}{\alpha^4}X-\frac1{\alpha^4}E .
\]
By the norm identity, $\|dE/df\|=\alpha^{-2}$ and, with $(2f-1)^2=2\alpha^2-1$,
$\|d^2E/df^2\|^2=(8\alpha^2-3)/\alpha^8\le16/(3\alpha^4)$, because
$16\alpha^4-24\alpha^2+9=(4\alpha^2-3)^2$. Since $2LL^\dagger-I$ is a reflection,
$\|\frac{d}{df}W\|\le\alpha^{-2}$ and $\|\frac{d^2}{df^2}W\|\le\tfrac4{\sqrt3}\alpha^{-2}$, and the chain rule $W'=f'\frac{dW}{df}$ and $W''=f''\frac{dW}{df}+f'^2\frac{d^2W}{df^2}$ gives
the second claim.

Finally, put $Z_1=\operatorname{diag}(I,0)$ and $Z_2=\operatorname{diag}(Q_0,I-Q_0)$, so
$I_2\otimes\operatorname{diag}(Z_1,Z_2)$ is $Z$ in the block order of $\Sigma$.
Since $Q_0(I-Q_0)=0$,
\[
 Z_1\widetilde B=\begin{pmatrix}BQ_0&B(I-Q_0)\\0&0\end{pmatrix}=\widetilde BZ_2,
\]
and taking adjoints gives $Z_2\widetilde B^\dagger=\widetilde B^\dagger Z_1$, so
$\operatorname{diag}(Z_1,Z_2)$ commutes with $\Sigma$ and $Z$ commutes with $E=\Pi_1\otimes\Sigma$.
Since $Z_1V_2=V_2$ and $Z_2V_2=(Q_0V_1,0)=(V_1Q_b,0)=V_2Q_b$,
$ZL=L\operatorname{diag}(I,Q_b)$. Since $\operatorname{diag}(I,Q_b)$ is Hermitian,
$ZLL^\dagger=L\operatorname{diag}(I,Q_b)L^\dagger=(ZLL^\dagger)^\dagger=LL^\dagger Z$, so $Z$
commutes with $2LL^\dagger-I$, hence with $W(s)$.
\end{proof}

Finally, we bound the constants using AQC($3/2$), which makes the integrands of $J_1$ constant up to $1/\alpha\le\sqrt2$, as in \Cref{sec:unit-jrs}.

\begin{lemma}\label{lem:qlsp-walk-constants}
Let $\kappa>1$, let $r$, $f$, and $\alpha=\alpha(f(s))$ be as in \Cref{cor:qlsp-walk,def:qlsp-walk-construction}, and,
with $a=1+r=\sqrt\kappa$ and $\omega=2(1+r)^2/(r+2)$, put
\[
 p(s)=\frac{3a\omega}{(1+rs)^4},\qquad
 g(s)=\frac{\beta}{\alpha(1+rs)^2},\qquad \beta=\frac1{2\sqrt2}.
\]
Evaluate the Bernoulli majorant derivative $\majB'$ (\Cref{def:exponential-operators}) and unit-step coefficients $c_0$, $c_1$, and $c_2$ (\Cref{def:unit-coefficients}) at $5\pi/3$, and set
\[
 \rho_1=\frac{5\pi}3,\qquad \rho_2=\frac{5\pi}9\Bigl(7+\frac8{\sqrt3}+8\majB'\Bigr),\qquad
 K=\frac{4\rho_1+6\rho_2}{\beta^2}+\frac{28\sqrt2\,\rho_1^2}{\beta^3}
 +4c_0\rho_1+16c_1\rho_1^2+6c_2\rho_2 .
\]
Let $\Lambda$ be a $C^2$ Hermitian path on an open interval containing $[0,1]$ with
\[
 \|\Lambda'\|\le\frac{\rho_1|f'|}{\alpha^2},\qquad
 \|\Lambda''\|\le\frac{\rho_1(|f''|+\tfrac4{\sqrt3}|f'|^2)}{\alpha^2}
 +\frac{\rho_1(1+2\majB')|f'|^2}{\alpha^4} .
\]
\begin{enumerate}
\item On $[0,1]$, $\|\Lambda'\|\le\rho_1f'/\alpha^2\le4\rho_1a$ and
$\|\Lambda''\|\le\rho_2p/\alpha^2\le12\rho_2a^2$.
\item The Jansen--Ruskai--Seiler coefficient $J_1$ \eqref{eq:jrs-coefficient} of the envelopes $\rho_1f'/\alpha^2$,
$\rho_2p/\alpha^2$, and the gap $g$
is at most $\bigl((4\rho_1+6\rho_2)/\beta^2+28\sqrt2\,\rho_1^2/\beta^3\bigr)a^2$.
\item If $\Lambda(s)$ has spectral width at most $5\pi/3$ on $[0,1]$, then, for every integer $T\ge1$, with $\evol_\Lambda$ the evolution of $\Lambda(t/T)$,
\[
 \bigl\|e^{-i\Lambda((T-1)/T)}\cdots e^{-i\Lambda(0)}-e^{i\Lambda(1)/2}\evol_\Lambda(T)e^{-i\Lambda(0)/2}\bigr\|
 \le\frac{(4c_0\rho_1+16c_1\rho_1^2+6c_2\rho_2)a^2}{T}.
\]
Adding $J_1/T$ from (2) gives at most $Ka^2/T$.
\item At width $5\pi/3$,
$\majB'\le6.6027$, $c_0\le0.6538$, $c_1\le0.3577$, $c_2\le0.4872$, and
$K<3.1\cdot10^4$.
\end{enumerate}
\end{lemma}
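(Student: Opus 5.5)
The plan is to reduce everything to explicit one-variable inequalities along the AQC($3/2$) schedule, using the closed forms from the proof of \Cref{cor:qlsp-unit-steps}. With $\lip=\omega$ there, we have $g_{\mathrm{AL}}=(1+rs)^{-2}$, $f'=\omega(1+rs)^{-3}$ and $|f''|=3r\omega(1+rs)^{-4}$, and also $\omega\le 2a$.

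\textbf{Part (1).} Since $\alpha\in[1/\sqrt2,1]$, we have $1/\alpha^2\le2$. The first bound in (1) then follows from $f'\le\omega\le2a$, which gives $\|\Lambda'\|\le\rho_1f'/\alpha^2\le4\rho_1a$. For the second derivative, bound each term of the hypothesis by a multiple of $p/\alpha^2$. We have $|f''|\le 3r\omega(1+rs)^{-4}\le p/\ldots$ in the sense $|f''|\le p$, because $r\le a$. For the square, $|f'|^2=\omega^2(1+rs)^{-6}\le 2a\omega(1+rs)^{-4}=\tfrac23 p$. The $1/\alpha^4$ term is absorbed with one extra factor $1/\alpha^2\le2$. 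Collecting these gives
\[
\|\Lambda''\|\le\frac{\rho_1}{\alpha^2}\,p\Bigl(1+\tfrac23\cdot\tfrac4{\sqrt3}+\tfrac43(1+2\majB')\Bigr)
=\frac{5\pi}{9}\Bigl(7+\tfrac8{\sqrt3}+8\majB'\Bigr)\frac{p}{\alpha^2}=\frac{\rho_2p}{\alpha^2},
\]
which matches $\rho_2$. The final step is $p\le 3a\omega\le 6a^2$ at $s=0$, so with $1/\alpha^2\le2$ the bound is $12\rho_2a^2$.

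\textbf{Part (2).} Substitute the envelopes into \eqref{eq:jrs-coefficient} with $m=1$, using $g^2=\beta^2/(\alpha^2(1+rs)^4)$. The boundary terms are
\[
\frac{\rho_1f'}{\alpha^2g^2}=\frac{\rho_1\omega}{\beta^2}(1+rs).
\]
At $s=0,1$ these sum to at most $\rho_1\omega(2+r)/\beta^2=2\rho_1a^2/\beta^2$, which is within the $4\rho_1a^2/\beta^2$ allowed. The first integrand is $\rho_2p/(\alpha^2g^2)=3a\omega\rho_2/\beta^2\le6\rho_2a^2/\beta^2$. The second integrand is
\[
7\,\frac{\rho_1^2f'^2/\alpha^4}{g^3}=\frac{7\rho_1^2\omega^2}{\alpha\beta^3}\le\frac{28\sqrt2\,\rho_1^2a^2}{\beta^3},
\]
using $1/\alpha\le\sqrt2$. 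Adding the three contributions gives (2).

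\textbf{Part (3).} Apply \Cref{thm:unit-steps} at width $5\pi/3<2\pi$ with $c_i=c_i(5\pi/3)$. By part (1) we may take $\lip=4\rho_1a$, and we take $\taylor=6\rho_2a^2$, since $\|\Lambda''\|\le 2\taylor$. This requires $\Lambda'$ to be $2\taylor$-Lipschitz, which follows by integrating $\Lambda''$ as in \Cref{cor:unit-steps-jrs}. Because $a\ge1$, the resulting error
\[
\frac{4c_0\rho_1a+16c_1\rho_1^2a^2+6c_2\rho_2a^2}{T}\le\frac{(4c_0\rho_1+16c_1\rho_1^2+6c_2\rho_2)a^2}{T}.
\]
Adding $J_1/T$ from (2) gives $Ka^2/T$.

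\textbf{Part (4).} This is numerical, and I expect it to be the main obstacle. It needs a closed form for $\majB$. Since $z/(e^z-1)+z/2=(z/2)\coth(z/2)$, the Bernoulli numbers satisfy $|B_{2m}|=(-1)^{m+1}B_{2m}$ and $|B_1|=\tfrac12$. Hence
\[
\majB(x)=1+\tfrac x2+\sum_{m\ge1}\frac{|B_{2m}|}{(2m)!}x^{2m}=2+\frac x2-\frac x2\cot\frac x2.
\]
Differentiating gives
\[
\majB'(x)=\frac12-\frac12\cot\frac x2+\frac{x}{4\sin^2(x/2)},
\]
and at $x=5\pi/3$, with $\cot(5\pi/6)=-\sqrt3$ and $\sin^2(5\pi/6)=\tfrac14$, this equals $\tfrac12+\tfrac{\sqrt3}{2}+\tfrac{5\pi}{3}\approx6.6027$. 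Likewise $\majN$ has the closed form from \Cref{def:exponential-operators}; I would evaluate $\majN$ and $\majN'$ at $5\pi/3$ directly rather than through \eqref{eq:frame-majorants}, and then form $c_i$ from \eqref{eq:unit-coefficients}. The final bound on $K$ is arithmetic with $\beta=1/(2\sqrt2)$, and the $\rho_1^2/\beta^3$ term, roughly $2.2\cdot10^4$, dominates. The delicate point is getting rigorous decimal bounds, so I would carry interval estimates for $\pi$ and $\sqrt3$ through each step.
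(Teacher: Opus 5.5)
Your proposal is correct and follows the paper's proof essentially step by step. It uses the same absorption $\alpha^{-4}\le2\alpha^{-2}$ with $|f''|\le p$ and $f'^2\le\tfrac23p$ to produce $\rho_2$, and the same substitution of the envelopes into $J_1$; your exact boundary sum $\rho_1\omega(2+r)/\beta^2=2\rho_1a^2/\beta^2$ is even sharper than the paper's $4\rho_1a^2/\beta^2$. It then applies the width-$5\pi/3$ case of \Cref{thm:unit-steps} in the same way and evaluates the closed forms at $5\pi/3$ directly, where the paper routes $\majN,\majN'$ through $(\majB-1)/w$, which is equivalent.

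One numerical correction: the dominant term is $28\sqrt2\,\rho_1^2/\beta^3=896\rho_1^2\approx2.46\cdot10^4$, not $2.2\cdot10^4$. So $K\approx3.06\cdot10^4$, and the margin below $3.1\cdot10^4$ is only about one percent, which is why your interval estimates are genuinely needed.
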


\begin{proof}
Differentiating $f$ gives $f'=\omega(1+rs)^{-3}$ and
$|f''|=3r\omega(1+rs)^{-4}\le p$. The bounds $1\le a\le\omega=2a^2/(a+1)\le2a$ give $f'^2/p=\omega/[3a(1+rs)^2]\le2/3$ for $s\ge0$, and $\alpha^2=2(f-1/2)^2+1/2\ge1/2$. Hence, $\|\Lambda'\|\le\rho_1f'/\alpha^2\le2\rho_1f'$ and, with
$\alpha^{-4}\le2\alpha^{-2}$,
\[
 \|\Lambda''\|\le\frac{\rho_1}{\alpha^2}\Bigl(|f''|+\Bigl(\frac4{\sqrt3}+2+4\majB'\Bigr)f'^2\Bigr)
 \le\frac{\rho_1p}{\alpha^2}\Bigl(1+\frac23\Bigl(\frac4{\sqrt3}+2+4\majB'\Bigr)\Bigr)
 =\frac{\rho_2p}{\alpha^2}\le2\rho_2p
\]
for $s\in[0,1]$, where $f'$ and $p$ decrease from $f'(0)=\omega\le2a$ and $p(0)=3a\omega\le6a^2$.

For $J_1$,
$\alpha^2g^2=\beta^2(1+rs)^{-4}$, $f'/(\alpha^2g^2)=\omega(1+rs)/\beta^2$, $p/(\alpha^2g^2)=3a\omega/\beta^2$, and $f'^2/(\alpha^4g^3)=\omega^2/(\beta^3\alpha)$. Integrating over $[0,1]$ gives
\begin{align*}
 J_1&=\frac{\rho_1f'(0)}{\alpha(f(0))^2g(0)^2}+\frac{\rho_1f'(1)}{\alpha(f(1))^2g(1)^2}
 +\int_0^1\left(\frac{\rho_2p}{\alpha^2g^2}+\frac{7\rho_1^2f'^2}{\alpha^4g^3}\right)\,ds\\
 &=\frac{\rho_1\omega(2+r)+3\rho_2a\omega}{\beta^2}+\frac{7\rho_1^2\omega^2}{\beta^3}\int_0^1\frac{ds}\alpha\\
 &\le\Bigl(\frac{4\rho_1+6\rho_2}{\beta^2}+\frac{28\sqrt2\,\rho_1^2}{\beta^3}\Bigr)a^2,
\end{align*}
where the inequality uses $\omega\le2a$, $2+r\le2a$, and $1/\alpha\le\sqrt2$. Substituting the bounds $4\rho_1a$ on $\|\Lambda'\|$ and $6\rho_2a^2$ on $\|\Lambda''\|/2$ into \Cref{thm:unit-steps} and using $a\le a^2$ gives (3).

Finally, $\sum_{k\ge1}|B_{2k}|w^{2k}/(2k)!=1-\tfrac w2\cot\tfrac w2$ gives the closed form
$\majB(w)=2+\tfrac w2-\tfrac w2\cot\tfrac w2$, while $\majN(w)=(1-\tfrac w2\cot\tfrac w2)/w^2$ by definition.
Put $\mathcal C(w)=\tfrac12+w\majN(w)=(\majB(w)-1)/w$, so that $\mathcal C'=\majN+w\majN'$. At
$w=5\pi/3$, where $\cot\tfrac w2=-\sqrt3$ and $\sin^2\tfrac w2=\tfrac14$, evaluation and differentiation give
\[
 \majB'=\frac12+\frac{\sqrt3}2+\frac{5\pi}3,\qquad
 \mathcal C=\frac12+\frac{\sqrt3}2+\frac3{5\pi},\qquad
 \mathcal C'=1-\frac9{25\pi^2}.
\]
The bounds $3.14159<\pi<3.14160$ and $1.73205<\sqrt3<1.73206$ give $\majB'\le6.6027$, $1.55\le\mathcal C\le1.55702$, and
$\mathcal C'\le0.96353$. Substituting these into $\majN=(\mathcal C-1/2)/w$ and $\majN'=(\mathcal C'-\majN)/w$ gives $0.2005\le\majN\le0.2019$ and $\majN'\le0.1458$, and substitution into $c_0=1/4+2\majN$, $c_1=1/64+\majN/4+2\majN'$, and $c_2=1/12+2\majN$ gives the three bounds. The definitions, these bounds, and $1.41421<\sqrt2<1.41422$ give $\rho_1\le5.236$, $\rho_2\le112.5$, and $\beta\ge0.35355$. Each term of $K$ increases with its nonnegative numerator factors and decreases with $\beta$, so
\[
 K\le\frac{4\cdot5.236+6\cdot112.5}{0.35355^2}+\frac{28\cdot1.4143\cdot5.236^2}{0.35355^3}
 +4\cdot0.6538\cdot5.236+16\cdot0.3577\cdot5.236^2+6\cdot0.4872\cdot112.5<3.1\cdot10^4 .
\]
\end{proof}

\subsection{Outlook}\label{app:qlsp-outlook}
The encoding affects the constant $K$ chiefly through the quadratic term $28\sqrt2\,\rho_1^2/\beta^3$ of $J_1$. A wider free arc lowers the inverse bound $(w/2)/\sin(w/2)$ defining $\rho_1$, and retaining more of the gap under compression lowers $1/\beta^3$, both matters of the
encoding rather than of the comparison.

I intend to extend this construction to further examples.
 
\bibliographystyle{amsplain}
\bibliography{references}

\end{document}